\newtheorem{assumption}{Assumption}
\newtheorem{theorem}{Theorem}
\newtheorem{corollary}{Corollary}
\newtheorem{proposition}{Proposition}
\theoremstyle{remark}
\newtheorem*{remark}{Remark}
\DeclareMathOperator*{\argmin}{argmin}
\title{Data-adaptive exposure thresholds for the Horvitz-Thompson estimator of the Average Treatment Effect in experiments with network interference}
\author{%
  Vydhourie Thiyageswaran \\
  Department of Statistics\\
  University of Washington\\
\And 
Tyler McCormick \\
  Department of Statistics\\
  University of Washington\\
  \And
  Jennifer Brennan \\
  Google Research \\
}
\begin{document}

\maketitle

% hyperref makes hyperlinks in the resulting PDF.
% If your build breaks (sometimes temporarily if a hyperlink spans a page)
% please comment out the following usepackage line and replace
% \usepackage{icml2025} with \usepackage[nohyperref]{icml2025} above.

% Attempt to make hyperref and algorithmic work together better:
\newcommand{\theHalgorithm}{\arabic{algorithm}}

% Use the following line for the initial blind version submitted for review:
% \usepackage{icml2025}

% If accepted, instead use the following line for the camera-ready submission:
%%%%%%%%%%%%%%%%%%%%%%%%%%%%%%%%
% THEOREMS
%%%%%%%%%%%%%%%%%%%%%%%%%%%%%%%%

% Todonotes is useful during development; simply uncomment the next line
%    and comment out the line below the next line to turn off comments
%\usepackage[disable,textsize=tiny]{todonotes}

% \icmltitle{Submission and Formatting Instructions for \\
%            International Conference on Machine Learning (ICML 2025)}

\begin{abstract}
    Randomized controlled trials often suffer from interference, a violation of the Stable Unit Treatment Values Assumption (SUTVA) in which a unit's treatment assignment affects the outcomes of its neighbors. This interference causes bias in naive estimators of the average treatment effect (ATE). A popular method to achieve unbiasedness is to pair the Horvitz-Thompson estimator of the ATE with a known exposure mapping: a function that identifies which units in a given randomization are not subject to interference. For example, an exposure mapping can specify that any unit with at least $h$-fraction of its neighbors having the same treatment status does not experience interference. However, this threshold $h$ is difficult to elicit from domain experts, and a misspecified threshold can induce bias. In this work, we propose a data-adaptive method to select the "$h$"-fraction threshold that minimizes the mean squared error of the Hortvitz-Thompson estimator. Our method estimates the bias and variance of the Horvitz-Thompson estimator under different thresholds using a linear dose-response model of the potential outcomes. We present simulations illustrating that our method improves upon non-adaptive choices of the threshold. We further illustrate the performance of our estimator by running experiments on a publicly-available Amazon product similarity graph. Furthermore, we demonstrate that our method is robust to deviations from the linear potential outcomes model.
\end{abstract}

%\textcolor{red}{v: leverage scores, use pseudo-inverse Laplacian to understand leverage scores (w.r.t. conflict graphs probably)}

\section{Introduction}
Under network interference, where the Stable Unit Treatment Values Assumption (SUTVA) is violated, estimating average treatment effects incorporating direct effects (impact of the treatment on the respondent's outcome) and indirect effects (impact of treated peers on the respondent's outcome) is a central problem in learning causal effects. We study the setting of randomized controlled trials. For example, estimating the adoption of a new idea through random assignment of people to advertisements is complicated by the dissemination of information through social interactions. In settings where user behavior informs a prediction algorithm, the behavior of other users creates spillovers through their impact on the prediction algorithm.  As such discrete interactions can be modeled through network interference models, the problem of estimating average treatment effects under network interference becomes a ubiquitous one. 

A major challenge arises from the lack of knowledge of the exact interference structure that contributes to the indirect effect.  For some treatments, having treated peers might have very little impact (e.g. a drug for a noncommunicable disease that's only available as part of a clinical trial), whereas for others the impact could be extremely large (e.g. vaccines).  \emph{Exposure maps} classify individuals into groups with similar contact patterns with treated peers.  

One simple form of an exposure map classifies individuals as indirectly exposed if more than some fixed fraction $h$ of their network is treated, meaning that the researcher expects contact with a smaller fraction than $h$ to have no impact on the outcome and a fraction larger than $h$ to have a substantial, but relatively consistent impact as the fraction increases above the threshold. Our setting aligns with the fractional thresholds model \citep{watts1999small}, in which both treated and untreated peers influence an individual's outcome, but in opposing directions. \citet{centola2007complex} illustrate this with the example of refraining from littering in a neighborhood: an individual’s decision to avoid littering depends on the relative number of neighbors who also refrain from littering versus those who do not. As the neighborhood size increases, a greater number of non-littering neighbors is required to ``activate" the individual's own decision to refrain from littering. 

% Our setting corresponds to what~\cite{centola2007complex} refer to as \emph{contested contagion.}  That is, settings where \emph{both} treated and untreated peers influence an individual's outcome, but in different directions, and the size of the respondents' network also matters.

% ~\cite{centola2007complex} give an example of cleaning litter from a beach--a person might be willing to pick up litter alone if 10 people use the beach, but the task would seem futile without a group of civically-minded individuals if 1,000 people use the beach.  
% This conception of exposure is particularly salient in studying the adoption of technology, particularly in settings where the technology becomes more useful as more members of the network adopt it, known as the ``threshold of adoption"~\citep{acemoglu2011diffusion,reich2016diffusion}.  Adopting a messenger app, for example, becomes more appealing as a larger fraction of social contacts also use the app.

This notion of exposure is especially relevant in the study of technology adoption, particularly in contexts where a technology spreads only after a sufficient number of individuals in the network have adopted it — reaching what is known as the threshold of adoption~\citep{acemoglu2011diffusion,reich2016diffusion}. For instance, a messenger app may only gain traction once a critical fraction of social contacts start using it, surpassing the adoption threshold necessary for further diffusion.

Given a known $h$-fractional exposure mapping, the Horvitz-Thompson estimator under this criterion is commonly used as an unbiased estimator of the average treatment effect. 
Without domain knowledge of the interference structure, however, an estimator with fixed treatment exposure conditions for this estimand will be high in either bias or variance. One could, for example, resort to using the Horvitz-Thompson estimator at extreme $h$s for unbiasedness, such as when the unit in question, and \textit{all} its neighbors are treated. However, this incurs the cost of a large variance, as the estimator scales inversely with the probability of the unit and all its neighbors being treated. This probability can be very small for high-degree nodes, particularly, in dense graphs and when treatment fractions are small.  Choosing a small threshold, however, means that many individuals who are in reality not impacted indirectly by the treatment will be counted amongst the indirectly exposed, biasing effect estimates. 
This motivates the study of mean-squared-error (MSE)-minimizing exposure threshold selection to combine with existing estimators. 

Our contribution to this setting is to propose a simple data-dependent approach to estimating the average treatment effect optimally in terms of the MSE in the finite population setting. We compute an approximate rate of change of bias and variance with respect to the threshold to inform the MSE-optimal threshold for the Horvitz-Thompson estimator. In particular, we propose using linear regression for an approximate bias-rate of change measure as a simple and intuitive approach. We provide a non-asymptotic concentration bound on the estimated MSE-optimal threshold with respect to the MSE-optimal threshold under best average linear fit. Additionally, this study allows us to understand graph structures under which this approximation is more precise.\subsection{Related Works}

The exposure mapping framework is a common strategy for quantifying spillovers. In \citep{aronow2017estimating, ugander2013graph, sussman2017elements, hardy2019estimating}, and \citep{auerbach2021local}, the authors expounded upon different exposure mappings, and estimation under these settings. \citep{eckles2017design, toulis2013estimation} assume a (weighted) fraction of treated neighbors in characterizing the exposure. We work under a similar setting.
Other common assumptions include exposures described by the raw count numbers of the treated/control neighbors \citep{ugander2013graph}. In \citep{basse2018model, cai2015social}, a  (known) generalized linear form of network effects on neighborhood treatments was assumed. We do not assume this. Related to our work, in \citep{zhu2024integrating}, the authors fit functionals on the treatment assignment and exposure.

In this paper, we focus on the Horvitz-Thompson estimator (and extend to the Difference-in-Means estimator in Appendix \ref{appendix:difference-in-means}). In \citep{chin2019regression}, the author proposes regression adjustment estimators that predict potential outcomes under global treatment and control conditions. Unlike traditional regression adjustments, this approach constructs adjustment variables from functions of the treatment assignment vector, framing the learning of a more flexible ``exposure mapping" as a feature engineering problem.
% The method also incorporates flexible machine learning models to capture nonlinear interference patterns, with bootstrap and resampling techniques suggested for inference to handle dependencies introduced by interference.

Our approach at the estimation stage can loosely be viewed as the ``dual" of choosing cluster sizes in cluster-randomization approaches such as \citep{ugander2013graph, eckles2017design} at the design stage. 
%See also discussions on the analysis stage and design stage, for covariate-balancing, in \citep{li2020rerandomization}.

For off-policy evaluation in reinforcement learning, \citep{su2020adaptive} investigate adaptive MSE-optimal kernel bandwidth selection (a widely studied problem, see for example, \citep{fan1992variable}, \citep{ruppert1997empirical} \citep{kallus2018policy}) using Lepski's method \citep{goldenshluger2011bandwidth}. To the best of our knowledge, a reframing of our problem in terms of optimal bandwidth selection \ref{appendix:bandwidth} has not been studied in the average treatment effect estimation under network interference setting. The authors of \citep{belloni2022neighborhood} have a different, but related, goal of estimating the direct effect under network interference. Their approach was to first estimate interference structure by estimating the $m_i$-hop influence (beyond the immediate graph neighborhood), or exposure radius under assumptions of additivity of the main effects in the potential outcome model.

\subsection{Notation.}
 We use $h$ to denote the threshold for ``treatment exposure", and $1-h$ for ``control exposure." Therefore, for larger $h$, i.e. $h$ closer to one, we have a more restrictive setting, where only the subset of treated (resp. control) units with most of their neighbors treated (resp. control) are counted as treatment (resp. control) exposed. For smaller $h$, i.e. $h$ close to zero, we have a less restrictive setting, as a larger subset of treated (resp. control) units satisfy the threshold condition. Throughout the paper we use $d_i$ for the degree of node $i$, and $d$ when the graph is regular.

\section{Setup}
We study the finite population setting. Additionally, we consider a fractional exposure mapping. That is, for any treatment assignment vector $\mathbf{Z} \in \{0,1\}^n$, the effective influence of the graph's treatment assignment on the outcome of node $i$ is equivalent to the influence of fractional treatment assignments in the neighborhood of node $i$. This reduces our potential outcome model that so that it takes the following form:
\begin{align}
\label{potential outcome}
    y_i(\mathbf{Z}) = y_i(z_i, e_i) =& \alpha + g(z_i) + f(e_i) + \epsilon_i\\ &\equiv \alpha + \psi(z_i, e_i) + \epsilon_i,
\end{align} where $z_i \in \{0,1\}$ is the treatment assignment of unit $i$, $e_i \in [0,1]$ is the fraction of treated neighbors of unit $i$ (not including $i$ itself), and the $\epsilon_i$ are uniformly-bounded and non-random.

% the residuals $\epsilon_i$ are conditionally, given $i$, independent (and independent of $(z_i, e_i)$) sub-gaussian with mean zero, and variance $\sigma^2$.  The $\epsilon_i$s are the model misspecification error.

For instance, we will consider the linear potential-outcome model in some examples:
\begin{align}
\label{potential outcome}
    % y_i(\mathbf{Z}) =  \alpha + \psi_n(z_i, e_i) + \epsilon_i = \alpha + \beta_i z_i + \frac{\gamma_i}{\sqrt{n}} e_i + \epsilon_i.
    y_i(\mathbf{Z}) =  \alpha + \psi(z_i, e_i) + \epsilon_i = \alpha + \beta_i z_i + \gamma_i e_i + \epsilon_i.
\end{align}

% The $\psi_n(\cdot, h)$ here behaves like the $h/\sqrt{n}$ perturbations in local asymptotic theory \cite{van2000asymptotic}, so that the problem does not become trivial asymptotically. This local asymptotic framework was studied in the setting of randomized experiments in \citep{hirano2009asymptotics}, and was used in the context of network interference in \citep{viviano2023causal}. 

We are interested in estimating the average treatment effect (ATE)
\begin{eqnarray}\label{ate}
    \tau &=&\frac{1}{n}\sum_{i=1}^n Y_i(1,1) - \frac{1}{n}\sum_{i=1}^n Y_i(0,0) \\&=& \frac{1}{n}\sum_{i=1}^n (\alpha_i + \beta_i + \gamma_i + \epsilon_i - \alpha_i - \epsilon_i)\\ &=& \frac{1}{n}\sum_{i=1}^n (\beta_i + \gamma_i),
\end{eqnarray}
though we could, more generally, take any difference between exposure categories. 

We will consider the Horvitz-Thompson estimator for a given exposure threshold $h$:
\begin{eqnarray} \label{HT_estimator}
    \hat{\tau}_h &=& \frac{1}{n} \sum_{i=1}^n \frac{\mathbf{1}\{z_i = 1, e_i \geq h \}}{\mathbb{P}\{ z_i = 1, e_i \geq h\}} Y_i\\ && 
    - \frac{1}{n} \sum_{i=1}^n \frac{\mathbf{1}\{z_i = 0, e_i \leq 1-h \}}{\mathbb{P}\{ z_i = 0, e_i \leq 1- h\}} Y_i
\end{eqnarray}

Our goal then is to select the threshold $h$ in the Horvitz-Thompson estimator that minimizes the MSE. More formally, given a candidate set of thresholds $H$, we select
\begin{equation}
    h^* := \argmin_{h \in H} \text{MSE}(\hat{\tau}_h) = \argmin_{h \in H} \text{Bias}(\hat{\tau}_h)^2 + \text{Var}(\hat{\tau}_h).
\end{equation} 

The intuition here is that for stronger interference, i.e. when $f(e_i)$ is larger for a fixed $e_i$, the bias introduced by each edge connecting to a neighbor with a different treatment status is higher. The MSE-optimal Horvitz-Thompson estimator in that setting is expected to be one that incorporates a higher exposure threshold. We illustrate this through a display of a bias-variance trade-off with data from a linear model $Y_i = \alpha + \beta z_i + \gamma e_i + \epsilon_i$ across different thresholds for our adaptive Horvitz-Thompson estimator \ref{HT_estimator}, which we will call AdaThresh, in the top panel of Figure \ref{fig:bias_var_tradeoff}. This trade-off is illustrated for various ratios of $\gamma/\beta$. In the bottom panel of Figure \ref{fig:bias_var_tradeoff}, we illustrate how our approach detects the bias via linear regression for the different $\gamma/\beta$ ratios.
Using these bias estimates along with the exposure distribution, the MSE-optimal threshold, represented by the shaded regions, is selected (matching the top panel).

% Based on this bias estimate, together with the exposure distribution, the MSE-optimal threshold which is represented by the shaded regions, is selected (matching the top panel).

\begin{figure}
    \centering
    \begin{subfigure}[b]{0.49\linewidth}
     \centering     
     \includegraphics[width=\linewidth]{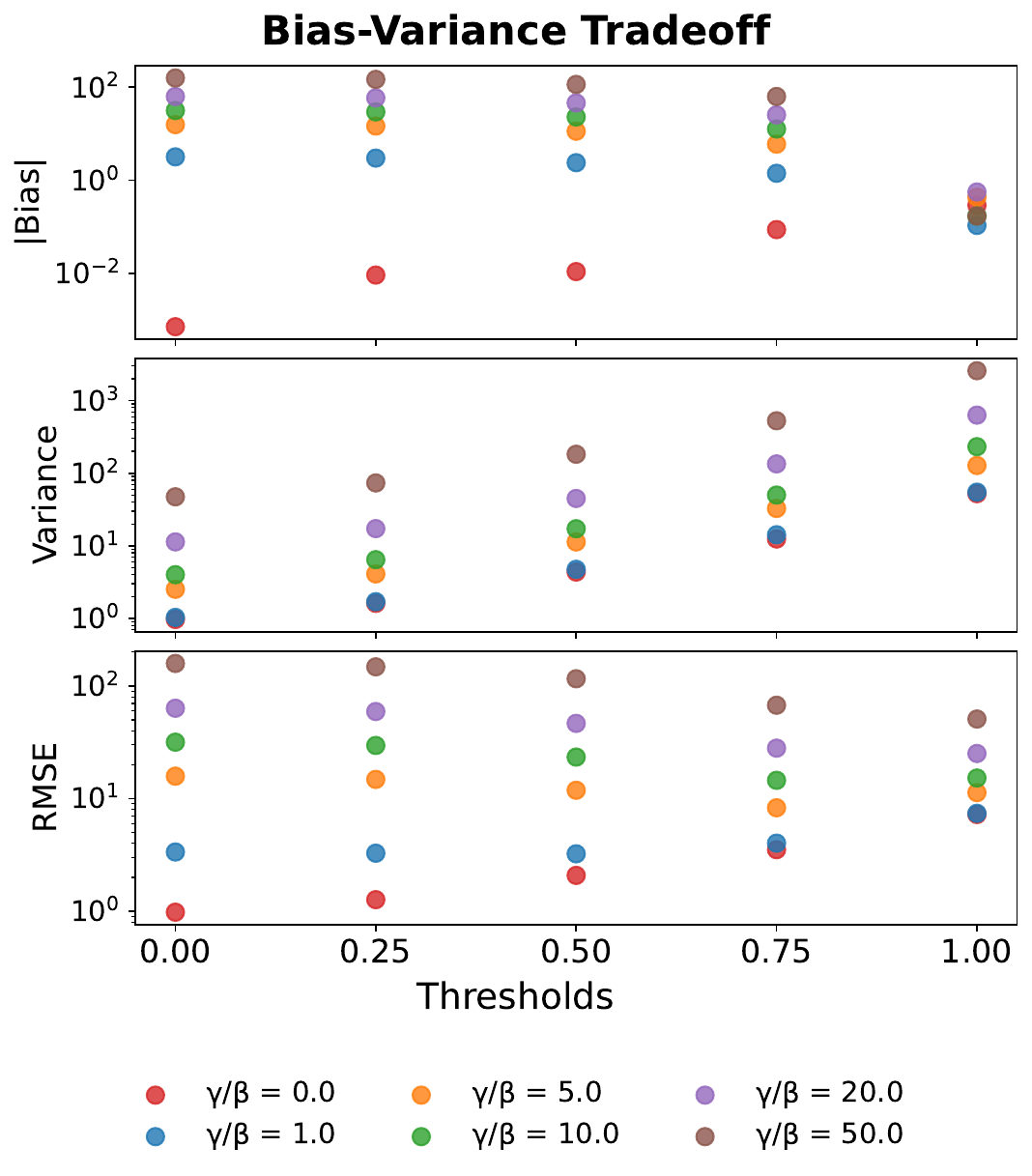}
     \end{subfigure}
    \hfill
     \begin{subfigure}[b]{0.49\linewidth}
    \centering
     \includegraphics[width=\linewidth]{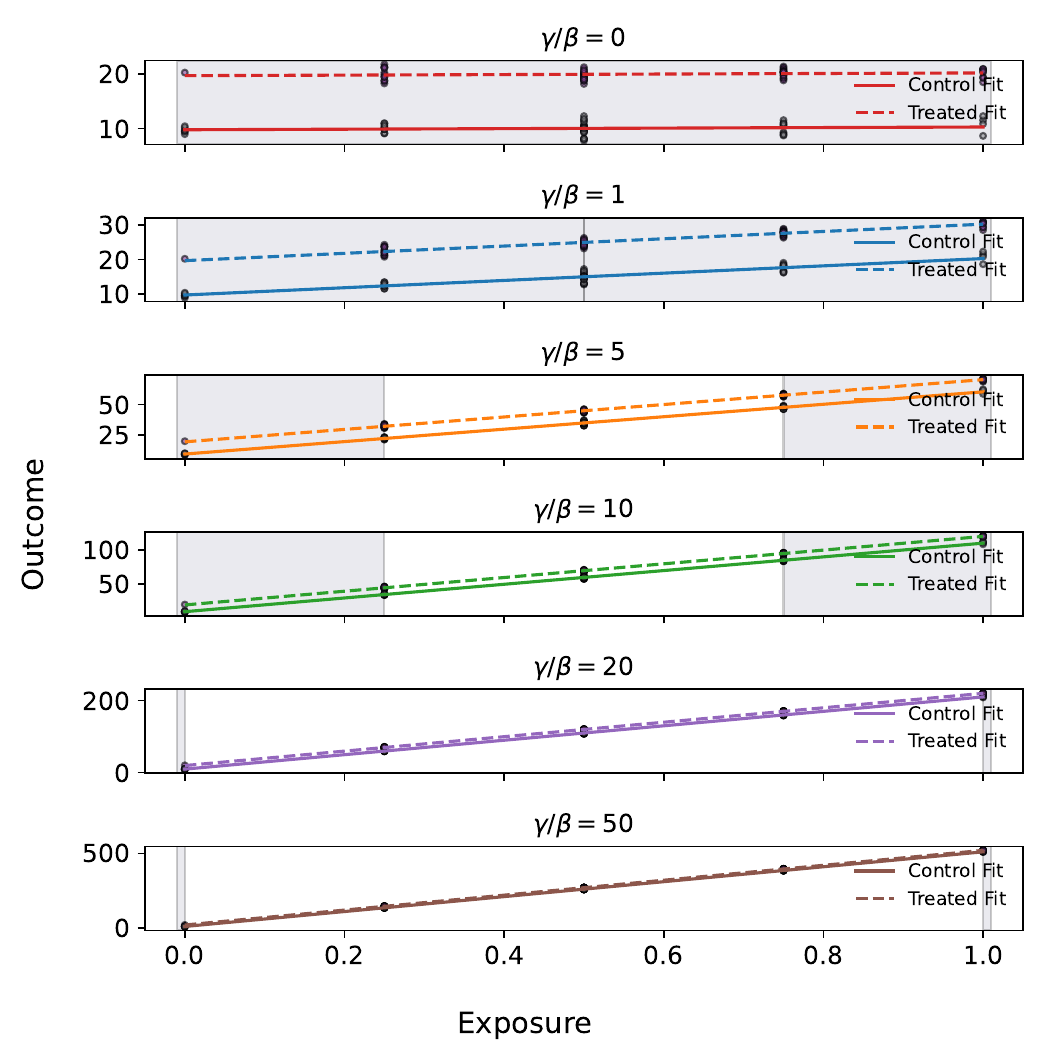}
     \end{subfigure}
     \caption{Left: bias, variance, and RMSE of AdaThresh across different thresholds for the 1000-node 2nd-power cycle graph with unit-level randomization, and linear model outcome $Y_i = 10 + 10 z_i + \gamma e_i +\epsilon_i$, for fixed $\epsilon_i$ generated from $\mathcal{N}(0,1).$  Right: toy illustration of linear fits used to approximate exposure bias caused by including more data across different thresholds. The ``jittered" datapoints signify the amount of variance reduction across thresholds. Rectangular blocks depict the data regions under the MSE-optimal thresholds.}
     \label{fig:bias_var_tradeoff}
\end{figure}

We take the MSE as the objective as we are interested in  precisely and accurately estimating the ATE, trading off between the bias and the variance of the estimator. See \citep[Section 2.2]{deng2024metric} for more discussions on this. 

% We note that the tradeoff is not necessarily monotonic as the bias and variance do not necessarily monotonically decrease, and increasing respectively, due to the dependence structure. Nevertheless, the general trend is decreasing and increasing, respectively.
We note that the tradeoff is not necessarily monotonic, as the bias and variance do not necessarily monotonically decrease and increase, respectively, due to the dependence of the units. However, the overall trend remains decreasing for bias and increasing for variance.

To illustrate how the bias and the variance of the estimator change across different thresholds, we draw upon toy examples from \citep{ugander2013graph} in the following subsection. Figure \ref{fig:Circulant_graphs} shows the different circulant graphs we consider under unit-level Bernoulli randomization and cluster randomization.  We extend this to general graphs in the subsequent sections.

\subsection{Toy examples: Tradeoffs in Circulant graphs} \label{circulant}

We first consider unit-level Bernoulli($p$) randomization in the $k$th-power cycle graphs. We say a graph is a $k$th power-Cycle graph if there exists an edge between each node and $2k$ of its nearest neighbors \citep{ugander2013graph}. We discuss more toy examples in Appendix \ref{circulant_more}. In the next section, we present results for general graphs.

\begin{proposition}[Absolute bias in k-th power cycle graphs under unit-randomization]
\label{bias_unit}
    When $p = 1/2$ and the potential outcome model is simply linear, i.e. $Y_i = \alpha + \beta z_i + \gamma e_i$, the absolute bias of the Horvitz-Thompson estimator for a given threshold $h \equiv l/2k$ for $l=0,2,...,2k$, in the $k$th-power cycle graph, under unit-level randomization, is equal to $\gamma \times \left[ \frac{\sum_{r=l}^{2k}\left( r/k - 1\right)\binom{2k}{r}}{\sum_{r=l}^{2k} \binom{2k}{r}} - 1\right].$
\end{proposition}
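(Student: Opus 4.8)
The plan is to compute the expectation of $\hat\tau_h$ under the Bernoulli($1/2$) randomization and subtract the true ATE, which for the stated model is $\tau = \beta + \gamma$. The central tool is the inverse-probability-weighting identity $\mathbb{E}[\mathbf{1}_A W/\mathbb{P}(A)] = \mathbb{E}[W\mid A]$ for any event $A$ with $\mathbb{P}(A)>0$ and any variable $W$ (here $\mathbb{P}(A)>0$ holds for every $l\le 2k$). First I would substitute $Y_i = \alpha + \beta z_i + \gamma e_i$ into the treated summand of \eqref{HT_estimator}: inside the event $\{z_i=1,\,e_i\ge h\}$ the treatment is pinned to $z_i=1$, so $Y_i = \alpha+\beta+\gamma e_i$, and the identity collapses the weighted term to $\alpha+\beta+\gamma\,\mathbb{E}[e_i\mid z_i=1,\,e_i\ge h]$. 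The same manipulation on the control summand (where $z_i=0$ forces $Y_i=\alpha+\gamma e_i$) gives $\alpha+\gamma\,\mathbb{E}[e_i\mid z_i=0,\,e_i\le 1-h]$. Because the $k$th-power cycle is vertex-transitive, the marginal law of $(z_i,e_i)$ is the same for every node, so each term contributes identically and averaging over $i$ is immediate.

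Subtracting the two summands cancels $\alpha$ and yields $\mathbb{E}[\hat\tau_h] = \beta + \gamma\big(\mathbb{E}[e_i\mid z_i=1,\,e_i\ge h] - \mathbb{E}[e_i\mid z_i=0,\,e_i\le 1-h]\big)$, so the bias equals $\gamma$ times $\big(\mathbb{E}[e_i\mid z_i=1,\,e_i\ge h] - \mathbb{E}[e_i\mid z_i=0,\,e_i\le 1-h] - 1\big)$. The task then reduces to evaluating these two conditional expectations. The key structural fact is that, under Bernoulli($1/2$) assignment on a $k$th-power cycle, the $2k$ neighbors of $i$ are distinct from $i$ and independently treated, so $r_i := 2k\,e_i \sim \mathrm{Binomial}(2k,1/2)$ and is independent of $z_i$; conditioning on $z_i$ therefore drops out of both expectations. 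Using $h=l/2k$, the event $e_i\ge h$ becomes $r_i\ge l$ and $e_i\le 1-h$ becomes $r_i\le 2k-l$.

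Writing $M := \mathbb{E}[e_i\mid e_i\ge h] = \frac{1}{2k}\,\frac{\sum_{r=l}^{2k} r\binom{2k}{r}}{\sum_{r=l}^{2k}\binom{2k}{r}}$, the treated-side expectation is exactly $M$. For the control side I would exploit the symmetry of $\mathrm{Binomial}(2k,1/2)$ about $k$: the substitution $r\mapsto 2k-r$ maps $\{r\le 2k-l\}$ onto $\{r\ge l\}$, preserves the binomial weights, and sends $e_i=r/2k$ to $1-r/2k$, giving $\mathbb{E}[e_i\mid e_i\le 1-h] = 1-M$. Hence the bias is $\gamma\big(M-(1-M)-1\big) = \gamma(2M-2)$. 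Finally, rewriting $2(r/2k)-1 = r/k-1$ inside the numerator shows $2M-1 = \frac{\sum_{r=l}^{2k}(r/k-1)\binom{2k}{r}}{\sum_{r=l}^{2k}\binom{2k}{r}}$, so $2M-2$ is precisely the bracketed expression in the statement; since $M\le 1$, taking absolute values recovers the claimed absolute bias.

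Each step is short, so the main thing to get right is the symmetry argument that collapses the control-side conditional expectation to $1-M$; this is what lets the treated and control contributions combine into a single binomial ratio rather than two separate ones. A secondary point requiring care is the boundary bookkeeping when translating $e_i\ge h$ and $e_i\le 1-h$ into the count-based events $r\ge l$ and $r\le 2k-l$, matching the inclusive inequalities in \eqref{HT_estimator}; the argument then goes through verbatim for each threshold in the stated set.
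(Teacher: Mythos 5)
Your proposal is correct and follows essentially the same route as the paper's proof: both compute $\mathbb{E}[\hat\tau_h]$ exactly by conditioning on the exposure count $2k\,e_i \sim \mathrm{Binomial}(2k,1/2)$ (independent of $z_i$ under unit-level randomization), exploit the $p=1/2$ binomial symmetry $r \mapsto 2k-r$ to merge the treated and control sums into the single $(r/k-1)$ ratio, and subtract the true ATE $\beta+\gamma$. Your write-up via the IPW identity and the explicit reflection argument is a cleaner presentation of the same computation the paper carries out as raw sums over exposure levels.
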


\begin{proposition}[Variance in k-th power cycle graphs under unit-randomization]
\label{var_unit}
    Denote the degree of the nodes by $d (=2k)$. When $p = 1/2$ and the potential outcome model is simply linear, i.e. $Y_i = \alpha + \beta z_i + \gamma e_i$, the variance of the Horvitz-Thompson estimator for a given threshold $h$, in the $k$-th power cycle graph, under unit-level randomization is proportional to 
\begin{align*}
\mathcal{O} \Bigg(\frac{1}{np^{dh}} & \bigg[(\alpha + \beta + \gamma dh )^2 + (\alpha + \beta + \gamma d(1-h))^2 \\
& - 2 \gamma h(1-h)d \bigg]\Bigg).
\end{align*}
    % \begin{align*}
    % \mathcal{O} \left(\frac{1}{np^d}\left[(\alpha + \beta + \frac{\gamma}{\sqrt{n}}dh )^2 + (\alpha + \beta + \frac{\gamma}{\sqrt{n}} d(1-h))^2 - 2 \frac{\gamma}{\sqrt{n}} h(1-h)d \right]\right).
    % \end{align*}
\end{proposition}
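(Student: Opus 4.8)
The plan is to split the estimator into its two Horvitz--Thompson components, $\hat\tau_h = \hat\mu_1 - \hat\mu_0$ with $\hat\mu_1 = \tfrac1n\sum_i \tfrac{\mathbf{1}\{z_i=1,\,e_i\ge h\}}{p_1}Y_i$, $\hat\mu_0 = \tfrac1n\sum_i \tfrac{\mathbf{1}\{z_i=0,\,e_i\le 1-h\}}{p_0}Y_i$, $p_1=\mathbb{P}\{z_i=1,e_i\ge h\}$, $p_0=\mathbb{P}\{z_i=0,e_i\le 1-h\}$, and to read the claimed bracket off the identity
\begin{equation*}
\text{Var}(\hat\tau_h) = \text{Var}(\hat\mu_1) + \text{Var}(\hat\mu_0) - 2\,\text{Cov}(\hat\mu_1,\hat\mu_0).
\end{equation*}
Here the first two terms will give the two squared summands and the last the $-2\gamma h(1-h)d$ cross term, the $-2$ matching the covariance. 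The $k$-th power cycle is vertex-transitive and the assignment is i.i.d.\ Bernoulli($1/2$), so all nodes are exchangeable and every term reduces to $n$ copies of a single representative-node quantity together with covariances indexed by cyclic distance $1,\dots,2k$.

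First I would compute the exposure probabilities. Each node has $d=2k$ neighbors assigned independently, so its treated-neighbor count is $\mathrm{Bin}(d,p)$ and, since $h=l/2k$ makes $dh$ an integer, $p_1 = p\,\mathbb{P}\{\mathrm{Bin}(d,p)\ge dh\}$ with $p_0=p_1$ by the $p=1/2$ symmetry. The leading factor of the tail is $\binom{d}{dh}p^{dh}(1-p)^{d-dh}$, so $1/p_1 \propto p^{-dh}$ up to a constant absorbed into $\mathcal{O}(\cdot)$; this is the source of the $\tfrac{1}{np^{dh}}$ prefactor, and it is the quantity the $\mathcal{O}(\cdot)$ is really tracking.

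Next I would evaluate $\text{Var}(\hat\mu_1)$ and $\text{Var}(\hat\mu_0)$. Each is dominated by its diagonal inverse-probability-weighted second moment $\tfrac{1}{np_1}\mathbb{E}[Y_i^2\mid z_i=1,e_i\ge h]$, and its analogue, because the subtracted squared means are of lower order than the $1/p_1$-scale terms. To leading order the conditional outcome is read at the boundary treated-neighbor count $dh$ for $\hat\mu_1$ and $d(1-h)$ for $\hat\mu_0$; evaluating $Y_i$ at these boundary counts produces exactly the two squared summands in the claim.

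The main obstacle, and the origin of the cross term $-2\gamma h(1-h)d$, is $\text{Cov}(\hat\mu_1,\hat\mu_0)$. Its same-node part vanishes because the treatment- and control-exposure events are disjoint, so the entire contribution comes from pairs of nearby nodes whose neighborhoods overlap: on the $k$-th power cycle each node shares neighbors with the $\mathcal{O}(d)$ nodes within cyclic distance $2k$, and a node being treatment-exposed makes such a neighbor more likely treatment-exposed and less likely control-exposed, giving a negative covariance. Computing the joint exposure law over these overlapping neighborhoods, extracting the leading piece proportional to $d\,h(1-h)$ with the correct sign, and controlling the remaining pairs uniformly in $h$ is the delicate part of the argument; assembling it with the two variance terms under the common $\tfrac{1}{np^{dh}}$ factor then gives the stated three-term expression.
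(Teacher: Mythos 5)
Your high-level plan is the same as the paper's: the paper's proof writes the exchangeability-reduced variance as a single sum over exposure levels $l \ge 2kh$ with binomial weights $\binom{2k}{l}$, containing exactly your three families of terms (treated diagonal, control diagonal, cross), with inverse-probability factors supplying the $p^{-dh}$ scale, and then ``considers the dominating terms.'' So the decomposition $\mathrm{Var}(\hat\mu_1)+\mathrm{Var}(\hat\mu_0)-2\,\mathrm{Cov}(\hat\mu_1,\hat\mu_0)$, the binomial tail probabilities, and the boundary-exposure evaluation are not where you differ. The gap is in the covariance step, which you yourself defer as ``the delicate part,'' and it is not a detail that can be filled in: it is where the argument as set up fails. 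First, writing $A_i=\mathbf{1}\{z_i=1,e_i\ge h\}Y_i/\pi_i^1$ and $B_j=\mathbf{1}\{z_j=0,e_j\le 1-h\}Y_j/\pi_j^0$, your claim that the same-node part of the covariance ``vanishes because the events are disjoint'' is false: disjointness gives $\mathbb{E}[A_iB_i]=0$, hence $\mathrm{Cov}(A_i,B_i)=-\mathbb{E}[A_i]\,\mathbb{E}[B_i]\neq 0$. These jointly-unexposable pairs (each node with itself and with its $2k$ overlap partners) are precisely what the paper's middle term counts --- it carries the factor $2k+1$ for exactly these pairs --- so you discard the very pairs the paper keeps.

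Second, and more seriously, the route you outline cannot produce a term of the claimed sign and magnitude. The event $\{z_i=1,e_i\ge h\}$ is increasing and $\{z_j=0,e_j\le 1-h\}$ is decreasing in the independent Bernoulli assignment variables, so by Harris/FKG they are negatively associated: $\pi_{ij}^{10}\le \pi_i^1\pi_j^0$, i.e.\ $\pi_{ij}^{10}/(\pi_i^1\pi_j^0)-1\in[-1,0]$. Consequently each pair contributes to $-2\,\mathrm{Cov}(\hat\mu_1,\hat\mu_0)$ an amount $-2\,y_iy_j\big(\pi_{ij}^{10}/(\pi_i^1\pi_j^0)-1\big)\in[0,\,2y_iy_j]$: a \emph{positive} contribution to $\mathrm{Var}(\hat\tau_h)$, bounded per pair with no $p^{-dh}$ blow-up. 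Summing over the $O(nd)$ overlapping pairs gives a positive $O(d/n)$ quantity, so the covariance cannot supply a negative term of order $d/(np^{dh})$ as you assert (``the $-2$ matching the covariance'' is inconsistent with your own, correct, observation that the covariance is negative). The paper instead obtains its negative middle term directly from the expectation of the inverse-propensity-weighted products evaluated over exposure levels, where the dominating piece enters with a factor $-2k/P$ ($P$ the exposure probability); whatever its own level of rigor, your FKG-constrained covariance bound is structurally unable to reach that term, so the extraction you postpone would fail rather than merely be tedious.
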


\begin{figure}
     \centering
     \begin{subfigure}[b]{0.3\linewidth}
         \centering
         \includegraphics[height=0.8\textwidth, width=\textwidth]{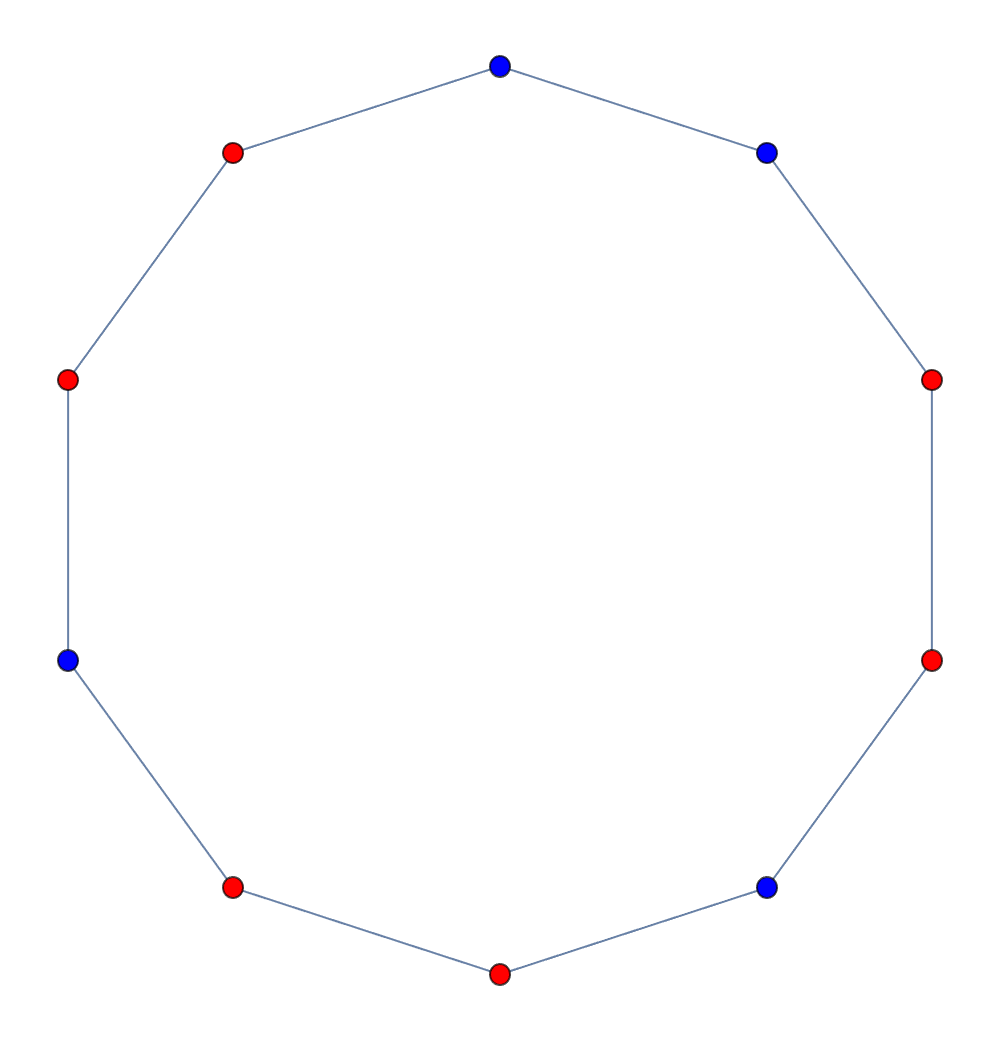}
         \label{fig:cycle_graph}
     \end{subfigure}
    \hfill
     \begin{subfigure}[b]{0.3\textwidth}
         \centering
         \includegraphics[height=0.8\textwidth, width=\textwidth]{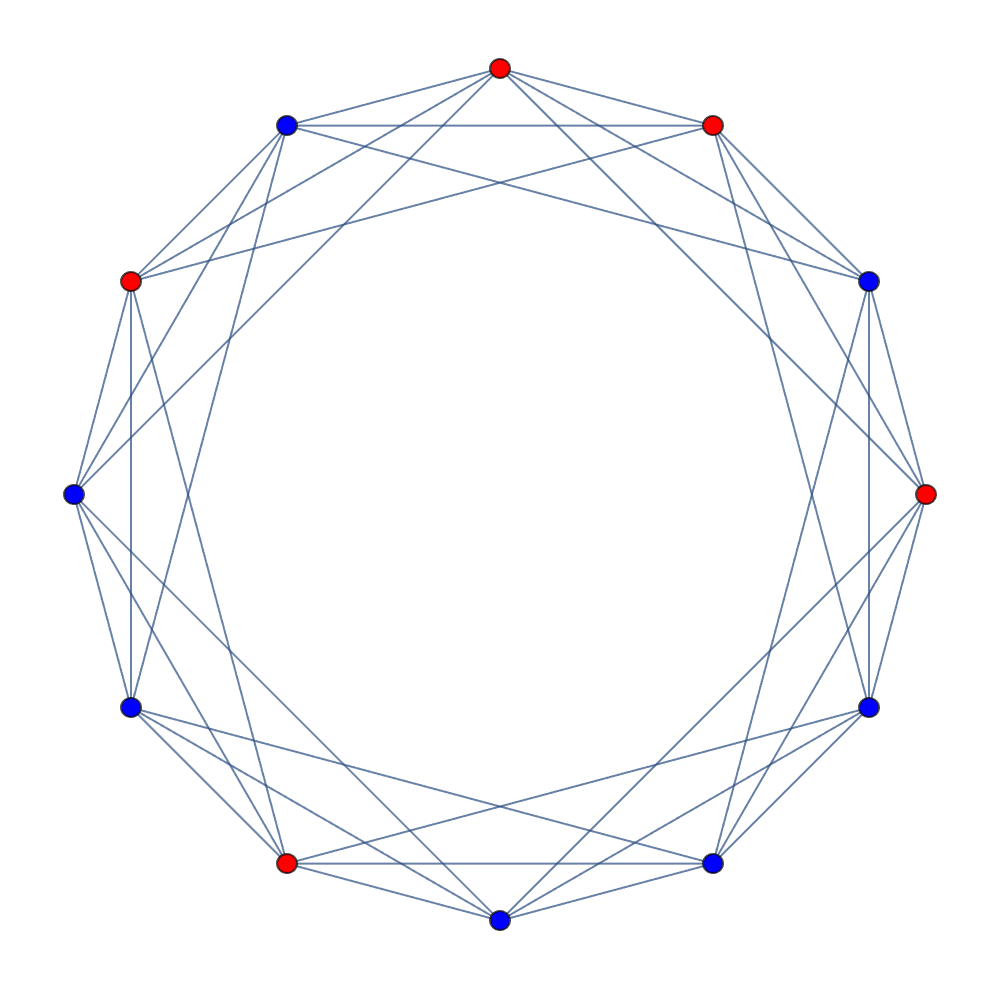}
         \label{fig:3rd_power_cycle}
     \end{subfigure}
     \hfill
     \begin{subfigure}[b]{0.3\textwidth}
         \centering
         \includegraphics[height=0.8\textwidth, width=\textwidth]{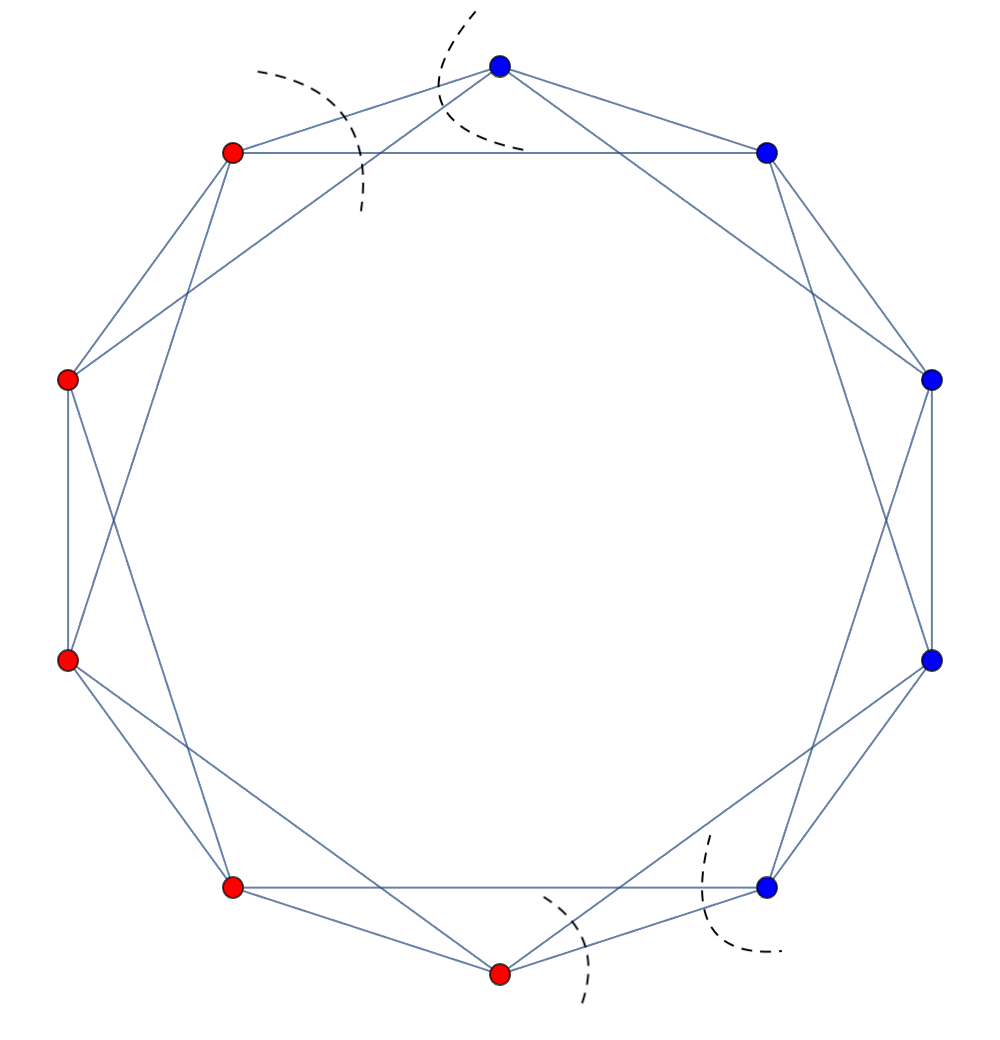}
         \label{fig:2nd_power_cycle}
     \end{subfigure}
     \caption{Circulant graphs with unit-level randomization and cluster-level randomization. Blue and red nodes represent treated and control units, respectively. Left: (1st-power) cycle graph with Ber(0.5) unit  randomization. Center: 3rd-power cycle graph with Benoulli(0.5) unit randomization. Right: 2nd-power cycle graph with Bernoulli(0.5) cluster randomization, with clusters of size 5 (=2k + 1).}
\label{fig:Circulant_graphs}
\end{figure}
Therefore, the optimal threshold $h^*$ here depends on $\gamma$, $\beta$, $n$, $p$, $d$, the graph structure, and the number of other candidate thresholds. 

From this toy example, it is not difficult to see that for unit-level Bernoulli randomization for more general graphs, the squared-bias would be $\Theta(\gamma^2h^2)$, and the variance would be $\Theta(p^{-dh}\beta^2/n).$ Therefore, minimizing the MSE corresponds to balancing these quantities.

\section{Estimating the rate of change of bias and variance}
In reality, however, one does not have access to the bias and variance. With the goal of thresholding in mind, we obtain a ``bias-signal" computed via a linear model, to estimate how the bias incurred changes by including more data under a wider bandwidth corresponding to a smaller threshold $h$. We propose the following absolute bias-signal estimator:
 \begin{align*}
     \hat{b}(\hat{\tau}_h) &=& \frac{1}{n} \sum_{i=1}^n \frac{(1-e_i) \hat{\gamma}_n\mathbf{1}\{ Z_i = 1, e_i \geq h\}}{\mathbb{P}\{ Z_i = 1, e_i \geq h\}}\\ &&+ \frac{1}{n} \sum_{i=1}^n \frac{e_i \hat{\gamma}_n\mathbf{1}\{ Z_i = 0, e_i \leq  1- h\}}{\mathbb{P}\{ Z_i = 0, e_i \leq 1- h\}},
 \end{align*} where $\hat{\gamma}_n$ is the linear regression coefficient of the outcome on the exposure variable.

If the potential outcomes depend strongly and postively on the treatment exposures, $\hat{\gamma}_n$ will be larger. Consequently, our estimator will capture a more pronounced increase in bias caused by including more data when using a smaller threshold $h$.

Next, to estimate the reduction in the variance for a drop in the threshold $h$, we estimate the variance of the Horvitz-Thompson estimator at a given threshold $h$ using the variance estimator discussed in \citep{aronow2017estimating}. The variance estimator is available in Appendix \ref{appendix:variance_estimator}

We run linear regression purely to obtain a signal of the bias, which we then use to pick an optimal threshold. That is, we are not assuming an underlying linear model, in which case running linear regression on the covariates ($z_i, e_i$) and obtaining the coefficient $\hat{\beta}$ of $z_i$ would be the optimal estimate of the ATE. 

One could split the sample to run linear regression (or a more complex learning mechanism) on one part of the sample and then use the bias signal obtained there to pick the optimal threshold on the other part of the sample. Since the units of interest are dependent (as modeled by the network interference), one would have to prove performance guarantees by leveraging results such as \cite{hart1990data}.  In our setting, however, it is not necessary that we split the data as we leverage the simplicity of the model class. In Appendix \ref{double_dipping}, we discuss further why our methodology is sound without sample-splitting by providing a proof sketch on how Donsker results hold in our setting, to show that we are not overfitting.

\section{Theoretical and Empirical Results}

We now present theoretical and simulation results in the context of general graphs. 
 % Our exposure maps depend on units' local neighborhoods, meaning that the only restrictions we need are on the size of those neighborhoods. We discuss an extension to more general exposure maps in the Discussion. 
 Denote by $H$ the set of exposure thresholds we consider. For example, in the cycle graphs, $H = \{ \frac{u}{d}: u \in \{ 0, 1,..., d\}\}$ where $d$ is the degree of the graph.

\subsection{Theory}

\begin{assumption}[Positivity]
       \label{cond:positivity}
    For all $i \in [n]$, $r \in \{ 0, 1\}$,
    \begin{align*}
        \mathbb{P}(Z_i = r) > 0
    \end{align*} 
\end{assumption}

% \begin{definition}[Bounded statistical leverage, \citep{hsu2012random}]
% \label{def:statistical_lev}
% There exists some finite $b > 1$, such that
%     \begin{align*}
%     \frac{\| \Sigma^{-1/2} x \|}{\sqrt{\mathbb{E}[\| \Sigma^{-1/2} x \|^2]}} \equiv \frac{\| \Sigma^{-1/2} x \|}{\sqrt{2}}  \leq b .
% \end{align*} 
% \end{definition}
This assumption ensures that we have sufficient data to estimate the ATE. The positivity assumption \ref{cond:positivity} also relates to statistical leverage \citep{mahoney2011randomized, martinsson2020randomized,young2019channeling, pilanci2015randomized}. The idea, informally, is that under positivity, statistical leverage across the input data is more homogeneous. As a result of this, there are no observations that have too much control over the linear regression. 

% In our setting larger spread, as well as large enough data near the boundaries $\{0,1\}$ are important for the robustness to non-linear models. In particular, having more high-leverage data (at the boundaries) allows for our linear fit to be more informative in estimating the bias of the average treatment effect, even in the presence of the non-linearity. That is, our approach is best when we are able to approximate the slope at the boundaries well, regardless of the true potential outcome model. 

\begin{assumption}[Bounded variation treatment-exposure function]
\label{cond:bounded_var}
Let the potential outcome function be $y_i = \alpha + \psi(z_i, e_i) + \epsilon_i$ as in (\ref{potential outcome}). The function
$\psi(z_i, e_i)$ has bounded variation.
\end{assumption}

% \begin{definition}[Bounded approximation error, \citep{hsu2012random}]
% \label{def:bounded_approx}
% Denote $\left[\beta_{BL},\gamma_{BL}\right]$ by $\theta$, and $\left[z, e\right]$ by $x$. There exists finite $b_{\text{approx}} \geq 0$ such that, almost surely,
% \[
% \frac{\|\Sigma^{-1/2} x (\psi(x) - \dotp{\theta, x})\|}{\sqrt{\mathbb{E}[\|\Sigma^{-1/2} x\|^2]}}
% = \frac{\|\Sigma^{-1/2} x (\psi(x) - \dotp{\theta, x})\|}{\sqrt{2}}
% \leq b_{approx}
% .
% \]
% \end{definition}

% That is, together with Condition \ref{def:statistical_lev}, we assume that there exists finite $b_{approx} \geq 0$, such that  $|(f(x) - \dotp{\theta, x})| \leq b_{approx}/b$. The following then holds:
% \begin{align*}
%     \frac{1}{\sqrt{2}}\|\Sigma^{-1/2} x (\psi(x) - \dotp{\theta, x})\| &\leq  \frac{1}{\sqrt{2}} \|\Sigma^{-1/2}x \| |(\psi(x) - \dotp{\theta, x})| \\
%     &\leq  b |(\psi(x) - \dotp{\theta, x})| \leq b_{approx}.
% \end{align*} 

% \begin{assumption}[Subgaussian misspecification error]
% \label{cond:subgaussian}
% There exists $0 \leq \sigma^2< \infty$ such that for all $i \in [n],$
% \begin{align*}
%     \mathbb{E}[\exp(\lambda (\epsilon_i))] \leq \exp(\lambda^2 \sigma^2/2)   
% \end{align*} almost surely for all $\lambda \in \mathbb{R}$.
% \end{assumption}

\begin{assumption}[Bounded first-order interactions]
\label{cond:bounded_first}
Let $B_r(i)$ denote the ball of radius $r$ around unit $i$.
For all $i \in [n]$,
    \[ |B_1(i)| = O(1).\] 
\end{assumption}

The bounded-degree condition above is important under unit-randomization design. For a cluster-randomization design, we would require bounded interactions across clusters. In particular the graph interference structure should have small $k_n$-conductance (i.e. the $k_n$-partition generalization of the Cheeger constant) for $k_n$ growing with $n$. Though restrictive theoretically, it is consistent with the observation that networks in practice tend to be sparse and clustered \citep{chandrasekhar2020testing}. Violating this condition will violate exposure positivity (see remark below), and Proposition \ref{prop:stochastic_equi} will not hold.

Some examples of graph classes that fall under the categorization of Assumption \ref{cond:bounded_first} are expander graphs, and growth-restricted graphs \citep{alon1986eigenvalues, arora2009expander, gkantsidis2003conductance, kuhn2005locality, krauthgamer2003intrinsic, kowalski2019introduction}. In the causal inference literature, the effect of the growth-restricted graph setting was studied in \citep{ugander2013graph}. 

\begin{remark}In the weighted graph setting, one could also consider equivalence classes $\Tilde{e}_i$ of exposures. That is, suppose $\Tilde{e}_i \equiv \Tilde{e}_j$ if and only if $|e_i - e_j|<\varepsilon_n$ for $\varepsilon_n$ specified below, and that exposure positivity, i.e. for all $i \in [n]$, $r \in \{0,1\}$, and $s \in [0,1]$, $\mathbb{P}(Z_i = r, \Tilde{e}_i = s) > 0$ is satisfied. Then, we can weaken our bounded degree constraint. Let $B_r(i)$ denote the ball of radius $r$ around unit $i$, such that it decomposes, $|B_1(i)| = |B_1^S(i)| + |B_1^W(i)|$ for all $i \in [n]$. $B_1^S(i)$ is the ball of radius 1 around unit $i$ with strong connections $\eta_{is}$, and $|B_1^W(i)|$ be the ball of radius 1 around unit $i$ with weak connections $\delta_{iw}$ such that $\sup_{i,s,w} \frac{\eta_{is}}{\delta_{iw}} \to \infty$. For all $i \in [n]$, we require that $ |B_1^S(i)| = \mathcal{O}(1)$. With an abuse of notation, write $e_i(\mathcal{N})$ to mean the exposure of unit $i$ as a function of the treated units in the subset $\mathcal{N} \subseteq B_1(i)$. Then, we require also also that $e_i(B_1^S(i)) = \Omega(e_i(B_1(i)))$. Additionally, defining $v_i = z_i - \mathbf{E}[z_i]$ for random variables $z_i$, $|B_1(i)|$ must satisfy the three conditions in \citep{chandrasekhar2023general} generalizing \citep{aronow2017estimating}, with an additional condition that the sum of the affinity set covariances is $\Omega(n)$, for the bias terms as well as the variance terms in the MSE. Since those conditions nest strong mixing dependence structure, which nests our current results for the variance terms, these conditions are sufficient. Then, taking $\varepsilon_n := \varepsilon(\delta_n, B_1^W(i)) > 0$, gives us uniform control on the estimated MSE of the estimator with respect to the true MSE of the estimator, and all the results follow through. 
\end{remark}

In the following theorem, we characterize the probability of choosing the correct threshold under the best average linear fit. That is, define $\gamma^* := \argmin_{\gamma \in \mathbb{R}} \mathbb{E}[Y - \gamma e].$ The correct threshold $h^*$ under the best average linear fit is the threshold minimizing the sum of the true variance and the true squared-bias under $Y_i = \alpha + \beta z_i + \gamma^* e_i.$ We write $M_n^*(h), \hat{M}_n(h)$ to denote the MSE under the true average best fit line and the MSE estimated by our methods, respectively, at threshold $h$ and finite-population size $n$.

\begin{theorem} \label{main_thm} Suppose Assumptions \ref{cond:positivity}, \ref{cond:bounded_var}, and \ref{cond:bounded_first} hold. Let $\Delta_n := \min_{h,h' \in H} |M_n^*(h) - M_n^*(h') |.$ Let $U_n \in \mathbb{R}$ be such that $\max_h |b_n^*(\hat{\tau}_h)| \leq U_n$ (such a $U_n$ exists by Conditions \ref{cond:positivity}, \ref{cond:bounded_var}), and let $h_n^*$ denote the optimal threshold under the true average best fit line, while $\hat{h}_n$ denotes the threshold chosen by our methods. Denote also the treatment assignment probability under unit-level Bernoulli randomization by $p$, and D is such that $D \geq |H|$, where $H$ the set of exposures. Then, 
\begin{align*}
    &\mathbb{P}\{\hat{h}_n \neq h_n^* \} \\
    &\leq \sum_{h \in H} \mathbb{P}\{| \hat{M}_n(h) - M_n^*(h) | > \Delta_n / 2\} \\
    &\leq 3D \max \bigg\{
        \exp\bigg( -\frac{\Delta_n n p(1-p)}{8 c d_{\max}} + 1 \bigg), \\
    &\quad \exp\bigg( -\frac{ \Delta_n^2 n p(1-p)}{(16U_n)^2 cd_{\max}} + 1 \bigg), \\
    &\quad 6\exp\bigg( - \frac{C n \big(\frac{\Delta_n}{4} - \frac{c d_{max}^2 }{n}\big)}{
        \sqrt{A_{n,2}} + \sqrt{\big(\frac{\Delta_n}{4} - \frac{c d_{max}^2}{n} \big)M_{n,2}}
    }\bigg)
    \bigg\},
\end{align*} for some constants $c, C$, and where $A_{n,p} \leq 16 \| \Tilde{v} \|_{L_1}^2\{c_1 + \frac{(\log n)^4}{n}\}^2$, for some constant $c_1$ and $\Tilde{v}$ is the Fourier transform of the summands of the variance components, and $M_{n,2} = 4 \| \Tilde{v} \|_{L_1} (\log n)^4.$ 

If, instead, the design is cluster-level Bernoulli randomization with probability $p$, denote the maximum number (plus one) of edges between clusters by $s_{\max}.$  Then,
\begin{align*}
    &\mathbb{P}\{\hat{h}_n \neq h_n^* \} \\
    &\leq \sum_{h \in H} \mathbb{P}\{| \hat{M}_n(h) - M_n^*(h) | > \Delta_n / 2\} \\
    &\leq 3D \max \bigg\{
        \exp\bigg( -\frac{\Delta_n n p(1-p)}{8 c s_{\max}} + 1 \bigg), \\
    &\quad \exp\bigg( -\frac{ \Delta_n^2 n p(1-p)}{(16U_n)^2 c s_{\max}} + 1 \bigg), \\
    &\quad 6\exp\bigg( - \frac{C n \big(\frac{\Delta_n}{4} - \frac{c s_{max}^2 }{n}\big)}{
        \sqrt{A_{n,2}} + \sqrt{\big(\frac{\Delta_n}{4} - \frac{c s_{max}^2}{n} \big)M_{n,2}}
    }\bigg)
    \bigg\},
\end{align*} for some constants $c, C$, and where $A_{n,2} \leq 16 \| \Tilde{v} \|_{L_1}^2\{c_1 + \frac{(\log n)^4}{n}\}^2$, for some constant $c_1$ and $\Tilde{v}$ is the Fourier transform of the summands of the variance components, and $M_{n,2} = 4 \| \Tilde{v} \|_{L_1} (\log n)^4.$ 
\end{theorem}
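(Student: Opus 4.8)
The plan is to reduce the event of selecting the wrong threshold to a uniform deviation event for the estimated MSE, and then to control that deviation by splitting it into a squared-bias part and a variance part, each handled by a concentration inequality tailored to the dependence that the network induces. First I would prove the first displayed inequality by a standard ``gap'' argument: writing $h_n^* = \argmin_{h \in H} M_n^*(h)$, any competitor $h \neq h_n^*$ satisfies $M_n^*(h) \geq M_n^*(h_n^*) + \Delta_n$ by the definition of $\Delta_n$. If $\hat h_n = h$ were selected we would have $\hat M_n(h) \leq \hat M_n(h_n^*)$, and combining this with the hypothetical bound $|\hat M_n(h') - M_n^*(h')| \leq \Delta_n/2$ at both $h' = h$ and $h' = h_n^*$ would force $M_n^*(h) \leq M_n^*(h_n^*) + \Delta_n$, a contradiction. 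Hence $\{\hat h_n \neq h_n^*\} \subseteq \bigcup_{h \in H}\{|\hat M_n(h) - M_n^*(h)| > \Delta_n/2\}$, and a union bound over the at most $D$ thresholds gives the first line; the three probabilities that survive per threshold will be folded into the factor $3D$ times the maximum.

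Next, since $\hat M_n(h) = \hat b(\hat\tau_h)^2 + \hat V(\hat\tau_h)$ and $M_n^*(h) = b_n^*(\hat\tau_h)^2 + V_n^*(\hat\tau_h)$, I would use
\[
|\hat M_n(h) - M_n^*(h)| \leq |\hat b(\hat\tau_h)^2 - b_n^*(\hat\tau_h)^2| + |\hat V(\hat\tau_h) - V_n^*(\hat\tau_h)|
\]
and treat the two pieces separately. For the squared bias I factor $|\hat b^2 - (b_n^*)^2| = |\hat b - b_n^*|\,|\hat b + b_n^*| \leq 2 U_n |\hat b - b_n^*|$, using the envelope $U_n$ that exists by Assumptions \ref{cond:positivity} and \ref{cond:bounded_var}, so that a deviation of the squared bias by $\Delta_n/4$ forces $|\hat b - b_n^*|$ to exceed a multiple of $\Delta_n / U_n$. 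I then write $\hat b - b_n^* = (\hat\gamma_n - \gamma^*)\,\hat W_h + \gamma^*(\hat W_h - \mathbb{E}\hat W_h)$, where $\hat W_h$ collects the inverse-probability-weighted $(1-e_i)$ and $e_i$ terms. The key boundedness fact is that under unit-level Bernoulli randomization the $z_i$ are independent and, by the bounded-degree Assumption \ref{cond:bounded_first} together with positivity, the exposure-conditional propensities $\mathbb{P}\{Z_i=1, e_i \geq h\}$ are bounded away from zero, so each summand of $\hat W_h$ is an $O(1)$ random variable depending only on treatments in a closed neighborhood of bounded size. Thus $\hat W_h$ is a sum of bounded, locally dependent variables with dependency degree $O(d_{\max})$ and per-coordinate variance proportional to $p(1-p)$, and a Bernstein inequality for such sums (in its sub-Gaussian regime) yields the second term, with the $(16U_n)^2$ and $d_{\max}$ factors coming from the $2U_n$ envelope and the dependency degree.

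The main obstacle is the third term: the concentration of the Aronow--Samii variance estimator $\hat V(\hat\tau_h)$ around $V_n^*(\hat\tau_h)$. Unlike the bias signal, this estimator is a quadratic (pairwise) functional of the treatment vector, so its dependency neighborhood is second order and its fluctuations are governed by $d_{\max}^2$ rather than $d_{\max}$, which is precisely why the shift $c d_{\max}^2/n$ appears inside the exponent; its first-order (diagonal) part and the inverse-probability-weight fluctuations, being ordinary locally dependent sums, contribute the first, linear-in-$\Delta_n$ term through the large-deviation regime of the same Bernstein inequality. To control the genuinely quadratic part I would proceed spectrally: represent the variance summands through their Fourier transform $\tilde v$, bound the moment and variance proxies by $A_{n,2}$ and $M_{n,2}$ in terms of $\|\tilde v\|_{L_1}$ and the polylogarithmic $(\log n)^4$ factors, and then invoke a Bernstein-type inequality for quadratic functionals, whose characteristic denominator $\sqrt{A_{n,2}} + \sqrt{(\Delta_n/4 - c d_{\max}^2/n)\,M_{n,2}}$ reproduces the stated bound. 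The two delicate points are keeping the inverse-probability weights bounded, which is exactly where Positivity (Assumption \ref{cond:positivity}) together with Assumption \ref{cond:bounded_first} prevents blow-up for high-degree nodes, and establishing the Fourier $L_1$ control on $\tilde v$ that makes the higher-moment bounds non-asymptotic rather than merely asymptotic.

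Finally, for cluster-level Bernoulli randomization I would rerun the entire argument verbatim with clusters playing the role of the independent randomization units. The bounded first-order interaction assumption is replaced by the bounded between-cluster edge count $s_{\max}$ (equivalently, small $k_n$-conductance), so the dependency degree of both the first-order bias and variance sums and the second-order variance functional is controlled by $s_{\max}$ and $s_{\max}^2$ respectively; every occurrence of $d_{\max}$ and $d_{\max}^2$ in the first display is then replaced accordingly, yielding the second display with the same three Bernstein regimes.
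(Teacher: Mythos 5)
Your overall skeleton --- the gap argument showing $\{\hat h_n \neq h_n^*\} \subseteq \bigcup_{h \in H}\{|\hat M_n(h) - M_n^*(h)| > \Delta_n/2\}$, the union bound over $|H| \le D$ thresholds, and the split of the MSE deviation into a squared-bias part and a variance part --- is exactly the paper's. The problems start inside the bias part. Your envelope step $|\hat b^2 - (b_n^*)^2| \le 2U_n|\hat b - b_n^*|$ is invalid: $U_n$ bounds only the deterministic $b_n^*$, not the random $\hat b$. The paper instead uses $|\hat b + b_n^*| \le |\hat b - b_n^*| + 2U_n$, hence $|\hat b^2 - (b_n^*)^2| \le (\hat b - b_n^*)^2 + 2U_n |\hat b - b_n^*|$, so the deviation event splits into $\{(\hat b - b_n^*)^2 > \Delta_n/8\}$ and $\{(\hat b - b_n^*)^2 > \Delta_n^2/(16U_n)^2\}$; \emph{both} of the first two exponential terms in the theorem are bias terms. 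Having dropped the quadratic piece, you were forced to attribute the first (linear-in-$\Delta_n$) term to the ``diagonal part'' of the variance estimator, which does not correspond to the structure of the bound. More seriously, your bias argument never controls $\hat\gamma_n - \gamma^*$. In your decomposition $\hat b - b_n^* = (\hat\gamma_n - \gamma^*)\hat W_h + \gamma^*(\hat W_h - \mathbb{E}\hat W_h)$, the Bernstein inequality for bounded, locally dependent sums handles only the second term; the first requires a concentration inequality for the least-squares slope computed from the network-dependent pairs $(e_i, Y_i)$, which is the central technical difficulty here. The paper reduces $(\hat b - b_n^*)^2$ directly to $(\hat\gamma_n - \gamma_n^*)^2$ (the inverse-propensity weights being bounded by Assumptions \ref{cond:positivity} and \ref{cond:bounded_var}) and then invokes \citep[Theorem 3.1]{ziemann2024noise} for least squares under dependent data, devoting most of the bias argument to verifying its conditions (hypercontractivity of the bounded exposures, blocking at radius $d_{\max}^2$, sub-Gaussian noise--covariate interactions, and independence of $e_i, e_j$ when $i$ and $j$ share no neighbor). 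Without this, or some substitute OLS concentration result under graph dependence, your bound on the bias deviation does not go through.

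On the variance side, your explanation of the shift $c d_{\max}^2/n$ is wrong in origin. It is not a second-order dependency-degree effect of a quadratic functional; it is the systematic bias $|\mathbb{E}[\hat v] - v^*|$ of the Aronow--Samii variance estimator, which is nonzero precisely because of pairs $(i,j)$ whose joint exposure probability is zero (the estimator's conservative correction when $\pi_{ij}^{10} = 0$), and Assumption \ref{cond:bounded_first} caps the number of such pairs so that $|\mathbb{E}[\hat v] - v^*| \le c d_{\max}^2/n$. The paper therefore first centers $\hat v$ at its expectation, absorbs this bias into the threshold (producing $\Delta_n/4 - c d_{\max}^2/n$ in the exponent), and only then applies the exponential inequality of \citep[Theorem 2.1]{shen2020exponential} for $\alpha$-mixing sums, whose Fourier-transform conditions are what generate $A_{n,2}$ and $M_{n,2}$. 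Your spectral sketch is close in spirit to that last step, but without the centering-plus-bias-bound argument the shifted threshold cannot be produced, and reading the shift as a ``second-order neighborhood'' effect would have you proving a different statement.
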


% \begin{theorem}
% \label{thm:mse_general}
% Suppose that $\mathbb{M}_n(h)$ is the estimated MSE, and $M_n(h)$ is the true MSE, associated with a given graph of $n$ nodes and experimental design. Assume that Conditions \ref{cond:positivity}, \ref{cond:bounded_var}, \ref{cond:subgaussian}, and \ref{cond:bounded_first} hold. Then, define $K:= \mathbb{E}[\|\Sigma^{-1/2} x \text{approx}(x) \|^2]$, $\hat{F}:= \left( \frac{1}{n} \sum_{i=1}^n \left[\frac{\mathbf{1}\{z_i = 1, e_i \geq h \}}{\mathbb{P}\{z_i =1, e_i \geq h \}}  e_i +  \frac{\mathbf{1}\{z_i = 0, e_i \leq 1-h \}}{\mathbb{P}\{z_i = 0, e_i \leq 1-h\}} (1-e_i)\right] \right)^2$, $\Delta := \min_{h,h': h\neq h'} |M_n(h) - M_n(h')|$, $R := \max_i d_i$, and $B$ to be the maximum over $i\in [n]$ of the summands in $\mathbb{M}_n(h)$ and $M_n(h)$, with $Y_i - \epsilon_i$ in the variance terms. $B < \infty$ by Conditions \ref{cond:positivity},\ref{cond:bounded_var}, and \ref{cond:bounded_first}. Let $H$ be the set of candidate thresholds. By Condition \ref{cond:bounded_first}, $|H| \leq S$ for some constant $S< \infty$. Let $\gamma_0$ be the best possible linear fit. Then,
% \begin{align*}
%       \mathbb{P}\{ | M_n(h) = M_n(h_{BL}^*) | \} \geq 1- 3SC\exp{\left(-\frac{n}{C_1(16 K^2 + 2 \sigma^2)} \cdot \left(\frac{\phi^2 \Delta}{12 \gamma_0 \hat{F}} - C_2\right) \right)},
% \end{align*} for some constants $C, C_1, C_2$, and variance $\phi^2$ of $e_i$.
% \end{theorem}

In Theorem \ref{main_thm}, we note that the MSE gap is lower-bounded by a constant, i.e. $\Delta_n \geq \Delta \equiv  \min_{h,h' \in H;n} |b_n^*(h) - b_n^*(h') |$. The proof to the theorem is in Appendix \ref{appendix:proof_main}. In this proof, we make use of the results from \citep[Theorem 3.1]{ziemann2024noise}, and \citep[Theorem 2.1]{shen2020exponential}.

This gives us the following corollary. Denote the true MSE by $M_n^{**}$, and the corresponding bias and optimal threshold by $b_n^{**}$ and $h_n^{**}$, respectively. Let $U_n^* \in \mathbb{R}$ be such that $\max_h |b_n^{**}(\hat{\tau}_h)| \leq U_n^*$ (such a $U_n^*$ exists by Conditions \ref{cond:positivity}, \ref{cond:bounded_var}). We note again that $U_n^*$ is bounded from below.

\begin{corollary} \label{corollary:approx_error}
    Suppose that $\sup_{e_i}| f(e_i) - \gamma^* e_i| \leq \delta$. Let $\Tilde{\delta}:= 16 \delta^2  + 8 \delta U_n^*$ . Then,
    \begin{align*}
        &\mathbb{P}\{ \hat{h}_n \neq  \hat{h}_n^{**} \} \\
        &\leq \sum_{h \in H} \mathbb{P}\{ |\hat{M}_n(h) -  M_n^{**}(h)| > \Delta/2 \}\\
        &\leq 3D \max \bigg\{
        \exp\bigg( -\frac{(\Delta/8 - \Tilde{\delta}) n p(1-p)}{c d_{\max}} + 1 \bigg), \\
    &\quad \exp\bigg( -\frac{ (\Delta^2/(16U_n)^2 - \Tilde{\delta}) n p(1-p)}{ cd_{\max}} + 1 \bigg), \\
    &\quad 6\exp\bigg( - \frac{C n \big(\frac{\Delta_n}{4} - \frac{c d_{max}^2 }{n} - \Tilde{\delta} \big)}{
        \sqrt{A_{n,2}} + \sqrt{\big(\frac{\Delta_n}{4} - \frac{c d_{max}^2 }{n} - \Tilde{\delta} \big)M_{n,2}}
    }\bigg)
    \bigg\},
    \end{align*} for some constants $c, C$, and where $A_{n,p} \leq 16 \| \Tilde{v} \|_{L_1}^2\{c_1 + \frac{(\log n)^4}{n}\}^2$, for some constant $c_1$ and $\Tilde{v}$ is the Fourier transform of the summands of the variance components, and $M_{n,2} = 4 \| \Tilde{v} \|_{L_1} (\log n)^4.$ 

If, instead, the design is cluster-level Bernoulli randomization with probability $p$, denote the maximum number (plus one) of edges between clusters by $s_{\max}.$  Then, we can replace $d_{\max}$ by $s_{\max}$ in the bounds above.
\end{corollary}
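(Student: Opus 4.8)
The plan is to reduce Corollary \ref{corollary:approx_error} to Theorem \ref{main_thm} by absorbing the misspecification error $\delta$ into the MSE gap. The only object that changes relative to the theorem is the comparison target: we now bound $|\hat M_n(h) - M_n^{**}(h)|$ against the \emph{true} MSE rather than the best-linear-fit MSE $M_n^*(h)$. Accordingly, I would first prove a purely deterministic approximation bound $\sup_{h\in H}|M_n^*(h)-M_n^{**}(h)|\le\tilde{\delta}$, and then invoke the concentration estimates behind Theorem \ref{main_thm} with a gap shrunk by $\tilde{\delta}$.

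\textbf{Step 1 (deterministic approximation).} Writing each MSE as squared bias plus variance, I would control the two pieces separately. For the bias, the branch integrands are, in the treated term, $f(1)-f(e_i)$ versus $\gamma^*(1-e_i)$, and in the control term, $f(e_i)-f(0)$ versus $\gamma^* e_i$; using $\sup_e|f(e)-\gamma^* e|\le\delta$ each integrand deviates by at most $2\delta$. Since the weights $\mathbf{1}\{\cdot\}/\mathbb{P}\{\cdot\}$ are positive and normalize to one (Assumption \ref{cond:positivity} keeps the denominators bounded away from zero), summing the two branches gives $|b_n^*(h)-b_n^{**}(h)|\le 4\delta$ uniformly in $h$. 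The identity $|a^2-b^2|=|a-b|\,|a+b|$ with $|b_n^{**}(h)|\le U_n^*$ then yields
\[ |(b_n^*(h))^2-(b_n^{**}(h))^2|\le 4\delta\,(2U_n^*+4\delta)=16\delta^2+8\delta U_n^*=\tilde{\delta}. \]
The variance piece is handled by the same pointwise argument: replacing $f$ by $\gamma^* e$ perturbs every (fixed) potential outcome entering the Aronow--Samii variance by at most $\delta$, so the induced change in the quadratic variance functional is of order $O(\delta U_n^*)$ uniformly in $h$ and is absorbed into $\tilde{\delta}$.

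\textbf{Step 2 (reduction and union bound).} By the triangle inequality, on $\{|\hat M_n(h)-M_n^{**}(h)|>\Delta/2\}$ one has $|\hat M_n(h)-M_n^*(h)|>\Delta/2-\tilde{\delta}$, hence
\[ \mathbb{P}\{|\hat M_n(h)-M_n^{**}(h)|>\Delta/2\}\le \mathbb{P}\{|\hat M_n(h)-M_n^*(h)|>\Delta/2-\tilde{\delta}\}. \]
I would then feed the reduced gap into the three concentration estimates used for Theorem \ref{main_thm} (the linear-bias Bernstein bound producing the large-deviation and sub-Gaussian regimes, the squared-bias bound carrying the $U_n$ factor, and the Fourier/$L_1$ variance-estimator bound of \citep{ziemann2024noise, shen2020exponential}), with $d_{\max}$ (resp.\ $s_{\max}$ under cluster randomization) entering through the dependency-neighborhood sizes; tracking constants, the $\tilde{\delta}$ subtraction propagates into each exponent exactly as displayed. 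Summing over $h\in H$ with $|H|\le D$ gives the stated bound, and the outer inequality $\mathbb{P}\{\hat h_n\neq h_n^{**}\}\le\sum_h\mathbb{P}\{|\hat M_n(h)-M_n^{**}(h)|>\Delta/2\}$ follows from the same argmin-recovery argument as in the theorem: if every MSE estimate is within $\Delta/2$ of its true value, the minimizer is preserved because the best and second-best true MSEs differ by at least $\Delta\le\Delta_n$.

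\textbf{Main obstacle.} The delicate part is the variance half of Step 1. Unlike the squared bias, the design-based variance is a quadratic form in the potential outcomes, so I must verify that a $\delta$-perturbation of $f$ changes it only at order $O(\delta U_n^*)$ \emph{uniformly in $h$} — in particular that the inverse-probability weights do not amplify the perturbation — so that no term beyond $\tilde{\delta}$ is created. This is precisely where Assumption \ref{cond:positivity} (exposure probabilities bounded away from zero) and Assumption \ref{cond:bounded_first} (bounded neighborhoods) are needed, since together they cap both the magnitude of the weights and the number of covariance cross-terms into which the $\delta$-perturbation can feed.
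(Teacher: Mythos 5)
Your proposal is correct and follows essentially the same route as the paper's proof: a deterministic bound $\sup_{h\in H}|M_n^*(h)-M_n^{**}(h)|\le\tilde{\delta}$ obtained from $|b_n^*(h)-b_n^{**}(h)|\le 4\delta$ together with the identity $|a^2-b^2|=|a-b|\,|a+b|$ and $|b_n^{**}(h)|\le U_n^*$, followed by the triangle-inequality reduction $\mathbb{P}\{|\hat M_n(h)-M_n^{**}(h)|>\Delta/2\}\le\mathbb{P}\{|\hat M_n(h)-M_n^{*}(h)|>\Delta/2-\tilde{\delta}\}$ and an application of Theorem \ref{main_thm} with the shrunken gap. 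The one point of divergence is the variance half of your Step 1, which you single out as the main obstacle: that step does not appear in the paper at all. The paper's proof opens with the \emph{equality} $|M_n^*(h)-M_n^{**}(h)|=|{b_n^*}^2(h)-{b_n^{**}}^2(h)|$, i.e., $M_n^*$ is understood to carry the true variance component (only the bias is evaluated under the best average linear fit), so the variance terms cancel identically and no perturbation argument is required. Be aware that under your alternative reading (variance also computed under the linear model), your resolution is not quite right as stated: the bias difference alone already saturates $\tilde{\delta}=16\delta^2+8\delta U_n^*$, so an additional $O(\delta U_n^*)$ variance perturbation cannot be ``absorbed into $\tilde{\delta}$'' without enlarging that constant, and the displayed exponents would change accordingly. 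With the paper's definition of $M_n^*$ your extra step is simply unnecessary, and the rest of your argument coincides with the paper's.
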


Corollary \ref{corollary:approx_error} tells us that if the maximum linear approximation error between the best average linear fit and the true exposure function $f(\cdot)$ is small enough relative to the minimum MSE gap $\Delta$, our estimator will be optimal with high probability for large $n$. If $f(x) = \gamma x$ for all $x \in [0,1]$ indeed, then with necessarily we have that with high probability for large $n$, our estimator is optimal.

\subsection{Simulations}
\label{simulations}
We now compare the performance of our approach to that of the Horvitz-Thompson estimator with threshold one, and the Horvitz-Thompson estimator with threshold zero. Additionally, since our objective is essentially an MSE-optimal bandwidth selection problem for the indicator kernel (see Appendix \ref{appendix:bandwidth}, \ref{HT_estimator_bandwidth}), we compare our estimator to the Horvitz-Thompson estimator with the plugin threshold from Lepski's method \citep{goldenshluger2011bandwidth, su2020adaptive}. We note that Lepski's method requires monotonicity of the bias and the variance to work well. In our setting, due to the implicit dependency structure from the graph in the inverse-propensity weights, the monotonicity assumption may be violated (see top panel of Figure \ref{fig:bias_var_tradeoff}). We write out the three estimators in Appendix \ref{appendix:HT_comparison_estimators}.

In Figure \ref{fig:HT_GLR_sims}, we simulate outcomes from the linear model with $\alpha =10$, $g(z_i) = \beta z_i = 10z_i$, $f(e_i) =\gamma e_i$, and fixed $\epsilon_i$ generated from $\mathcal{N}(0,1)$ for a 1000-node 2nd-power cycle graph. Figure \ref{fig:HT_sbm_sims} displays the simulation results for a 200-node SBM graph (see Figure \ref{fig:sbm}). We focus on varying the ratio $\gamma/\beta$ as we consider a fixed graph. We see that our adaptive thresholding estimator (AdaThresh) generally performs better than existing estimators, ``interpolating" between the fixed 0/1-threshold Horvitz-Thompson estimators, and out-performing the Lepski's method plug-in estimators when the threshold $\gamma/\beta$ is high.

 % We focus on varying the ratio $\gamma/\beta$ as we consider a fixed graph. We see that our adaptive thresholding estimator (AdaThresh) generally performs better than existing estimators, ``interpolating" between the fixed 0/1-threshold Horvitz-Thompson estimators, and out-performing the Lepski's method plug-in estimators when the threshold $\gamma/\beta$ is high.
\begin{figure}
    \centering
    \begin{subfigure}[b]{\linewidth}
     \centering     
     \includegraphics[width=\linewidth]{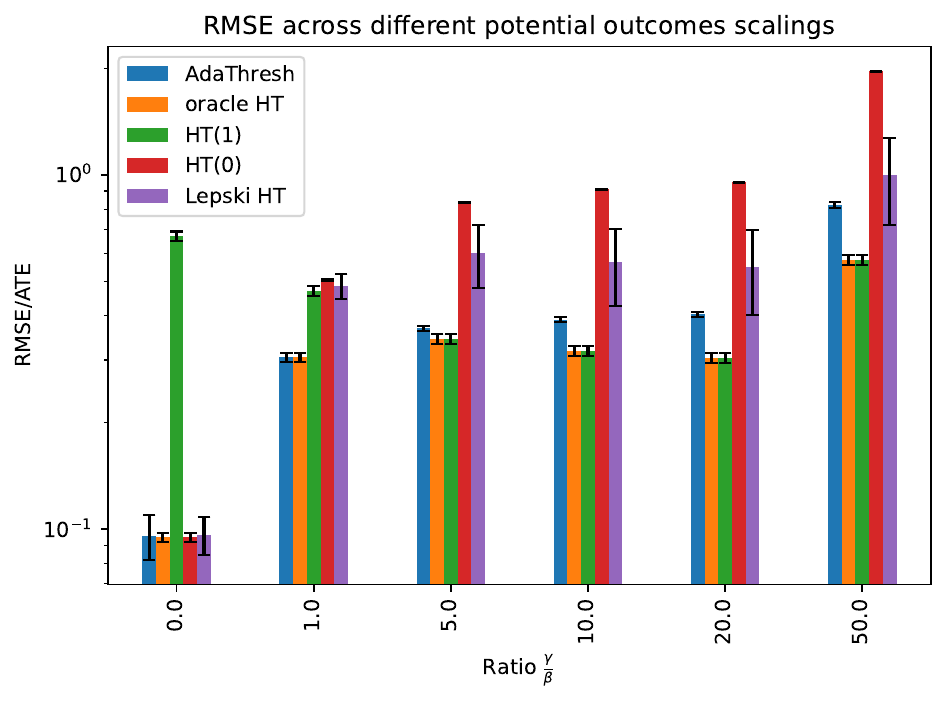}
     \end{subfigure}
    \hfill
     \begin{subfigure}[b]{\linewidth}
    \centering
     \includegraphics[width=\linewidth]{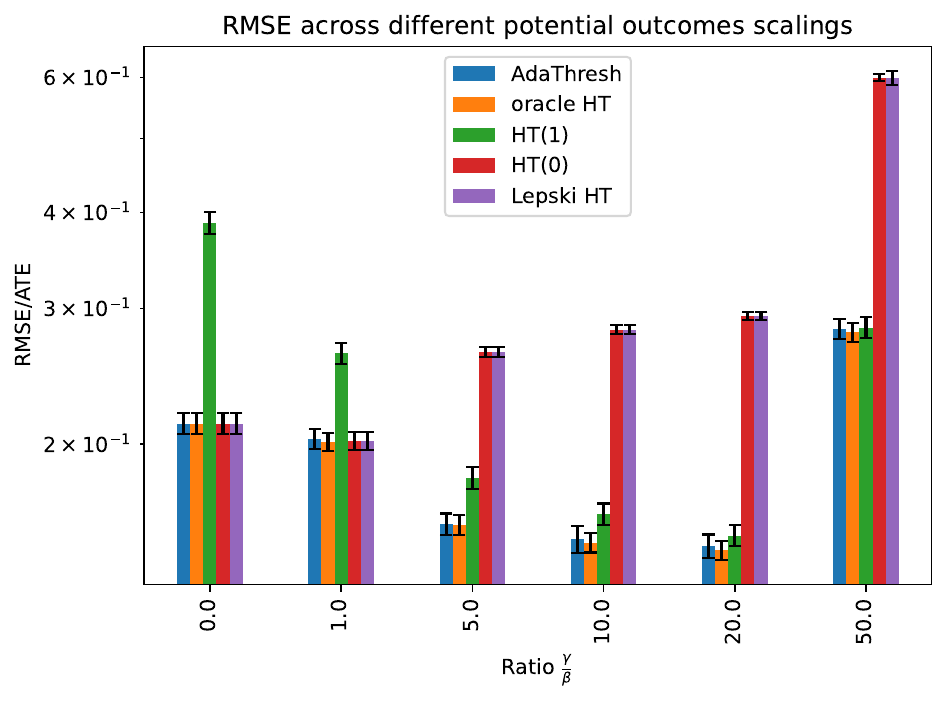}
     \end{subfigure}
\caption{RMSE (normalized by the ATE) of different Horvitz-Thompson estimators. Top: 2nd-power cycle graph under unit-level Ber(0.5) randomization. Bottom: 2nd-power cycle graph under cluster-level Ber(0.5) randomization with cluster sizes 5 (=2k + 1). The error bars are two times the standard deviation.}
\label{fig:HT_GLR_sims}
\end{figure}

\begin{figure}
    \centering
\includegraphics[width=\linewidth, height=5cm, keepaspectratio]{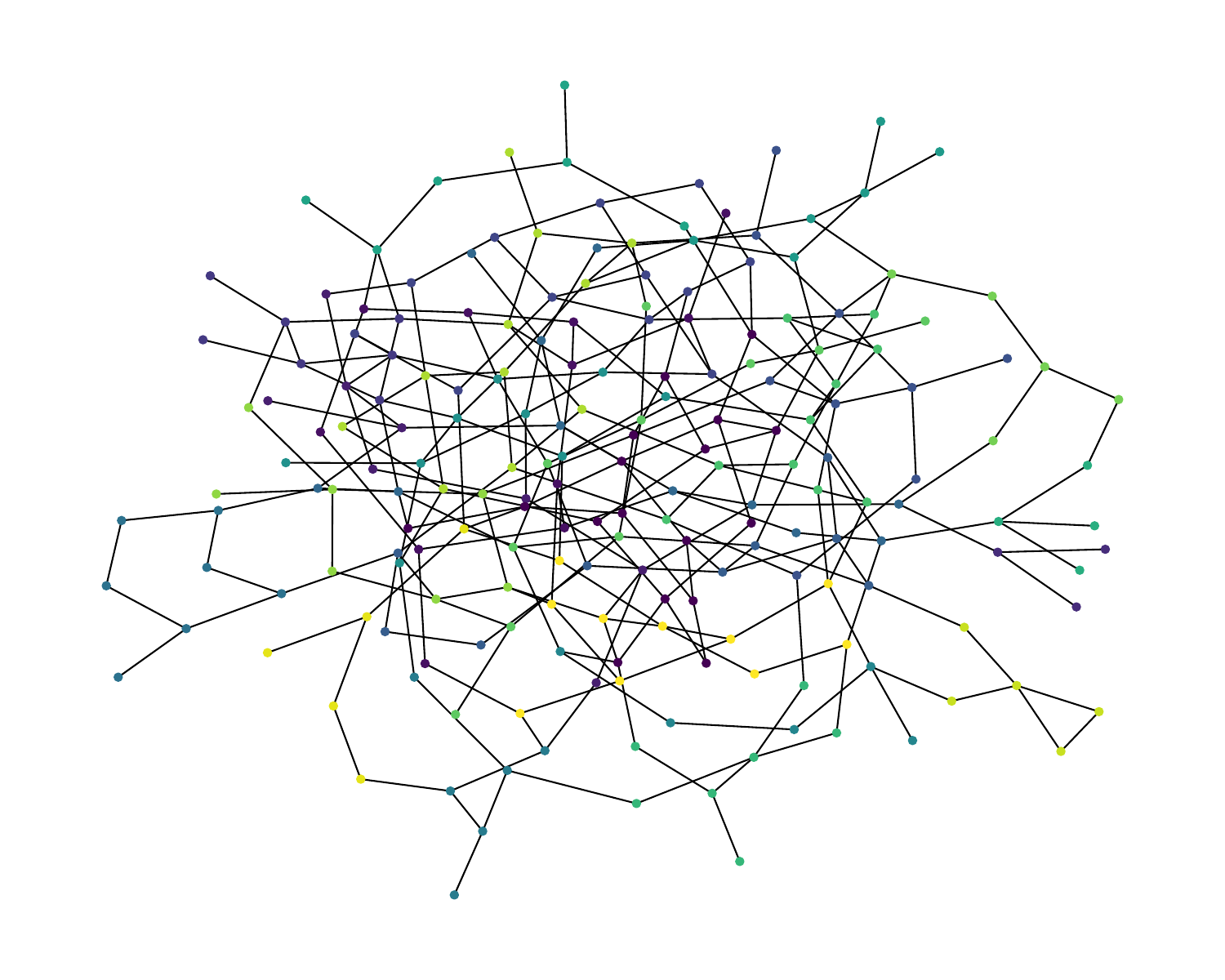}
    \caption{A graph of size $n=200$ generated from the Stochastic Block Model (SBM), under unit-level and cluster-level Bernoulli (0.5) randomization. Node colors reflect cluster membership by K-means (with ground truth $K = 25$) for cluster-randomization. }
    \label{fig:sbm}
\end{figure}

\begin{figure}
    \centering
    \begin{subfigure}[b]{\linewidth}
     \centering     
     \includegraphics[width=\linewidth]{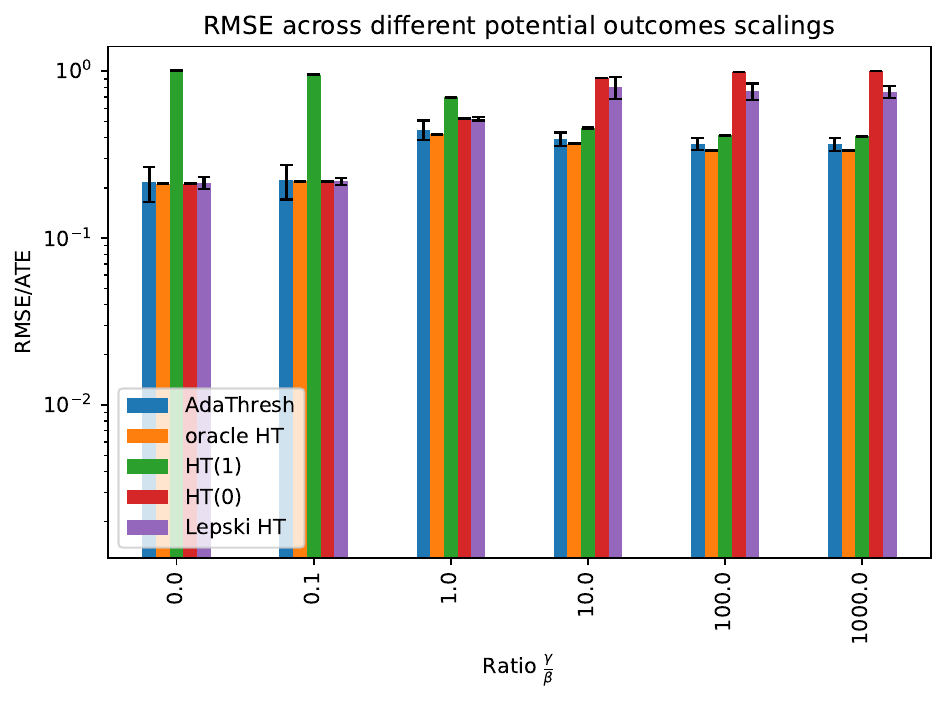}
     \end{subfigure}
    \hfill
    \centering
    \begin{subfigure}[b]{\linewidth}
\includegraphics[width=\linewidth]{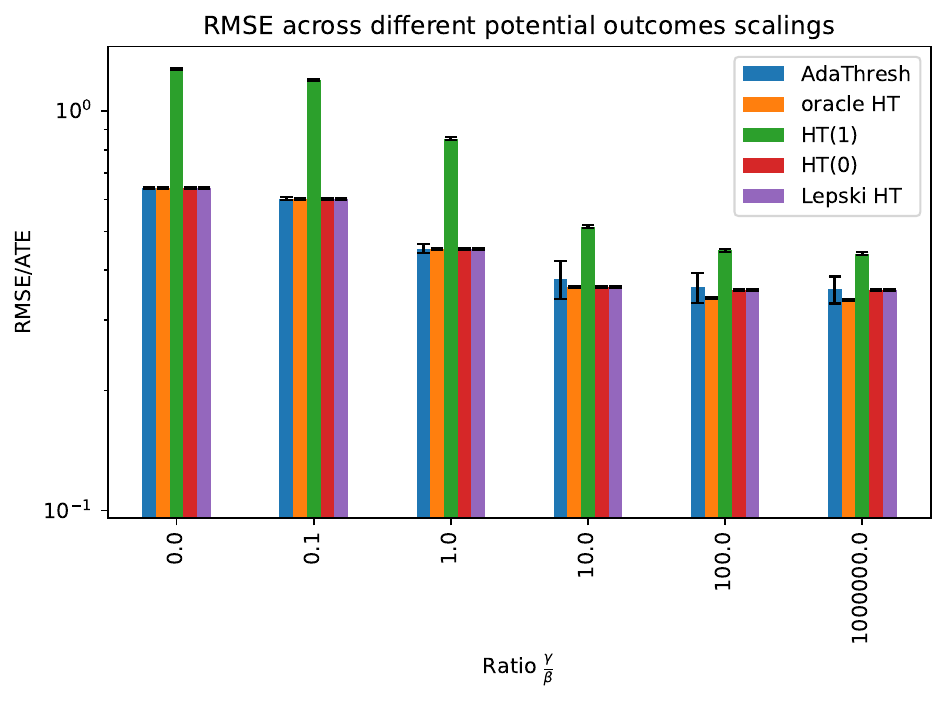}
     \end{subfigure}
\caption{RMSE (normalized by the ATE) of different Horvitz-Thompson estimators for the SBM graph in Figure \ref{fig:sbm}. Top: unit-level Ber(0.5) randomization. Bottom: cluster-level Ber(0.5) randomization with 25 clusters. The error bars are two times the standard deviation.}
\label{fig:HT_sbm_sims}
\end{figure}

\subsection{Real Data} \label{real_data}
We evaluate the performance of our estimator on the Amazon (DVD) products similarity network \citep{leskovec2007dynamics}. We consider the performance of the estimator on a subset of (the first) 1000 non-isolated nodes from a total of 19828 nodes, with exposure probabilities computed on the full graph. We generate simulated data using the linear model with $\alpha = 10, g(z_i) = \beta z_i = 10z_i$, $f(e_i) =\gamma e_i$  
%(we use $f(\cdot)$ that is constant, unchanging with $n$, i.e $\gamma$ is unnormalized by $\sqrt{n}$ here, for simplicity as we illustrate the performance of our method for a fixed $n = 1000$)
, and $\epsilon_i$ is generated from $\mathcal{N}(0,1)$ under a unit-level Bernoulli(0.5) randomization design. We focus on varying the ratio $\gamma/\beta$ as we consider a fixed graph. Figure \ref{fig:real_data} shows that we consistently perform better than existing fixed estimators. See Appendix \ref{appendix:simulations_technical} for more details.

\begin{figure}
    \centering
\includegraphics[width=\linewidth]{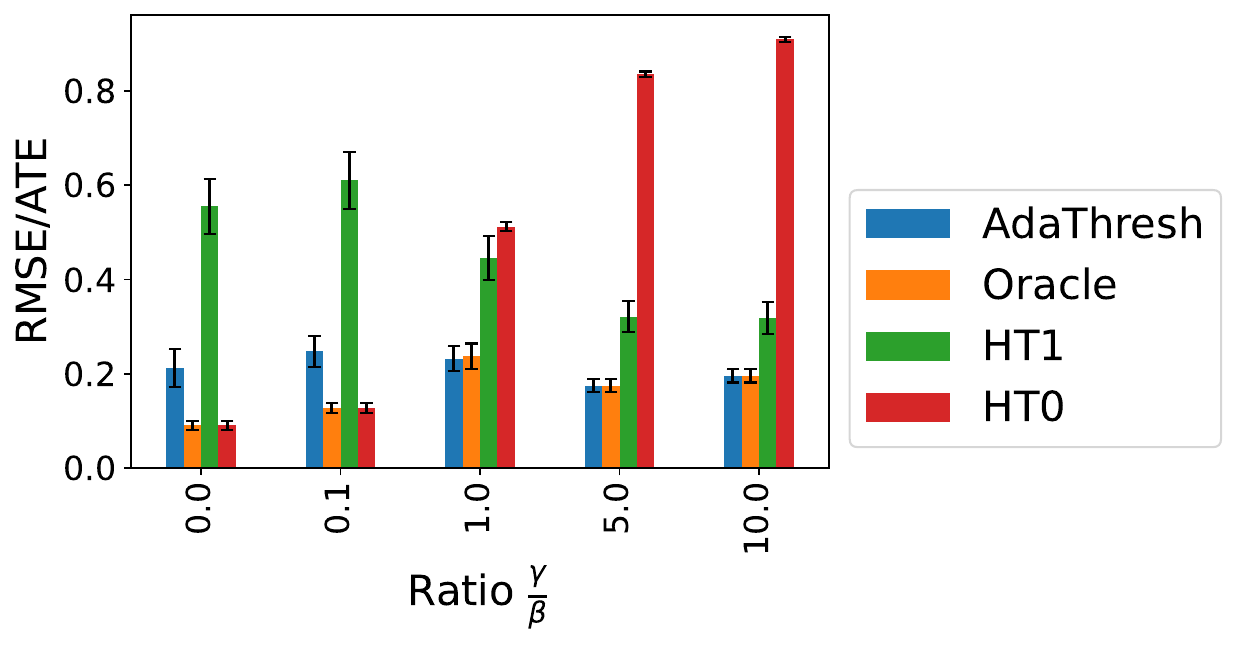}
    \caption{RMSE (normalized by the ATE) across different thresholds on the Amazon product network data. The error bars are two times the standard deviation.}
    \label{fig:real_data}
\end{figure}

\section{Discussion}
In this paper, we investigated the adaptive Horvitz-Thopson estimator for symmetric thresholds $h$ and $1-h$ for treatment and control, respectively. One could also use this approach for a more general estimator with thresholds $h_1$ and $1-h_0$, respectively with $h_1 \neq h_0$. Additionally, in Appendix \ref{appendix:difference-in-means}, we demonstrate the performance of this approach using the Difference-in-Means estimator incorporating exposure thresholds. We leave it to future work to extend our results to more general exposure estimation procedures. In \citep{chin2019regression}, the author proposes learning the feature variables that are most predictive of the potential outcomes. One could extend our work by first learning the feature variables, as proposed by \citep{chin2019regression}, followed by then learning the appropriate optimal threshold associated with these features, using our approach.

In Appendix \ref{appendix:non-linear}, we illustrate the robustness of our estimator to deviations from linearity in the potential outcomes model. We note that our framework would apply in more general settings, including a direct extension to friends-of-friends (or further connections) or to neighborhoods that are either observed or learned through a clustering algorithm.  An additional interesting future direction could be to explore an extension of our work to the framework presented in~\cite{chandrasekhar2023general}, which allows each individual to have a unique exposure and then bins exposures that are separated by at least a threshold.

Furthermore, to improve upon robustness to non-linear settings, we propose an extension using local regression to estimate the rate of change of bias within the $[0,1-h],[h,1]$ windows. As long as there is sufficient concentration of exposures in these windows, robustness is never worse.
In Appendix \ref{appendix:localLR}, we display our simulation results in this setting using local linear regression to estimate the rate of change of bias within the $[0,1-h],[h,1]$ windows. One could also use other local regression approaches, such as kernel regression, etc., while maintaining sufficiently small model complexity to avoid needing to sample-split.

This paper prioritizes minimizing the mean squared error (MSE) to effectively balance the bias-variance trade-off in a Horvitz-Thompson-type estimator, rather than focusing on inference. However, we acknowledge the significance of characterizing inferential tools, such as Wald-type confidence intervals, in specific contexts. Notably, \citep{aronow2017estimating} and \citep{athey2018exact} offer valuable insights into inferential methods for estimating average treatment effects under network interference.

% \section*{Acknowledgements}
% The authors would like to thank Alex Kokot, and Lars van der Laan for helpful discussions; Dean Eckles for pointing us to a reference, and Matthew Eichhorn for suggesting the Amazon product similarity network dataset.

% \section*{Impact Statement}
% This paper presents work whose goal is to advance the field of Machine Learning. There are many potential societal consequences of our work, none which we feel must be specifically highlighted here.

\small
\bibliographystyle{unsrtnat}
\bibliography{ref}

\medskip

\appendix

\onecolumn
\section{Appendix / supplemental material}

\subsection*{Notation}
In this supplement, we write $b^*(\tau_h),\hat{\tau_b}(h),v^*(\tau_h),\hat{v}(\tau_h))$ to represent the true bias, estimated bias, true variance, and estimated variance, respectively, at threshold $h$. We also write $d_max$ to represent the upper-bound, by \ref{cond:bounded_first}, on the degree of nodes in the network.

\subsection{Variance estimator}
\label{appendix:variance_estimator}

We use the variance estimator, for the Horvitz-Thompson estimator, proposed by \citep[Eqn. 7,9]{aronow2017estimating}. 
\begin{equation}
    \begin{split}
    \hat{v}(\hat{\tau}_h) &= \sum_{i=1}^n \frac{\mathbf{1}\{z_i = 1, e_i \geq h\}Y_i^2}{n^2\pi_i^1} (\frac{1}{\pi_i^1} - 1) \\
    &+ \sum_{i=1}^n \sum_{\mathclap{\substack{j=1\\ j\neq i}}}^n \frac{\mathbf{1}\{z_i = 1, e_i \geq h\}\mathbf{1}\{z_j = 1, e_j \geq h\}Y_iY_j}{n^2\pi_{ij}^{11}} (\frac{\pi_{ij}^{11}}{\pi_i^1 \pi_j^1} - 1) \\
    &+ \sum_{i=1}^n \frac{\mathbf{1}\{z_i = 0, e_i \leq 1-h\}Y_i^2}{n^2\pi_i^0} (\frac{1}{\pi_i^0} - 1)\\
    &+ \sum_{i=1}^n \sum_{\mathclap{\substack{j=1\\j\neq i}}}^n \frac{\mathbf{1}\{z_i = 0, e_i \leq 1-h\}\mathbf{1}\{z_j = 0, e_j \leq 1-h\}Y_iY_j}{n^2\pi_{ij}^{00}} (\frac{\pi_{ij}^{00}}{\pi_i^0 \pi_j^0} - 1) \\
    &- \frac{2}{n^2} \sum_{i=1}^n \sum_{j\in [n]; \pi_{ij}^{10} > 0} ( \mathbf{1}\{z_i = 1, e_i \geq h\}\mathbf{1}\{z_j = 0, e_j \leq 1-h\}Y_i Y_j)(\frac{1}{\pi_i^1 \pi_j^0} - \frac{1}{\pi_{ij}^{10}} ) \\
    &+ \frac{2}{n^2} \sum_{i=1}^n \sum_{j\in [n]; \pi_{ij}^{10} = 0} ( \frac{\mathbf{1}\{z_i = 1, e_i \geq h\}Y_i^2}{2 \pi_i^1} + \frac{\mathbf{1}\{ z_j = 0, e_j \leq 1-h \}Y_j^2}{2 \pi_j^0})
    \end{split}
\end{equation}

Here, $\pi_i^{x}, \pi_{ij}^{xy}$ are defined with respect to the threshold $h$. That is, 
$\pi_{i}^{1} = \mathbb{P}\{z_i =1 , e_i \geq h\}$, 
$\pi_{i}^{0} = \mathbb{P}\{z_i =0 , e_i \leq 1-h\}$,
$\pi_{ij}^{10} = \mathbb{P}\{z_i =1 , e_i \geq h, z_j = 0, e_i \leq 1-h \}$, 
$\pi_{ij}^{11} = \mathbb{P}\{z_i =1 , e_i \geq h, z_j = 1, e_i \geq h \}$,
$\pi_{ij}^{01} = \mathbb{P}\{z_i =0 , e_i \leq 1-h, z_j = 1, e_i \geq h \}$,
$\pi_{ij}^{00} = \mathbb{P}\{z_i =0 , e_i \leq 1-h, z_j = 0, e_i \leq 1-h \}$.

\subsection{Horvitz-Thompson estimators with existing approaches}\label{appendix:HT_comparison_estimators}

We write the vanilla Horvitz-Thompson estimator (at threshold one) \ref{est:HT}, the Horvitz-Thompson estimator at threshold zero \ref{est:HT0}, and the Horvitz-Thompson estimator with the threshold selected via Lepski's method \ref{est:Lepski_HT} below.

\subsubsection{Vanilla Horvitz-Thompson estimator (at threshold one)}
\begin{align} \label{est:HT}
        \hat{\tau}_{\text{HT}_1}  = \frac{1}{n} \sum_{i=1}^n \frac{\mathbf{1}\{ Z_i = 1, e_i = 1\}}{\mathbb{P}\{ Z_i = 1, e_i = 1\}} Y_i - \frac{1}{n} \sum_{i=1}^n \frac{\mathbf{1}\{ Z_i = 0, e_i = 0\}}{\mathbb{P}\{ Z_i = 0, e_i = 0\}} Y_i
\end{align}

\subsubsection{Horvitz-Thompson estimator at threshold zero}
    \begin{align} \label{est:HT0}
        \hat{\tau}_{\text{HT}_0}  = \frac{1}{n} \sum_{i=1}^n \frac{\mathbf{1}\{ Z_i = 1\}}{\mathbb{P}\{ Z_i = 1\}} Y_i - \frac{1}{n} \sum_{i=1}^n \frac{\mathbf{1}\{ Z_i = 0\}}{\mathbb{P}\{ Z_i = 0\}} Y_i
\end{align}

\subsubsection{Lepski's method}
As described in \citep{su2020adaptive}, we first take
\begin{align}
\label{lepski_interval}
    I(h) := [ \hat{\tau}_{\text{HT}_h} - 2 \hat{\text{SDEV}}(\hat{\tau}_{\text{HT}_h}), \hat{\tau}_{\text{HT}_h} + 2 \hat{\text{SDEV}}(\hat{\tau}_{\text{HT}_h})].
\end{align}
Then, take
\begin{align}
\label{lepski_h}
    \hat{h}_\text{Lepski} := \min \{h \in H : \cap_{h=1}^k I(h) \neq \emptyset \},
\end{align}
and
\begin{align} \label{est:Lepski_HT}
    \hat{\tau}_{\text{LepskiHT}}  = \frac{1}{n} \sum_{i=1}^n \frac{\mathbf{1}\{ Z_i = 1, e_i \geq \hat{h}_\text{Lepski}\}}{\mathbb{P}\{ Z_i = 1, e_i \geq \hat{h}_\text{Lepski}\}} Y_i - \frac{1}{n} \sum_{i=1}^n \frac{\mathbf{1}\{ Z_i = 0, e_i \leq 1 - \hat{h}_\text{Lepski}\}}{\mathbb{P}\{ Z_i = 0, e_i \leq 1 - \hat{h}_\text{Lepski}\}} Y_i
\end{align}

\subsection{An equivalent formulation of the Horvitz-Thompson estimator for exposure threshold $h$} \label{appendix:bandwidth}

We can rewrite the display in \ref{HT_estimator} in the following form:
\begin{equation} \label{HT_estimator_bandwidth}
    \hat{\tau}_h = \frac{1}{n} \sum_{i=1}^n \frac{\mathbf{1}\{| z_i - e_i | \leq h \}}{\mathbb{P}\{ |z_i - e_i | \leq h\}} \Tilde{z}_i Y_i
\end{equation} where $\Tilde{z}_i = 2z_i -1$, so that $\Tilde{z}_i \in \{-1,1 \}.$ It is then clear that the threshold $h$ controls how much dissimilarity between the treatment status of units and their neighbors, we allow in our estimation. This allows us to reframe the problem as an optimal bandwidth selection one.

\subsection{Bias and Variance estimation errors} \label{appendix:bias_var_error}

In this subsection, we write down the estimation errors of the bias and the variance terms in the MSE estimation.

For the bias terms, we have,
\begin{align*}
    \hat{b}(\hat{\tau}_h) - b^*(\hat{\tau}_h) &= \Big(\frac{1}{n} \sum_{i=1}^n \frac{\hat{\gamma}_n (1-e_i)\mathbf{1}\{Z_i = 1, e_i \geq h \}}{\mathbb{P}(Z_i = 1, e_i \geq h)} - \frac{1}{n} \sum_{i=1}^n \sum_{x_i \in X_i} \frac{\mathbf{1}\{x_i \geq h\}}{|x_i: x_i \in X_i \cap x_i \geq h |}\gamma_n^* (1-x_i) \Big)\\
    &+ \Big( \frac{1}{n} \sum_{i=1}^n \frac{\hat{\gamma}_n e_i\mathbf{1}\{Z_i = 0, e_i \leq 1- h \}}{\mathbb{P}(Z_i = 0, e_i \leq 1-h)} - \frac{1}{n} \sum_{i=1}^n \sum_{x_i \in X_i} \frac{\mathbf{1}\{x_i \leq 1-h\}}{|x_i: x_i \in X_i \cap x_i \leq 1-h |}\gamma_n^* (x_i) \Big),
\end{align*} where $\gamma$ is the slope of the best average linear fit, and where $x_i$ ranges over the set $X_i$ of possible fractions of degree $i$.

By an abuse of notation, we use $y_i(h^+)$ to represent the average (across possible exposure fractions for node $i$) potential outcome for unit $i$ with exposures that are at least $h$, while we use $y_i(h^-)$ to represent the average (across possible exposure fractions for node $i$) potential outcome for unit $i$ with exposures that are at most $1-h$.

The true variance is 
\begin{align*}
    v^*(\hat{\tau}_h) &= \sum_{i=1}^n \frac{y_i(h^+)^2}{n^2} \left(\frac{1}{\pi_i^1} - 1\right) \\
    &+ \sum_{i=1}^n \sum_{\mathclap{\substack{j=1\\ j\neq i}}}^n \frac{y_i(h^+)y_j(h^+)}{n^2} \left(\frac{\pi_{ij}^{11}}{\pi_i^1 \pi_j^1} - 1\right) \\
    &+ \sum_{i=1}^n \frac{y_i(h^-)^2}{n^2\pi_i^0} \left(\frac{1}{\pi_i^0} - 1\right)\\
    &+ \sum_{i=1}^n \sum_{\mathclap{\substack{j=1\\j\neq i}}}^n \frac{y_i(h^-)y_j(h^-)}{n^2} \left(\frac{\pi_{ij}^{00}}{\pi_i^0 \pi_j^0} - 1\right) \\
    &- \frac{2}{n^2} \sum_{i=1}^n \sum_{j\in [n]; \pi_{ij}^{10} > 0} y_i(h^+) y_j(h^-)\left(\frac{\pi_{ij}^{10}}{\pi_i^1 \pi_j^0} - 1\right) \\
    &+ \frac{2}{n^2} \sum_{i=1}^n \sum_{j\in [n]; \pi_{ij}^{10} = 0} y_i(h^+)^2y_j(h^-)^2.
\end{align*}
% \begin{align*}
%     v^*(\hat{\tau}_h) &= \sum_{i=1}^n \frac{y_i(h^+)^2}{n^2} (\frac{1}{\pi_i^1} - 1) \\
%     &+ \sum_{i=1}^n \sum_{\mathclap{\substack{j=1\\ j\neq i}}}^n \frac{y_i(h^+)y_j(h^+)}{n^2} (\frac{\pi_{ij}^{11}}{\pi_i^1 \pi_j^1} - 1) \\
%     &+ \sum_{i=1}^n \frac{y_i(h^-)^2}{n^2\pi_i^0} (\frac{1}{\pi_i^0} - 1)\\
%     &+ \sum_{i=1}^n \sum_{\mathclap{\substack{j=1\\j\neq i}}}^n \frac{y_i(h^-)y_j(h^-)}{n^2} (\frac{\pi_{ij}^{00}}{\pi_i^0 \pi_j^0} - 1) \\
%     &- \frac{2}{n^2} \sum_{i=1}^n \sum_{j\in [n]; \pi_{ij}^{10} > 0} y_i(h^+) y_j(h^-)(\frac{\pi_{ij}^{10}}{\pi_i^1 \pi_j^0} - 1 ) \\
%     &+ \frac{2}{n^2} \sum_{i=1}^n \sum_{j\in [n]; \pi_{ij}^{10} = 0} y_i(h^+)^2y_j(h^-)^2.
% \end{align*}

Therefore, from the above and Section \ref{appendix:variance_estimator}, we have that 

\begin{align*}
    &\sup_h \hat{v}(\hat{\tau}_h) - v^*(\hat{\tau}_h) \\
    &= \sup_h \Bigg[
        \Bigg( \sum_{i=1}^n 
            \frac{\mathbf{1}\{z_i = 1, e_i \geq h\}Y_i^2}{n^2\pi_i^1} 
            \left(\frac{1}{\pi_i^1} - 1\right) 
            - \sum_{i=1}^n \frac{y_i(h^+)^2}{n^2} 
            \left(\frac{1}{\pi_i^1} - 1\right) 
        \Bigg) \\
    &\quad + \Bigg( \sum_{i=1}^n \sum_{\substack{j=1\\ j\neq i}}^n 
            \frac{\mathbf{1}\{z_i = 1, e_i \geq h\}\mathbf{1}\{z_j = 1, e_j \geq h\}Y_iY_j}{n^2\pi_{ij}^{11}} 
            \left(\frac{\pi_{ij}^{11}}{\pi_i^1 \pi_j^1} - 1\right) \\
    &\quad\quad - \sum_{i=1}^n \sum_{\substack{j=1\\ j\neq i}}^n 
            \frac{y_i(h^+)y_j(h^+)}{n^2} 
            \left(\frac{\pi_{ij}^{11}}{\pi_i^1 \pi_j^1} - 1\right) 
        \Bigg) \\
    &\quad + \Bigg( \sum_{i=1}^n 
            \frac{\mathbf{1}\{z_i = 0, e_i \leq 1-h\}Y_i^2}{n^2\pi_i^0} 
            \left(\frac{1}{\pi_i^0} - 1\right) 
            - \sum_{i=1}^n \frac{y_i(h^-)^2}{n^2\pi_i^0} 
            \left(\frac{1}{\pi_i^0} - 1\right) 
        \Bigg) \\
    &\quad + \Bigg( \sum_{i=1}^n \sum_{\substack{j=1\\ j\neq i}}^n 
            \frac{\mathbf{1}\{z_i = 0, e_i \leq 1-h\}\mathbf{1}\{z_j = 0, e_j \leq 1-h\}Y_iY_j}{n^2\pi_{ij}^{00}} 
            \left(\frac{\pi_{ij}^{00}}{\pi_i^0 \pi_j^0} - 1\right) \\
    &\quad\quad - \sum_{i=1}^n \sum_{\substack{j=1\\ j\neq i}}^n 
            \frac{y_i(h^-)y_j(h^-)}{n^2} 
            \left(\frac{\pi_{ij}^{00}}{\pi_i^0 \pi_j^0} - 1\right) 
        \Bigg) \\
    &\quad - \frac{2}{n^2} \sum_{i=1}^n \sum_{\substack{j\in [n]\\ \pi_{ij}^{10} > 0}} 
            \Bigg( 
                \mathbf{1}\{z_i = 1, e_i \geq h\}\mathbf{1}\{z_j = 0, e_j \leq 1-h\}Y_i Y_j 
                \left(\frac{1}{\pi_i^1 \pi_j^0} - \frac{1}{\pi_{ij}^{10}}\right) \\
    &\quad\quad + y_i(h^+) y_j(h^-) 
            \left(\frac{\pi_{ij}^{10}}{\pi_i^1 \pi_j^0} - 1\right) 
        \Bigg) \\
    &\quad + \frac{2}{n^2} \sum_{i=1}^n \sum_{\substack{j\in [n]\\ \pi_{ij}^{10} = 0}} 
            \Bigg( 
                \frac{\mathbf{1}\{z_i = 1, e_i \geq h\}Y_i^2}{2\pi_i^1} 
                + \frac{\mathbf{1}\{z_j = 0, e_j \leq 1-h\}Y_j^2}{2\pi_j^0} \\
    &\quad\quad - y_i(h^+)^2 y_j(h^-)^2 
        \Bigg) 
    \Bigg].
\end{align*}

\subsection{Proofs to Theorem \ref{main_thm} and Corollary \ref{corollary:approx_error}}

\subsubsection{Proof to Theorem \ref{main_thm}}

\begin{proof}[Proof to Theorem \ref{main_thm}] \label{appendix:proof_main}
We first consider the variance terms. Define $\Bar{v} = \mathbb{E}[\hat{v}].$ We have that,
\begin{align*}
    \mathbb{P} (| \hat{v} - v^* | > \Delta_n/4) &=  \mathbb{P} (| \hat{v} - \Bar{v} + \Bar{v} - v^*| > \Delta_n/4)\\
    &\overset{(a)}{=} \mathbb{P} (| \hat{v} - \Bar{v} | > \Delta_n/4 - c d_{max}^2/n ) +   \\
    &\overset{(b)}{\leq} 6\exp{\left( - \frac{C n \left(\frac{\Delta_n}{4} - \frac{c d_{max}^2 }{n}\right)}{\sqrt{A_{n,2}} + \sqrt{\left(\frac{\Delta}{4} - \frac{c d_{max}^2}{n} \right)M_{n,2}}}\right)}
\end{align*} where $A_{n,p} \leq 16 \| \Tilde{v} \|_{L_1}^2\{c_1 + \frac{(\log n)^4}{n}\}^2$, for some constant $c_1$ and $\Tilde{v}$ is the Fourier transform of the variance summands, and $M_{n,2} = 4 \| \Tilde{v} \|_{L_1} (\log n)^4.$ 

The equality $(a)$ is obtained as a result of Condition \ref{cond:bounded_first}. Indeed, we know that $\Bar{v} - v^*$ is non-zero only when there exists $i,j$ such that $\pi_{ij}(h^+, h^-) = 0$, i.e. the joint exposure assignment probability of $e_i \geq h$, and $e_i \leq 1-h$. Since the  contribution of these to $\Bar{v} - v^*$ are bounded above by $c d_{max}^2/n$ for some constant $c$, we get $(a)$. The inequality $(b)$ is obtained from a direct application of \citep[Theorem 2.1]{shen2020exponential}, as it easy to see that our exposure dependencies satisfy $\alpha$-mixing conditions and that together with Condition \ref{cond:positivity}, the variance summands satisfy the Fourier transform conditions of \citep[Theorem 2.1]{shen2020exponential}.

Next, we consider the bias terms, for any constant $J$,
\begin{align*}
    \mathbb{P} ( (\hat{\gamma}_n - \gamma_n^*)^2 > \Delta_n/J) \leq \exp{\left( -\frac{\Delta_n n p(1-p) }{J c d_{\max} + 1 }\right)},
\end{align*} for some constant $c$.
We use the results from \citep[Theorem 3.1]{ziemann2024noise}. Below, we verify that the conditions are met in our setting. We begin with the condition that for every $v \in \mathbb{R}$ such that $v = 1/(\mathbb{E}[e_i^2])^{1/2}$, there exists $h \in \mathbb{R}^+$ such that, $\mathbb{E}[(ve_i)^4] \leq h^2 v^2 \mathbb{E}[e_i^2]$. This is trivially satisfied in our setting since our exposures is bounded. Next, we consider the first part of %Condition \ref{cond_dependent:two_a} 
\citep[Condition (3.3)]{ziemann2024noise}. This is also trivially satisfied in our setting since our block sizes ($ = d_{max}^2)$ are bounded by Assumption \ref{cond:bounded_first}. The second part of 
%Condition \ref{cond_dependent:two_b} 
\citep[Condition (3.3)]{ziemann2024noise} is also satisfied since our de-meaned noise-class interaction variables (as defined in \citep[Eq.(2.5)]{ziemann2024noise}) are sub-gaussian by assumption of independence between the potential outcomes' residuals and the treatment design. Next, \citep[Condition (3.4)]{ziemann2024noise}
%Condition \ref{cond_dependent:three} 
is also trivially satisfied since we take our blocks to be of size $d_{max}^2$ almost, with the possible exception of the final remaining block, uniformly. Finally, it is easy to see that \citep[Condition (3.5)]{ziemann2024noise} is also satisfied since the exposure variables are independent outside of the radius $d_{max}^2$, since exposure variables $e_i$ and $e_j$ for any $i, j \in [n]$ are only dependent when nodes $i$ and $j$ share a neighbor. We quickly note that the monotone partitioning (as described in \citep[Theorem 3.1]{ziemann2024noise} is not necessary in our setting, as the ``blocking" procedure introduced by \citep{yu1994rates} holds more generally.

For convenience, in the following we write $b^*,\hat{b}$ to mean $b^*(\hat{\tau}_h),\hat{b}(\hat{\tau}_h)$, and similarly $v^*,\hat{v}$ to mean $v^*(\hat{\tau}_h),\hat{v}(\hat{\tau}_h)$.  D is such that $D \geq |H|$, where $H$ the set of exposures. Therefore, putting these together by a union bound,
% \begin{align*}
%     \mathbb{P}\{\hat{h} = h^*\} &\leq \sum_{h} \mathbb{P}\{| \hat{M}_n(h) - M_n^*(h) | > \Delta / 2\} \\
%     &\leq D \mathbb{P}\{| \hat{b}^2 + \hat{v} - {b^*}^2 - v^* | > \Delta / 2\} \\
%     &\leq D (\mathbb{P}\{| \hat{b}^2 - {b^*}^2| > \Delta / 4\} + \mathbb{P}\{| \hat{v} - v^* | > \Delta / 4\}) \\
%     &\leq D (\mathbb{P}\{(\hat{b} - b^*)^2 > \Delta / 8\} + \mathbb{P}\{(\hat{b} - b^*)^2 > \Delta^2 /(16 B)^2\}  + \mathbb{P}\{| \hat{v} - v^* | > \Delta / 4\} ) \\
%     &\leq D (\mathbb{P}\{(\hat{\gamma} - \gamma^*)^2/n > \Delta / 8\} + \mathbb{P}\{(\hat{\gamma} - \gamma^*)^2/n > \Delta^2 /(16 B)^2\}  + \mathbb{P}\{| \hat{v} - v^* | > \Delta / 4\}) \\
%     &\leq 3D \max \left\{\exp{\left( \frac{- \Delta n \| \sqrt{\Sigma_X} \|_F^2}{J \sigma^2} + \text{edim}(\Sigma) \right)} , \exp{\left( \frac{- \Delta n \| \sqrt{\Sigma_X} \|_F^2}{J \sigma^2} + \text{edim}(\Sigma) \right)} ,  6\exp{\left( - \frac{C n \left(\frac{\Delta}{4} - \frac{c d_{max}^2 }{n}\right)}{\sqrt{A_{n,2}} + \sqrt{\left(\frac{\Delta}{4} - \frac{c d_{max}^2}{n} \right)M_{n,2}}}\right)} \right\}.
% \end{align*}
\begin{align*}
    \mathbb{P}\{\hat{h} = h^*\} 
    &\leq \sum_{h} \mathbb{P}\{| \hat{M}_n(h) - M_n^*(h) | > \Delta_n/ 2\} \\
    &\leq D \mathbb{P}\{| \hat{b}^2 + \hat{v} - {b^*}^2 - v^* | > \Delta_n / 2\} \\
    &\leq D \big(\mathbb{P}\{| \hat{b}^2 - {b^*}^2| > \Delta_n / 4\} + \mathbb{P}\{| \hat{v} - v^* | > \Delta_n / 4\}\big) \\
    &\leq D \big(\mathbb{P}\{(\hat{b} - b^*)^2 > \Delta_n / 8\} + \mathbb{P}\{(\hat{b} - b^*)^2 > \Delta_n^2 /(16 U_n)^2\} \\
    &\quad + \mathbb{P}\{| \hat{v} - v^* | > \Delta_n / 4\} \big) \\
    &\leq D \big(\mathbb{P}\{(\hat{\gamma}_n - \gamma_n^*)^2 > \Delta_n / 8\} + \mathbb{P}\{(\hat{\gamma}_n - \gamma_n^*)^2 > \Delta_n^2 /(16 U_n)^2\} \\
    &\quad + \mathbb{P}\{| \hat{v} - v^* | > \Delta_n / 4\}\big) \\
    &\leq 3D \max \bigg\{
        \exp\bigg( -\frac{\Delta_n n p(1-p)}{8 c d_{\max}} + 1 \bigg), \\
    &\quad \exp\bigg( -\frac{ \Delta_n^2 n p(1-p)}{(16U_n)^2 c d_{\max}} + 1 \bigg), \\
    &\quad 6\exp\bigg( - \frac{C n \big(\frac{\Delta_n}{4} - \frac{c d_{max}^2 }{n}\big)}{
        \sqrt{A_{n,2}} + \sqrt{\big(\frac{\Delta_n}{4} - \frac{c d_{max}^2}{n} \big)M_{n,2}}
    }\bigg)
    \bigg\}.
\end{align*}

The proof for the cluster-level Bernoulli randomization follows immediately.

\end{proof}

\subsubsection{Proof to Corollary \ref{corollary:approx_error}}
\begin{proof}[Proof of Corollary \ref{corollary:approx_error}]
   Suppose $ \sup_{e_i}| f(e_i) - \gamma^*| \leq \delta$. Let $\Tilde{\delta}:=16 \delta^2 + 8 \delta U_n^*$ . We begin by considering the difference between the MSE under the true best average linear fit and the true MSE:
   \begin{align*}
       | M_n^*(h) - M_n^{**}(h)| &= | {b_n^*}^2(h) - {b_n^{**}}^2(h)| \\
       &\leq | (b_n^*(h) - b_n^{**}(h))(b_n^*(h) - b_n^{**}(h))| \\
    &\leq | (b_n^*(h) - b_n^{**}(h))\big((b_n^*(h)- b_n^{**}(h)) + 2b_n^{**}(h) \big)| \\
    &\leq \big((b_n^*(h) - b_n^{**}(h))\big)^2 + 2 |b_n^*(h)- b_n^{**}(h)| U_n^*\\
    &\leq (4\delta)^2 + 2  U_n^* (4 \delta) \equiv \Tilde{\delta}.
   \end{align*}
   This gives us,
    \begin{align*}
        &\mathbb{P}\{ \hat{h}_n \neq  \hat{h}_n^{**} \} \\
        &\leq \sum_{h \in H} \mathbb{P}\{ |\hat{M}_n(h) -  M_n^{**}(h)| > \Delta/2 \}\\
        &= \sum_{h \in H} \mathbb{P}\{ |\hat{M}_n(h) -  M_n^{*}(h) + M_n^{*}(h) - M_n^{**}(h)| > \Delta/2 \}\\
        &\leq \sum_{h \in H} \mathbb{P}\{ |\hat{M}_n(h) -  M_n^{*}(h) | + |M_n^{*}(h) - M_n^{**}(h)| > \Delta/2 \}\\
        &\leq \sum_{h \in H} \mathbb{P}\{ |\hat{M}_n(h) -  M_n^{*}(h) | > \Delta/2 - \Tilde{\delta} \}
\end{align*} where the second inequality is obtained by applying a triangle inequality.

Finally, applying Theorem \ref{main_thm} gives us the result.
\end{proof}

% \begin{proof}[Proof to Corollary to Theorem \ref{theorem:ols}] \label{appendix:proof_cycle_theory}
%         For the $k$th-power cycle graph of $n$ nodes, under a noisy linear outcome model, i.e., where $g_0(z_i) = \beta z_i$, and $f_0(e_i) = \gamma_0 e_i$, and $\epsilon_i \sim \mathcal{N}(0, 1)$ for all $i \in [n]$,
% \begin{align*}
%       \mathbb{P}\{ | M_n(h) = M_n(h^*) | \} \geq 1- 3(2k+1) C\exp{\left(-\frac{n}{C_1(16 K^2 + 2 \sigma^2)} \cdot \left(\frac{\phi^2 \Delta}{12 \gamma_0 \hat{F}} - C_2\right) \right)},
% \end{align*} for some constants $C, C_1, C_2$, and variance $\phi^2$ of $e_i$.
% \end{proof}

\subsection{Simulations}
\label{appendix:simulations_technical}

\subsubsection{Experimental details on the Amazon product similarity graph data}

The simulations were run on a CPU. Our experiments focus on a subset of the 19828-node DVD graph. In particular, we considered the 17924-node subgraph by removing all isolated nodes. To compute the exposure probabilities, we used $10^6$ simulations. We then selected the first 1000 nodes of the 19828 to analyze. 200 replicates were run, generating random treatment assignments and the corresponding outcomes. For each of the replicates, 1000 separate simulation runs were generated to compute the oracle MSEs. In total, this took approximately 2 hours. The graph data is available at: https://snap.stanford.edu/data/amazon-meta.html

\subsubsection{Experimental details on simulated graphs}
All synthetic graph simulations were run on a machine of Intel® Xeon® processors with 48 CPU cores, and 50GB of RAM. We simulated 1000 replicates, generating random treatment assignments and the corresponding outcomes, with each oracle MSE computed using 1000 separate simulation runs. The exposure probabilities under each threshold were computed as proposed in \citep{aronow2017estimating} using 1000 simulation iterations. In total, this took approximately 30 minutes on average (across the different potential outcomes, and graph settings). 

% The code is available at: https://anonymous.4open.science/r/AdaThresh-DDEE/

\subsubsection{Simulations for non-linear potential outcomes models} \label{appendix:non-linear}
We investigate the robustness of our estimator to potential outcomes models that are non-linear in the exposure. In particular, we consider simulations from the sigmoid, and sine exposure functions. In Figure \ref{fig:HT_global_sine}, we display the performance of our estimator under a sigmoid (left) and sine (right) interference function, respectively. Our adaptive threshold Horvitz-Thompson estimator (AdaThresh) improves upon other existing Horvitz-Thompson estimators. 

\begin{figure}
\centering
\includegraphics[width=\linewidth]{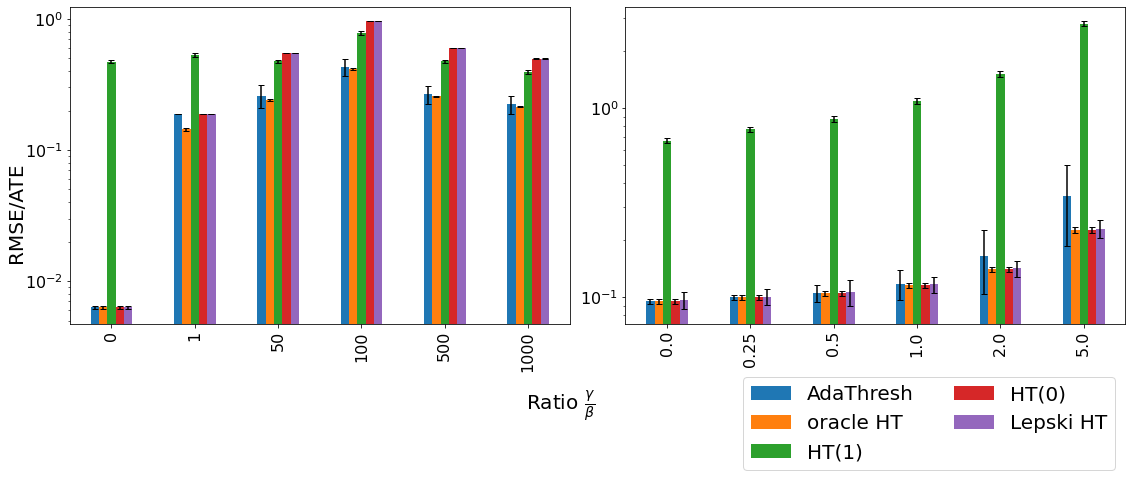}
 \caption{RMSE (normalized by the ATE) induced by the different Horvitz-Thompson estimators. Left: 2nd-power cycle graph under unit-level Ber(0.5) randomization under sigmoid $f(e_i) =\gamma((1 + \exp{(-e_i)}))$. Right: 2nd-power cycle graph under unit-level Ber(0.5) randomization with $f(e_i) = \gamma(1-\sin(\pi\cdot e_i)))$ being a sine function. We focus on varying the ratio $\gamma/\beta$ as we consider a fixed graph, with $n=1000.$ The error bars are two times the standard deviation.}
 \label{fig:HT_global_sine}
\end{figure}

\subsubsection{Simulations for Difference-in-Means estimator with adaptive exposure thresholds}
\label{appendix:difference-in-means}
We consider our approach using the Difference-in-Means estimator incorporating exposure thresholds:
    \begin{align} \label{est:DiM}
    \hat{\tau}_{\text{DiM}_h} = \frac{\sum_{i=1}^n \mathbf{1}\{ Z_i = 1, e_i\geq h\} Y_i}{\sum_{i=1}^n \mathbf{1}\{ Z_i = 1, e_i \geq h\}}  - \frac{\sum_{i=1}^n \mathbf{1}\{ Z_i = 0, e_i \leq 1-h\} Y_i}{\sum_{i=1}^n \mathbf{1}\{ Z_i = 0, e_i \leq 1-h\}}. 
\end{align}

We use the following bias estimator:
\begin{align*}
     \hat{b}(\hat{\tau}_h) = \sum_{i=1}^n \frac{(1-e_i) \hat{\gamma}_n\mathbf{1}\{ Z_i = 1, e_i \geq h\}}{\sum_{i=1}^n \mathbf{1}\{ Z_i = 1, e_i \geq h\}} + \sum_{i=1}^n \frac{e_i \hat{\gamma}_n\mathbf{1}\{ Z_i = 0, e_i \leq  1- h\}}{\sum_{i=1}^n \mathbf{1}\{ Z_i = 0, e_i \leq 1-h\}},
\end{align*} where $\hat{\gamma}$ is the linear regression coefficient for the exposure variable.

We use the following variance estimator, decomposing it into its treatment and control parts:
\begin{align*}
    \hat{v}(\hat{\tau}) = \frac{2}{n-1}\left( \hat{v}_T + \hat{v}_C \right)
\end{align*} where 
\begin{align*}
    \hat{v}_T = \frac{1}{n_1} \sum_{i=1}^n \left( Y_i \mathbf{1}\{ Z_i = 1, e_i \geq h\} - \frac{1}{n_1} \sum_{i=1}^n Y_i \mathbf{1}\{ Z_i = 1, e_i \geq h\} \right)^2
\end{align*} where $n_1 := \sum_{i=1}^n \mathbf{1}\{ Z_i = 1, e_i \geq h\} $, and 
\begin{align*}
    \hat{v}_C = \frac{1}{n_0} \sum_{i=1}^n \left( Y_i \mathbf{1}\{ Z_i = 0, e_i \leq  1-h\} - \frac{1}{n_0} \sum_{i=1}^n Y_i \mathbf{1}\{ Z_i = 0, e_i \leq 1-h\} \right)^2
\end{align*} where $n_0 := \sum_{i=1}^n \mathbf{1}\{ Z_i = 0, e_i \leq 1-h\} $.

We compare the performance of our adaptive estimator to the vanilla difference-in-means estimator, the difference-in-means analogue of the vanilla Horvitz-Thompson estimator, and the difference-in-means estimator with a threshold plugin via Lepski's method. We write these out below:

\begin{itemize}
    \item vanilla difference-in-means estimator
    \begin{align} \label{est:vanilla_DiM}
    \hat{\tau}_{\text{DiM}_0} = \frac{\sum_{i=1}^n \mathbf{1}\{ Z_i = 1\} Y_i}{\sum_{i=1}^n \mathbf{1}\{ Z_i = 1\}}  - \frac{\sum_{i=1}^n \mathbf{1}\{ Z_i = 0\} Y_i}{\sum_{i=1}^n \mathbf{1}\{ Z_i = 0\}} 
\end{align}
    \item difference-in-means estimator at threshold $h=1$
    \begin{align} \label{est:DiM_thresh1}
    \hat{\tau}_{\text{DiM}_1} = \frac{\sum_{i=1}^n \mathbf{1}\{ Z_i = 1, e_i = 1\} Y_i}{\sum_{i=1}^n \mathbf{1}\{ Z_i = 1, e_i = 1\}}  - \frac{\sum_{i=1}^n \mathbf{1}\{ Z_i = 0, e_i = 0\} Y_i}{\sum_{i=1}^n \mathbf{1}\{ Z_i = 0, e_i = 0\}} 
    \end{align}
    \item difference-in-means estimator at threshold $\hat{h}_{\text{Lepski}}$ where
    \begin{align}
    \label{lepski_h_dim}
        \hat{h}_\text{Lepski} := \min \{h \in H : \cap_{h=1}^k I(h) \neq \emptyset \},
    \end{align} with
    \begin{align}
    \label{lepski_interval_dim}
        I(h) := [ \hat{\tau}_{\text{DIM}_h} - 2 \hat{\text{SDEV}}(\hat{\tau}_{\text{DIM}_h}), \hat{\tau}_{\text{DIM}_h} + 2 \hat{\text{SDEV}}(\hat{\tau}_{\text{DIM}_h})],
    \end{align} and
    \begin{align} \label{est:Lepski_dim}
        \hat{\tau}_{\text{LepskiDiM}}  = \sum_{i=1}^n \frac{\mathbf{1}\{ Z_i = 1, e_i \geq \hat{h}_\text{Lepski}\}}{\sum_{i=1}^n \mathbf{1}\{ Z_i = 1, e_i \geq\hat{h}_\text{Lepski}\}} Y_i - \sum_{i=1}^n \frac{\mathbf{1}\{ Z_i = 0, e_i \leq 1 - \hat{h}_\text{Lepski}\}}{\sum_{i=1}^n \mathbf{1}\{ Z_i = 0, e_i \leq 1 - \hat{h}_\text{Lepski}\}} Y_i
    \end{align}
\end{itemize}
In Figures \ref{fig:GLR_DIM} and \ref{fig:DIM_global_sine}, we display the performance of our adaptive threshold (AdaThresh) Difference-in-Means estimators in comparison to other estimators. AdaThresh improves upon fixed threshold Difference-in-Means estimators.

\begin{figure}
\includegraphics[width=\textwidth]{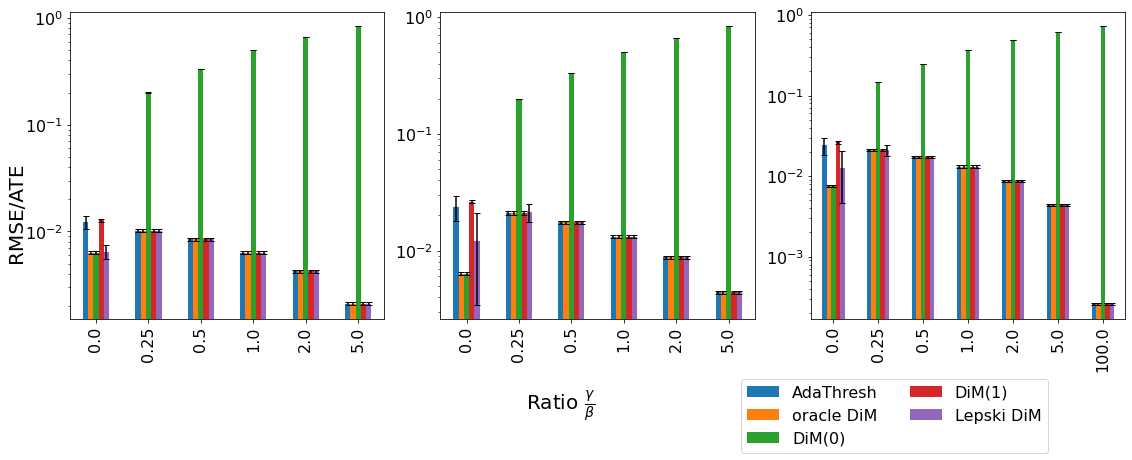}
\caption{RMSE (normalized by the ATE) under the linear model with $\alpha =10$, $g(z_i) = \beta z_i = 10z_i$, $f(e_i) =\gamma e_i$, and fixed $\epsilon_i$ generated from $\mathcal{N}(0,1),$ induced by the different Difference-in-Means estimators. We focus on varying the ratio $\gamma/\beta$ as we consider a fixed graph. Left: Cycle graph under unit-level Ber(0.5) randomization. Center: 2nd-power cycle graph under unit-level Ber(0.5) randomization. Right: 2nd-power cycle graph under cluster-level Ber(0.5) randomization with cluster sizes 5 (=2k + 1). The error bars are two times the standard deviation.}
\label{fig:GLR_DIM}
\end{figure}

\begin{figure}
\centering
\includegraphics[width=\textwidth]{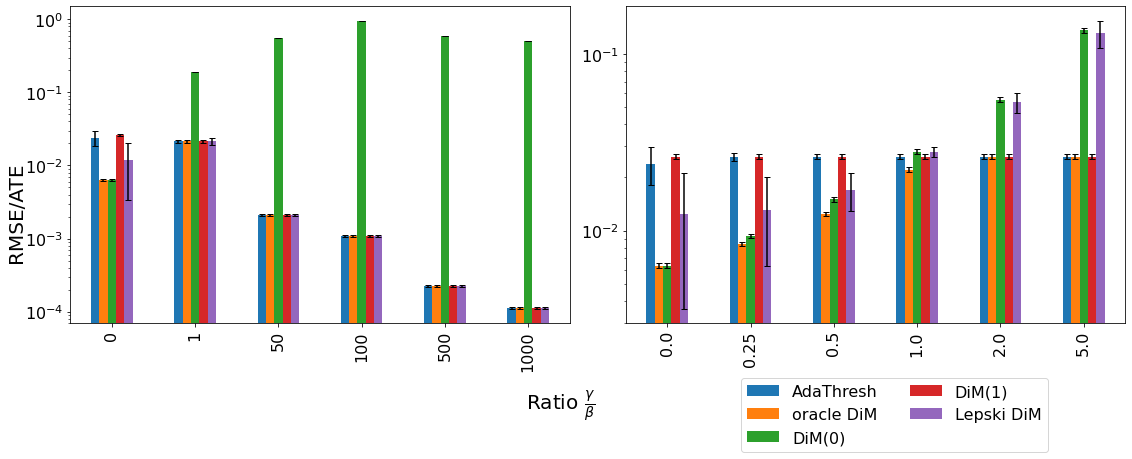}
\caption{RMSE (normalized by the ATE) induced by the different Difference-in-Means estimators. Left: 2nd-power cycle graph under unit-level Ber(0.5) randomization under  sigmoid $f(e_i) =\gamma(1 + \exp{(-e_i)}))$.  Right: 2nd-power cycle graph under unit-level Ber(0.5) randomization with $f(e_i) = \gamma(1-\sin(\pi\cdot e_i)))$ being a sine function. We focus on varying the ratio $\gamma/\beta$ as we consider a fixed graph, with $n=1000.$ The error bars are two times the standard deviation.}
\label{fig:DIM_global_sine}
\end{figure}

\subsubsection{Simulations using local linear regression}
\label{appendix:localLR}

We extend our global linear regression approach to a local linear regression one to estimate the rate of change of the bias. We write the new bias estimators for the Horvitz-Thompson and Difference-in-Means estimators. We illustrate the performance of the local linear regression in the settings above, as well as settings that significantly deviates from linearity.

We write the bias estimator for the Horvitz-Thompson estimator as:
 \begin{align*}
     \hat{b}(\hat{\tau}_h) = \frac{1}{n} \sum_{i=1}^n \frac{(1-e_i) \hat{\gamma}_n^{(h)}\mathbf{1}\{ Z_i = 1, e_i \geq h\}}{\mathbb{P}\{ Z_i = 1, e_i \geq h\}} + \frac{1}{n} \sum_{i=1}^n \frac{e_i \hat{\gamma}_n^{(1-h)}\mathbf{1}\{ Z_i = 0, e_i \leq  1- h\}}{\mathbb{P}\{ Z_i = 0, e_i \leq 1- h\}},
 \end{align*} where $\hat{\gamma}_n^{(h)}$ and $\hat{\gamma}_n^{(1-h)}$ are the local linear regression coefficients for the exposure variable in the intervals $[h,1]$, and $[0,1-h]$, respectively.

Similarly, write the bias estimator for the Difference-in-Means estimator as:
\begin{align*}
     \hat{b}(\hat{\tau}_h) = \sum_{i=1}^n \frac{(1-e_i) \hat{\gamma}_n^{(h)}\mathbf{1}\{ Z_i = 1, e_i \geq h\}}{\sum_{i=1}^n \mathbf{1}\{ Z_i = 1, e_i \geq h\}} + \sum_{i=1}^n \frac{e_i \hat{\gamma}_n^{(1-h)}\mathbf{1}\{ Z_i = 0, e_i \leq  1- h\}}{\sum_{i=1}^n \mathbf{1}\{ Z_i = 0, e_i \leq 1-h\}},
\end{align*} where $\hat{\gamma}_n^{(h)}$ and $\hat{\gamma}_n^{(1-h)}$ are the local linear regression coefficients for the exposure variable in the intervals $[h,1]$, and $[0,1-h]$, respectively.

\begin{figure}
\centering
\includegraphics[width=\linewidth]{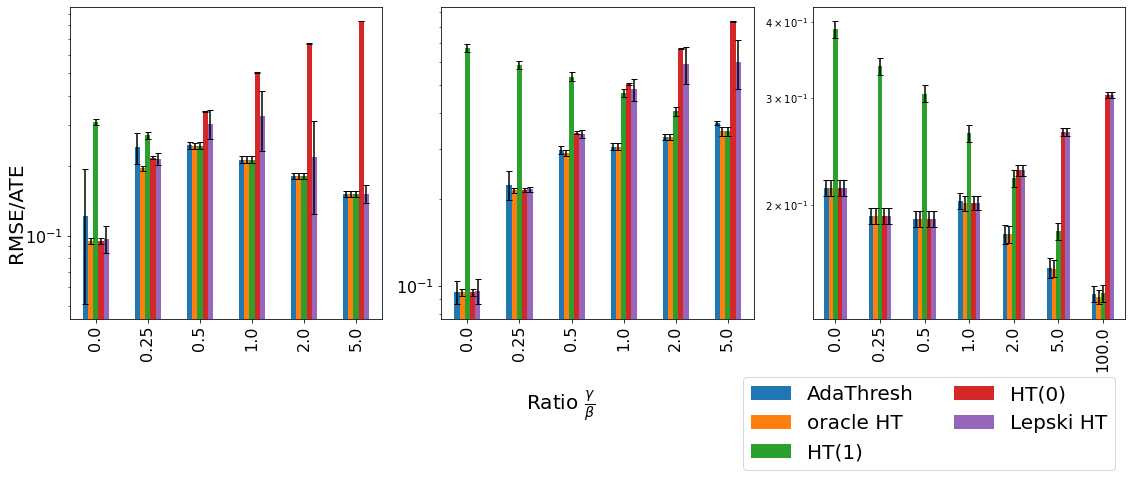}
\caption{RMSE (normalized by the ATE) under the linear model with $\alpha =10$, $g(z_i) = \beta z_i = 10z_i$, $f(e_i) =\gamma e_i$, and fixed $\epsilon_i$ generated from $\mathcal{N}(0,1),$ induced by the different (local) Horvitz-Thompson estimators. We focus on varying the ratio $\gamma/\beta$ as we consider a fixed graph, with $n=1000.$ Left: Cycle graph under unit-level Ber(0.5) randomization. Center: 2nd-power cycle graph under unit-level Ber(0.5) randomization. Right: 2nd-power cycle graph under cluster-level Ber(0.5) randomization with cluster sizes 5 (=2k + 1). The error bars are two times the standard deviation.}
\label{fig:HT_local}
\end{figure}

\begin{figure}
\centering
\includegraphics[width=\linewidth]{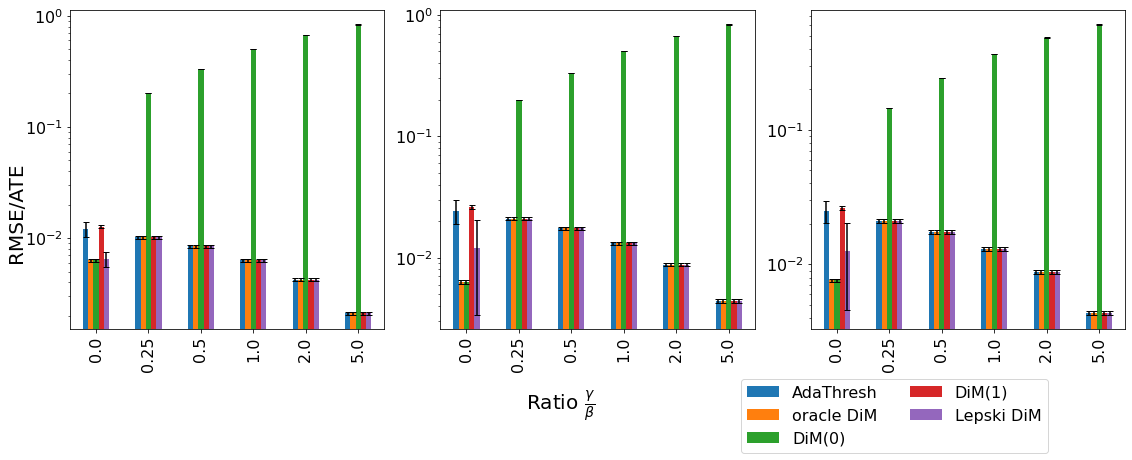}
\caption{RMSE (normalized by the ATE) under the linear model with $\alpha =10$, $g(z_i) = \beta z_i = 10z_i$, $f(e_i) =\gamma e_i$, and fixed $\epsilon_i$ generated from $\mathcal{N}(0,1),$ induced by the different (local) Difference-in-Means estimators. We focus on varying the ratio $\gamma/\beta$ as we consider a fixed graph, with $n=1000.$ Left: Cycle graph under unit-level Ber(0.5) randomization. Center: 2nd-power cycle graph under unit-level Ber(0.5) randomization. Right: 2nd-power cycle graph under cluster-level Ber(0.5) randomization with cluster sizes 5 (=2k + 1). The error bars are two times the standard deviation.}
\label{fig:DIM_local}
\end{figure}

Figures \ref{fig:HT_local}, \ref{fig:HT_local_sine}, \ref{fig:DIM_local}, and \ref{fig:DIM_local_sine}, display how local linear regression bias estimates perform.

\begin{figure}
\centering
\includegraphics[width=\linewidth]{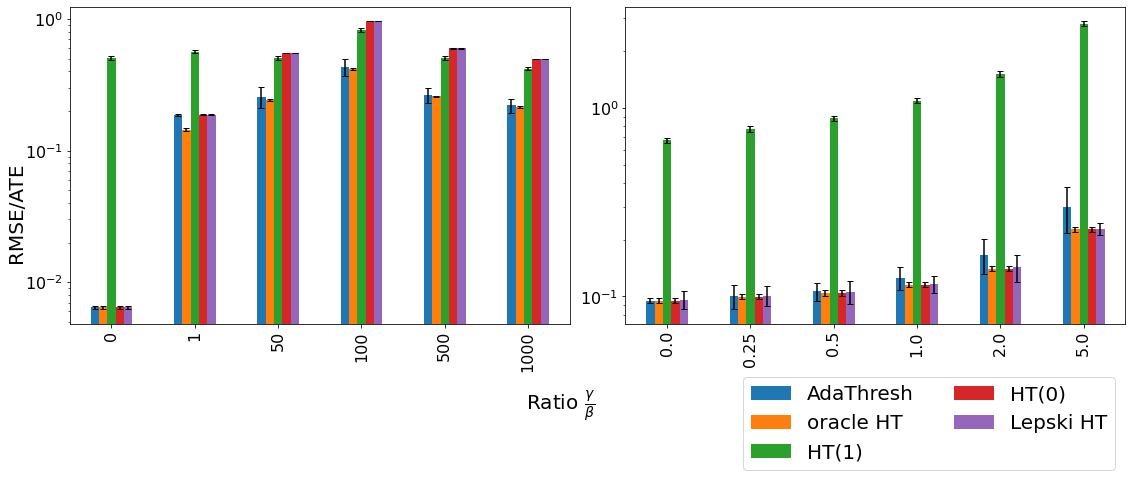}
\caption{RMSE (normalized by the ATE) under the linear model with $\alpha =10$, $g(z_i) = \beta z_i = 10z_i$, and fixed $\epsilon_i$ generated from $\mathcal{N}(0,1)$, induced by the different (local) Horvitz-Thompson estimators. Left: 2nd-power cycle graph under unit-level Ber(0.5) randomization under sigmoid $f(e_i) =\gamma(1 + \exp{(-e_i)}))$. Right: 2nd-power cycle graph under unit-level Ber(0.5) randomization with $f(e_i) = \gamma(1-\sin(\pi\cdot e_i)))$ being a sine function. We focus on varying the ratio $\gamma/\beta$ as we consider a fixed graph, with $n=1000.$ The error bars are two times the standard deviation.}
\label{fig:HT_local_sine}
 \end{figure}

\begin{figure}
\centering
\includegraphics[width=\textwidth]{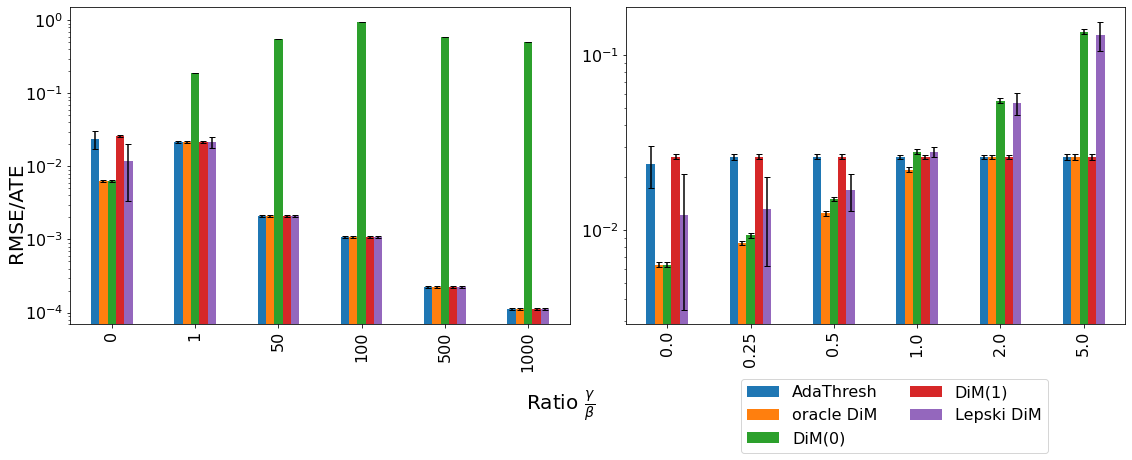}
\caption{RMSE (normalized by the ATE) induced by the different (local) Difference-in-Means estimators. Left: 2nd-power cycle graph under unit-level Ber(0.5) randomization under  sigmoid $f(e_i) =\gamma(1 + \exp{(-e_i)}))$. Right: 2nd-power cycle graph under unit-level Ber(0.5) randomization with $f(e_i) = \gamma(1-\sin(\pi\cdot e_i))$ being a sine function. We focus on varying the ratio $\gamma/\beta$ as we consider a fixed graph, with $n=1000.$ The error bars are two times the standard deviation.}
\label{fig:DIM_local_sine}
\end{figure}

\subsection{More on Tradeoffs in Circulant graphs: Cycles, and $k$th power Cycle} \label{circulant_more}

We further consider cluster-level Bernoulli($p$) randomization, with clusters of size $2k + 1$, in the cycle graph and the $k$th-power cycle graphs. We say a graph is a $k$th power-Cycle graph if there exists an edge between each node and $2k$ of its nearest neighbors \citep{ugander2013graph}. 

% \begin{proposition}[Absolute bias in k-th power cycle graphs under unit-randomization]
% \label{bias_unit}
%     When $p = 1/2$ and the potential outcome model is simply linear, i.e. $Y_i = \alpha + \beta z_i +\frac{\gamma}{\sqrt{n}} e_i$, the absolute bias of the Horvitz-Thompson estimator for a given threshold $h \equiv l/2k$ for $l=0,2,...,2k$, in the $k$th-power cycle graph, under unit-level randomization, is equal to $\frac{\gamma}{\sqrt{n}}\times \left[ \frac{\sum_{r=l}^{2k}\left( r/k - 1\right)\binom{2k}{r}}{\sum_{r=l}^{2k} \binom{2k}{r}} - 1\right].$
% \end{proposition}

% \begin{proposition}[Variance in k-th power cycle graphs under unit-randomization]
% \label{var_unit}
%     When $p = 1/2$ and the potential outcome model is simply linear, i.e. $Y_i = \alpha + \beta z_i + \frac{\gamma}{\sqrt{n}} e_i$, the variance of the Horvitz-Thompson estimator for a given threshold $h$, in the $k$-th power cycle graph, under unit-level randomization is proportional to 
%     \begin{align*}
%     \mathcal{O} \left(\frac{1}{np^d}\left[(\alpha + \beta + \frac{\gamma}{\sqrt{n}}dh )^2 + (\alpha + \beta + \frac{\gamma}{\sqrt{n}} d(1-h))^2 - 2 \frac{\gamma}{\sqrt{n}} h(1-h)d \right]\right).
%     \end{align*}
% \end{proposition}

\begin{figure}
     \centering
     \begin{subfigure}[b]{0.3\linewidth}
         \centering
         \includegraphics[height=0.8\textwidth, width=\textwidth]{figures/cycle_random_mathe.png}
         \label{fig:cycle_graph}
     \end{subfigure}
    \hfill
     \begin{subfigure}[b]{0.3\textwidth}
         \centering
         \includegraphics[height=0.8\textwidth, width=\textwidth]{figures/knn_unit_randomization_12.png}
         \label{fig:3rd_power_cycle}
     \end{subfigure}
     \hfill
     \begin{subfigure}[b]{0.3\textwidth}
         \centering
         \includegraphics[height=0.8\textwidth, width=\textwidth]{figures/kNN2_graph_2clusters.png}
         \label{fig:2nd_power_cycle}
     \end{subfigure}
     \caption{Circulant graphs with unit-level randomization and cluster-level randomization. Blue and red nodes represent treated and control units, respectively. Left: cycle graph with Ber(0.5) unit  randomization. Center: 3rd-power cycle graph with Benoulli(0.5) unit randomization. Right: 2nd-power cycle graph with Bernoulli(0.5) cluster randomization, with clusters of size 5 (=2k + 1).}
     \label{fig:Circulant_graphs2}
\end{figure}
% Therefore, the optimal threshold $h^*$ depends on $\gamma$, $\beta$, $n$, $p$, $d$, the graph structure, and the number of other candidate thresholds. In particular, we note the following general regimes: 
% \begin{enumerate}
% \item when $\gamma \gg \beta$, then $\text{Bias}(\hat{\tau}_h)^2 \gg \text{Var}(\hat{\tau}_h)$ for large enough $n$, i.e. $n \gg p^d$. Then, $h^*$ would be higher (closer to one).
% \item when $\gamma \ll \beta$, then $\text{Bias}(\hat{\tau}_h)^2 \ll \text{Var}(\hat{\tau}_h)$ for not so large $n$ with fixed $d$, i.e. $n \approx p^d$, and $\text{Bias}(\hat{\tau}_h)^2 \gg \text{Var}(\hat{\tau}_h)$  for larger $n$. Therefore, we expect that $h^*$ decreases as the MSE transitions from the former to the latter regime.
% \item when $\gamma \approx \beta$, then for some $n$, we are in the in-between regime, and $h^*$ would be closer to $\frac{1}{2}$. For larger $n$ (while keeping $d$ fixed), the $\text{Bias}(\hat{\tau}_h)^2 \gg \text{Var}(\hat{\tau}_h)$ , and we expect to optimally compensate for this by increasing $h^*$ towards 1. 
% \end{enumerate}

For $k$th-power cycle graphs, we also consider cluster-randomized design with cluster size $2k + 1$. In \citep{ugander2013graph}, the authors show that this clustering size minimizes variance under full neighborhood exposure. We state the approximate absolute bias and variance for this setting in the following propositions. Here, define the threshold $h = \frac{l}{d}$ with $l=0,1,...,d$.
\begin{proposition}[Absolute bias in k-th power cycle graph under cluster-randomization]
\label{bias_cluster}
When $p=1/2$, and the potential outcome model is simply linear, i.e. $Y_i = \alpha + \beta z_i + \gamma e_i$, the bias of the Horvitz-Thompson estimator for a given threshold $h$ in the k-th power cycle graph under cluster randomization, with cluster-sizes $2k+1$, is approximately
$2\gamma (h-1)$ for $h \geq 1/2$.
\end{proposition}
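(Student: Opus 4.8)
The plan is to compute $\mathbb{E}[\hat{\tau}_h]$ exactly under the linear model $Y_i=\alpha+\beta z_i+\gamma e_i$ and subtract the estimand $\tau=\beta+\gamma$. Substituting into the Horvitz--Thompson estimator \eqref{HT_estimator} and using the identity $\mathbb{E}[\mathbf{1}\{A\}Y_i]/\mathbb{P}(A)=\mathbb{E}[Y_i\mid A]$, the inverse-propensity weights reproduce the intercept and the $\beta z_i$ term exactly, leaving $\mathbb{E}[\hat{\tau}_h]=\beta+\frac{\gamma}{n}\sum_i(\mu_i^+-\mu_i^-)$, where $\mu_i^+:=\mathbb{E}[e_i\mid z_i=1,e_i\ge h]$ and $\mu_i^-:=\mathbb{E}[e_i\mid z_i=0,e_i\le 1-h]$. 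Thus the absolute bias is $\big|\frac{\gamma}{n}\sum_i(\mu_i^+-\mu_i^--1)\big|$, exactly the structure behind Proposition \ref{bias_unit}. Since $p=1/2$, the treatment/control symmetry of the design yields $\mu_i^-=1-\mu_i^+$ for every $i$, so the bias collapses to $\frac{2\gamma}{n}\sum_i(\mu_i^+-1)$ and the problem reduces to evaluating the conditional mean exposure among treatment-exposed units.

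The combinatorial heart of the argument is the exposure law induced by cluster randomization into blocks of size $2k+1$. I would first show that a node at within-cluster position $j\in\{0,\dots,2k\}$ has exactly $m_i:=|j-k|$ of its $2k$ neighbors outside its own block, and that these all fall in a \emph{single} adjacent block, since $m_i\le k$ while each block has $2k+1>k$ nodes, so neighbors never straddle two foreign clusters. Conditioning on $i$'s block being treated, the $2k-m_i$ in-block neighbors are deterministically treated and the $m_i$ foreign neighbors flip together on one fair coin, so $e_i=1$ with probability $\tfrac12$ and $e_i=1-\tfrac{m_i}{2k}$ with probability $\tfrac12$. Imposing $e_i\ge h$ then gives $\mu_i^+=1-\tfrac{m_i}{4k}$ when $m_i\le 2k(1-h)=d-l$ (an integer) and $\mu_i^+=1$ otherwise.

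It remains to average over positions. Because every block is identical and $m=|j-k|$ equals $0$ once and each of $1,\dots,k$ twice, $\frac{1}{n}\sum_i(\mu_i^+-1)$ reduces to a one-dimensional sum over $m$, and for $h\ge\tfrac12$ (so that $2k(1-h)\le k$) only the terms $m\le 2k(1-h)$ contribute. The stated form $2\gamma(h-1)$ is then obtained by the threshold-pinning approximation that the \emph{marginal} treatment-exposed (resp. control-exposed) units carry exposure $\approx h$ (resp. $\approx 1-h$), whence $\mu_i^+-\mu_i^--1\approx h-(1-h)-1=2(h-1)$ and the bias is $\approx 2\gamma(h-1)$.

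The step I expect to be the main obstacle is precisely this last averaging together with the sense in which $2\gamma(h-1)$ is the right closed form. Carrying out the exact sum produces a refinement that is quadratic in $1-h$ rather than linear, so the proof must isolate the regime (large degree $d=2k$, $h\ge\tfrac12$, exposures concentrating near the threshold) in which the linear approximation is the correct leading description, and ideally bound the gap. The delicate points are the boundary positions with $m$ near the cutoff $2k(1-h)$ and the rounding in $l=2kh$; by contrast, the bias identity, the $p=1/2$ symmetry reduction, and the cluster-randomization exposure distribution are all exact and routine.
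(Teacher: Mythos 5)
Your reduction steps are exact and correct --- indeed more careful than the paper's own proof. The bias identity, the $p=1/2$ flip symmetry giving $\mu_i^-=1-\mu_i^+$, and the two-point exposure law under cluster randomization (node at within-block position $j$ has $m_i=|j-k|$ out-of-block neighbors, all in a single adjacent block, so conditional on its own block being treated $e_i=1$ or $e_i=1-\tfrac{m_i}{2k}$ with probability $\tfrac12$ each) are all right. But the obstacle you flag at the end is real and is not resolved by the paper, nor can it be resolved in the direction you hope. Carrying your steps to completion gives the exact absolute bias $\gamma\,\frac{(1-h)\left(2k(1-h)+1\right)}{2k+1}$ for $h=l/(2k)\ge 1/2$, which is $\Theta\!\left(\gamma(1-h)^2\right)$ once $2k(1-h)\gg 1$; there is no regime (large $k$, $h\ge 1/2$) in which $2\gamma(h-1)$ becomes the correct leading term. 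The linear form is only an upper bound on the magnitude --- precisely your threshold-pinning bound --- and it coincides with the exact bias only at $h=1$. (Your formula is also consistent with the paper's closing remark that the bias is flat for $l<k$: at $h\le 1/2$ it freezes at $\approx\gamma/4$.)

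What the paper's proof actually does differs from your plan in two substantive ways, and this is what manufactures the linear answer. First, it computes a ratio of position-averaged quantities (pooled numerator over pooled denominator) rather than, as you correctly do, the average over positions of per-unit conditional means; under cluster randomization these are not the same thing, because the inclusion probability $\mathbb{P}(z_i=1,e_i\ge h)$ is position-dependent ($p$ at the block center, $p^2+p(1-p)\mathbf{1}\{m_i\le d-l\}$ elsewhere), unlike the exchangeable unit-randomization setting of Proposition \ref{bias_unit} where pooling is harmless. Second, it assigns to every qualifying boundary position the pinned outcome contribution $\gamma(2l-d)/d=\gamma(2h-1)$ in place of the position-dependent exposure $1-m_i/d$ --- exactly your ``threshold-pinning approximation'' --- which factors out $2\gamma(h-1)$ from the numerator, after which the result is stated only as an order bound (written ``$\mathcal{O}(2\gamma(h-2))$'', apparently a typo for $h-1$, and with an indicator $\mathbf{1}\{r\ge d-l\}$ whose direction is inconsistent with the event $e_i\ge h$). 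So you should not search for a regime that makes the linear form exact: your quadratic answer is the true bias, the proposition's ``approximately $2\gamma(h-1)$'' is defensible only as a pinned worst-case linearization (an upper bound), and the gap you identified lies in the statement and the paper's derivation, not in your argument.
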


\begin{proposition}[Variance in the k-th power cycle graph under cluster-randomization]
When $p = 1/2$ and the potential outcome model is simply linear, i.e. $Y_i = \alpha + \beta z_i + \gamma e_i$, the variance of the Horvitz-Thompson estimator for a given threshold $h$, in the $k$-th power cycle graph under cluster-level randomization, with cluster-sizes $2k + 1$, is proportional to 
\label{var_cluster}
\begin{align*}
    \frac{1}{np^2}  (3d + 1 - 2dh) [(\beta + \gamma h)^2 + (\gamma(1-h))^2] = \Theta(\frac{\beta^2h^2}{np^{2h}}).
\end{align*}
\end{proposition}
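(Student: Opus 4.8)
I would follow the template of Proposition \ref{var_unit}, but exploit the structural simplification that cluster randomization provides: an entire cluster of $2k+1$ consecutive nodes shares a single Bernoulli($p$) coin, so the treatment vector is really an i.i.d.\ sequence of $N=n/(2k+1)$ cluster coins. Writing the estimator \ref{HT_estimator} as $\hat\tau_h=\frac1n\sum_{i=1}^n W_i$ with $W_i=\frac{\mathbf 1\{z_i=1,e_i\ge h\}}{\pi_i^1}Y_i-\frac{\mathbf 1\{z_i=0,e_i\le 1-h\}}{\pi_i^0}Y_i$, each summand $W_i$ is a deterministic function of the coins of node $i$'s own cluster and of the (at most one) adjacent cluster that contains $i$'s out-of-cluster neighbors. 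Hence $\mathrm{Cov}(W_i,W_j)=0$ whenever $i,j$ lie in clusters more than one apart, and the circulant symmetry makes every cluster interchangeable. I would therefore reduce $\mathrm{Var}(\hat\tau_h)=\frac1{n^2}\sum_{i,j}\mathrm{Cov}(W_i,W_j)$ to $N$ identical copies of the self- and nearest-neighbor covariance block of a single representative cluster, i.e.\ to a finite, $O(d)$-term computation.

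\textbf{Exposures and inclusion probabilities.} The key bookkeeping step is a position-by-position table. A node at relative position $m\in\{0,\dots,2k\}$ in its cluster has exactly $|k-m|$ of its $2k$ neighbors lying in a single adjacent cluster and the remaining $2k-|k-m|$ inside its own cluster. Consequently, when its own cluster is treated, its exposure equals $1$ if that adjacent cluster is also treated and equals $1-|k-m|/(2k)$ otherwise. Comparing these two values to the threshold $h=l/(2k)$ shows that a node enters the treated-exposed set with probability $p$ when $|k-m|\le 2k(1-h)$ (it clears the threshold on the strength of its within-cluster neighbors alone) and with probability $p^2$ when $|k-m|>2k(1-h)$ (it also needs the adjacent cluster treated). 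A symmetric computation with $e_i\le 1-h$ gives the control-exposed probabilities $\pi_i^0$, which by the $p=1/2$ symmetry match the treated side. Substituting $z_i\in\{0,1\}$ and the corresponding exposure into $Y_i=\alpha+\beta z_i+\gamma e_i$, and evaluating the marginally-included nodes at exposure $\approx h$ (resp.\ $1-h$), produces the two representative squared outcomes that generate the bracketed factor $(\beta+\gamma h)^2+(\gamma(1-h))^2$.

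\textbf{Assembling the variance.} I would then substitute these quantities into the population variance $v^*(\hat\tau_h)$ from Appendix \ref{appendix:bias_var_error}, summing the diagonal $\frac{1-\pi}{\pi}Y^2$ terms and the within-window off-diagonal covariances over the representative cluster and multiplying by $N$. Two features need care. First, same-cluster pairs always share one treatment value, so they can never be simultaneously treated-exposed and control-exposed: this forces $\pi_{ij}^{10}=0$ and activates the final ``$\pi_{ij}^{10}=0$'' term of the variance formula. Second, adjacent-cluster pairs contribute genuine cross-covariances through their shared boundary coin. Counting how many of the $2k+1$ positions clear the threshold (this count grows linearly as $h$ decreases) and weighting each by the $p^{-2}$ scaling of the smallest inclusion probability, after collecting the diagonal and nearest-neighbor contributions, yields the closed form $\tfrac{1}{np^2}(3d+1-2dh)\big[(\beta+\gamma h)^2+(\gamma(1-h))^2\big]$. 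For the asymptotic claim I would isolate the dominant dependence on $n,h,p$: as $h$ ranges over $[1/2,1]$ the effective smallest inclusion probability interpolates from $p$ (at $h=1/2$, where no exposed node requires a treated neighbor) to $p^2$ (at $h=1$, where every exposed node does), which is exactly the $p^{-2h}$ scaling, and collecting the squared-outcome factor gives $\Theta\!\big(\beta^2 h^2/(np^{2h})\big)$.

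\textbf{Main obstacle.} The hard part is the finite but intricate case analysis of exposures and of the marginal and joint inclusion probabilities $\pi_i^1,\pi_i^0,\pi_{ij}^{11},\pi_{ij}^{00},\pi_{ij}^{10}$, together with correctly identifying the degenerate $\pi_{ij}^{10}=0$ regime for same-cluster pairs; once this combinatorial table is in hand, the summation over the representative window and the passage to the asymptotic rate are routine. Because the statement asserts only proportionality and a $\Theta$-rate, I would track only leading-order contributions and discard lower-order boundary corrections, using the \citep{ugander2013graph} choice of cluster size $2k+1$ to keep the within-cluster exposures as clean as possible; this latitude is what lets the exact count $3d+1-2dh$ and the $\gamma h$-type outcomes be read off without matching every constant.
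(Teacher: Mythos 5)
Your proposal takes essentially the same route as the paper's own (very terse) proof: exploit the cluster structure of the $k$-th power cycle to tabulate position-dependent exposures and inclusion probabilities ($p$ when the within-cluster neighbors alone clear the threshold, $p^2$ when the adjacent cluster's coin is also needed), assemble the diagonal and short-range covariance contributions to the Horvitz--Thompson variance, and keep only the dominating terms for the proportionality and the $\Theta$-rate; in fact you supply more of the bookkeeping than the paper, which simply asserts the intermediate formula and concludes. The one slip is your claim that $\mathrm{Cov}(W_i,W_j)=0$ whenever the clusters of $i$ and $j$ are more than one apart---a node near the right edge of cluster $c$ and a node near the left edge of cluster $c+2$ both depend on the coin of cluster $c+1$, so the dependence actually has range two clusters---but these extra covariance blocks are of the same order as the nearest-neighbor ones and are absorbed into the proportionality constant, so the stated conclusion is unaffected.
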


Under cluster randomization with cluster sizes $2k + 1$ for the k-th power cycle graphs, the variance grows linearly in the degrees of the graph. Therefore, informally, compared to the unit-randomized design setting, higher node degrees lead to stronger bias than variance for a fixed exposure function $f(e_i).$

\subsection{Proofs to Propositions \ref{bias_unit}, \ref{var_unit}, \ref{bias_cluster}, \ref{var_cluster}}

\begin{proof}[Proof to Proposition \ref{bias_unit}]
When $p = 1/2$ and the potential outcome model is simply linear, i.e. $Y_i = \alpha + \beta z_i + \gamma e_i$, the absolute bias of the Horvitz-Thompson estimator for a given threshold $h \equiv l/2k$ for $l=0,2,...,2k$, in the $k$th-power cycle graph under unit-randomization is equal to 
\begin{align*}
    &\frac{1}{\mathbb{P}\{z_ i =1, e_i \geq h \}} \sum_{x_i \in X} \frac{\mathbf{1}\{ x_i \geq h\}}{|x_i: x_i \in X_i \cap x_i \geq h |}\mathbb{E}[\mathbf{1}\{z_i = 1, e_i =x_i\}]y_i(x_i) \\
    &-  \frac{1}{\mathbb{P}\{z_ i =0, e_i \leq 1-h \}} \sum_{x_i \in X} \frac{\mathbf{1}\{ x_i \leq 1-h\}}{|\{x_i: x_i \in X_i \cap x_i \leq 1- h \}|}\mathbb{E}[\mathbf{1}\{z_i = 0, e_i = x_i\}]y_i(x_i) \\
    & - \left(\beta + \gamma \right) \\
    &= \frac{1}{\sum_{r=l}^{2k} \binom{2k}{r}p^{2r}} \sum_{x_i \in X} \frac{\mathbf{1}\{ x_i \geq h\}}{|\{x_i: x_i \in X_i \cap x_i \leq 1- h \}|}\mathbb{E}[\mathbf{1}\{z_i = 1, e_i =x_i\}]y_i(x_i) \\
    &-  \frac{1}{\sum_{r=l}^{2k} \binom{2k}{r}p^{2r}} \sum_{x_i \in X} \frac{\mathbf{1}\{ x_i \leq 1-h\}}{|\{x_i: x_i \in X_i \cap x_i \leq 1- h \}|}\mathbb{E}[\mathbf{1}\{z_i = 0, e_i = x_i\}]y_i(x_i) \\
    &- (\beta + \gamma ) \\
    &= \frac{1}{\sum_{r=l}^{2k} \binom{2k}{r}} \left(\sum_{r=l}^{2k} \binom{2k}{r} \beta  +  \binom{2k}{r} (r/k -1 ) \gamma\right) - (\beta + \gamma ) \\
    &= \frac{1}{\sum_{r=l}^{2k} \binom{2k}{r}} \left(\sum_{r=l}^{2k} \binom{2k}{r} (r/k -1 ) \gamma  \right) -  \gamma \\
\end{align*}
where $X = \{0,1/2k, 2/2k, ..., 1\}.$
\end{proof}

\begin{proof}[Proof to Proposition \ref{var_unit}]
    When $p = 1/2$ and the potential outcome model is simply linear, i.e. $Y_i = \alpha + \beta z_i +\gamma e_i$, it is not difficult to see that the variance of the Horvitz-Thompson estimator for a given threshold $h \equiv l/2k$, in the $k$-th power cycle graph under unit-level randomization is proportional to 
    \begin{align*}
    &\frac{1}{n}  \sum_{l=0}^{2k}  \binom{2k}{l}\mathbf{1}\{ l/2k \geq h \} \left[(\alpha + \beta + \gamma l/2k)^2 (\frac{1}{ \sum_{l=0}^{2k}  \binom{2k}{l}\mathbf{1}\{ l/2k \geq h \}p^{2l}} - 1) \right. \\
    &+ \left. 2 (\beta + \gamma l/(2k))(\gamma (1-l/2k)) (2k+1- \frac{2k}{ \sum_{l=0}^{2k}  \binom{2k}{l}\mathbf{1}\{ l/2k \geq h \}p^{2l}}) \right.\\
    &+ \left. (\alpha + \gamma (1-l/2k))^2(\frac{1}{ \sum_{l=0}^{2k} \binom{2k}{l}\mathbf{1}\{ l/2k \geq h \}p^{2l}} - 1) \right].
    \end{align*}
    Considering the dominating terms gives us the result.
\end{proof}

\begin{proof}[Proof to Proposition \ref{bias_cluster}]
When $p = 1/2$ and the potential outcome model is simply linear, i.e. $Y_i = \alpha + \beta z_i + \gamma e_i$, the absolute bias of the Horvitz-Thompson estimator for a given threshold $h \equiv l/2k$ for $l=k,k+1, ..., 2k$, in the $k$th-power cycle graph under cluster-randomization is equal to is equal to 
\begin{align*}
    &\frac{1}{\mathbb{P}\{z_ i =1, e_i \geq h \}} \sum_{x_i \in X} \frac{\mathbf{1}\{ x_i \geq h\}}{|x_i: x_i \in X_i \cap x_i \geq h |}\mathbb{E}[\mathbf{1}\{z_i = 1, e_i =x_i\}]y_i(x_i) \\
    &-  \frac{1}{\mathbb{P}\{z_ i =0, e_i \leq 1-h \}} \sum_{x_i \in X} \frac{\mathbf{1}\{ x_i \leq 1-h\}}{|x_i: x_i \in X_i \cap x_i \leq 1- h |}\mathbb{E}[\mathbf{1}\{z_i = 0, e_i = x_i\}]y_i(x_i) \\
    & - \left(\beta + \gamma  \right) \\
    &= \frac{(p/(2k+1)+ 2k/(2k+1) \cdot p^2) (\beta + \gamma ) + 2p/(2k + 1)\sum_{r=1}^{k} \mathbf{1}\{ r \geq d-l\}(\beta + \gamma \cdot (2l -d )/d)}{p/(2k+1) + p^2 \cdot 2k/(2k + 1) + 2p/(2k + 1)\sum_{r=1}^{k} \mathbf{1}\{ r \geq d-l\}}\\
    &- (\beta + \gamma ) \\
    &= \mathcal{O}(2\gamma (h-2))
\end{align*}
where $X = \{0,1/2k, 2/2k, ..., 1\}.$ When $l < k$, the bias scales is the same as when $l = k.$
\end{proof}

\begin{proof}[Proof to Proposition \ref{var_cluster}]
    When $p = 1/2$ and the potential outcome model is simply linear, i.e. $Y_i = \alpha + \beta z_i + \gamma e_i$, it is not difficult to see that the variance of the Horvitz-Thompson estimator for a given threshold $h \equiv l/2k$, in the $k$-th power cycle graph under cluster-randomization, with cluster-size $2k +1$, is proportional to 
     \begin{align*}
        &\sum_{u=0}^d \mathbf{1}\{ u/d \geq h\} \binom{d}{u}  \sum_{i=1}^n (\frac{(\beta +\gamma u /d)^2}{n^2}) [(3d + 1 - 2l)(\frac{1}{p^2}-1) + (2l+1)(\frac{1}{p} - 1)] \\
        &+ \sum_{u=0}^d \mathbf{1}\{ u/d \geq h\} \binom{d}{u} \sum_{i=1}^n (\frac{(\gamma (\frac{d-u}{d}))^2}{n^2}) [(3d + 1 - 2l)(\frac{1}{p^2}-1) +  (2l+1)(\frac{1}{p} - 1)] \\
        &- \sum_{u=0}^d \mathbf{1}\{ u/d \geq h\} \binom{d}{u} \frac{2}{n}(\beta + \gamma\cdot u/d)(\gamma\cdot \frac{d-u}{d}). 
    \end{align*}
    Considering the dominating terms gives us the result.
\end{proof}

\subsection{Discussion on ``double-dipping"} \label{double_dipping}

Since our approach involves "double-dipping" into the data, we need to make sure that there is no overfitting that occurs. The authors of \citep{chernozhukov2018double}, \citep{belloni2015program}, and \citep{kennedy2022semiparametric} describe this in more detail. While sample-splitting would simply take care of this in the case of independent data, it does not apply to our setting where the data are dependent, as modeled by the network. One could potentially leverage results under the dependent-data setting, such as \cite{hart1990data}, but this is outside the scope of our paper. We propose to use our approach on the whole sample data and argue this via empirical process theory arguments. In particular, we can think of our threshold $h$ as a nuisance parameter, and we show that our estimator for $h$ has a simple enough associated MSE function class. We note that it is sufficient to show that we are not overfitting by showing that the rate of convergence of the estimated MSE under $\hat{h}$ converges to the true MSE under the optimal threshold, under the best possible linear fit, $h^*$ at least at a $\mathcal{O}(n^{-1/2})$-rate.

\begin{proposition}\label{prop:stochastic_equi}
Suppose that Conditions \ref{cond:positivity}, \ref{cond:bounded_var}, and \ref{cond:bounded_first} are satisfied. The corresponding bias terms in the estimated and true MSE, $\hat{M}_n^b$, and $M_n^b$, under ``double prediction", satisfy
\begin{align*}
     &\mathbb{E}\left[\sup_{\delta/2 < |h_n^*-\hat{h}_n| < \delta} \sqrt{n} \left(\hat{M}_n^b(\hat{h}_n) - M_n^b(\hat{h}_n) - \hat{M}_n^b(h^*) - M_n^b(h^*)\right)\right] \to 0,
\end{align*} as $\delta \to 0$.
\end{proposition}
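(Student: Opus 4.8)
The plan is to recognize the displayed quantity as the increment of a centered empirical process and to prove stochastic equicontinuity (a Donsker-type property) of that process in the threshold index $h$. Write $\mathbb{G}_n(h) := \sqrt{n}\,(\hat{M}_n^b(h) - M_n^b(h))$, so that (reading the last two summands with their intended signs) the bracketed term is $\mathbb{G}_n(\hat{h}_n) - \mathbb{G}_n(h^*)$, and the goal becomes showing that the modulus of continuity $\mathbb{E}[\sup_{|h-h'|<\delta}|\mathbb{G}_n(h) - \mathbb{G}_n(h')|] \to 0$ as $\delta \to 0$, uniformly in $n$. Since Theorem \ref{main_thm} forces $\hat{h}_n$ into the shrinking neighborhood $\{|\hat{h}_n - h^*| < \delta\}$ with high probability, evaluating the modulus at the random $\hat{h}_n$ then yields the claim. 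First I would write $\hat{M}_n^b(h)$ as an empirical average $n^{-1}\sum_i g_h(i)$ of per-unit summands read off from the bias estimator $\hat{b}(\hat{\tau}_h)$, whose only dependence on $h$ enters through the monotone indicators $\mathbf{1}\{e_i \geq h\}$ and $\mathbf{1}\{e_i \leq 1-h\}$ together with the normalizing exposure probabilities.

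Next I would bound the complexity of the index class $\mathcal{F} = \{g_h : h \in H\}$. The indicators of the half-lines $\{e \geq h\}$ form a VC-subgraph class of VC dimension one; the multipliers $(1-e_i)$ and $e_i$ lie in $[0,1]$, the inverse exposure probabilities are uniformly bounded away from zero by the positivity Assumption \ref{cond:positivity}, and the slope $\hat{\gamma}_n$ is bounded by the bounded-variation Assumption \ref{cond:bounded_var}. Hence $\mathcal{F}$ is a uniformly bounded VC class with polynomial covering numbers $N(\epsilon, \mathcal{F}, L_2) \lesssim \epsilon^{-c}$. The plug-in slope is handled by replacing $\hat{\gamma}_n$ with the population best fit $\gamma^*$: since the multiplier enters linearly and everything else is bounded, the induced error factorizes as $O_P(|\hat{\gamma}_n - \gamma^*|)$, which is $o_P(1)$ (indeed $O_P(n^{-1/2})$ by the concentration bound for $\hat{\gamma}_n$ from \citep{ziemann2024noise} already used in the proof of Theorem \ref{main_thm}), and therefore does not affect equicontinuity.

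The step requiring the most care is the dependence structure, since the $e_i$ are not independent and so off-the-shelf i.i.d. Donsker theorems do not apply. Here I would invoke Assumption \ref{cond:bounded_first}: bounded degree means $e_i$ and $e_j$ are dependent only when units $i$ and $j$ share a neighbor, so the dependency graph has bounded degree and the summands exhibit finite-range (hence $\alpha$-mixing) dependence. Using the blocking construction of \citep{yu1994rates} already employed in the proof of Theorem \ref{main_thm}, the $n$ terms decompose into $\Theta(n)$ effectively independent blocks of bounded size, so a dependent-data maximal inequality yields a chaining bound $\mathbb{E}[\sup_{|h-h'|<\delta}|\mathbb{G}_n(h)-\mathbb{G}_n(h')|] \lesssim \int_0^{\rho(\delta)} \sqrt{\log N(\epsilon,\mathcal{F},L_2)}\,d\epsilon$, where $\rho(\delta)$ is the $L_2$-diameter of the $\delta$-ball of thresholds. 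The remaining subtlety is $L_2$-continuity in $h$: the indicators at $h$ and $h'$ differ only on $\{h \leq e_i < h'\}$, whose probability vanishes as $\delta \to 0$ provided the exposure distribution has no atoms; in the genuinely discrete regime (e.g. the cycle graphs with $e_i \in \{u/d\}$) one appeals instead to the weighted equivalence-class smoothing of the Remark with bandwidth $\varepsilon_n$, which makes $\rho(\delta) \to 0$ continuously. Combining the VC entropy bound with $\rho(\delta) \to 0$ drives the integral to zero. I expect the transfer from the bounded-degree dependence to an equicontinuity modulus, together with establishing continuity-in-$h$ of the process in the discrete exposure case, to be the main obstacle.
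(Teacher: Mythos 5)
Your overall strategy --- recast the display as the increment of the centered process $\sqrt{n}\,(\hat{M}_n^b(h)-M_n^b(h))$ indexed by $h$ and prove stochastic equicontinuity via an entropy-based maximal inequality, with the plug-in slope handled by the $\mathcal{O}(n^{-1/2})$ convergence of $\hat{\gamma}_n$ --- is exactly the paper's strategy. But your first concrete step contains a genuine error: you ``write $\hat{M}_n^b(h)$ as an empirical average $n^{-1}\sum_i g_h(i)$ of per-unit summands read off from the bias estimator.'' It is not one. The bias term of the MSE is the \emph{square} of the bias, $\hat{M}_n^b(h)=\hat{b}(\hat{\tau}_h)^2$, so it is the square of an empirical average, and your VC/chaining machinery, as set up, controls the increments of $\sqrt{n}\,(\hat{b}(\hat{\tau}_h)-b^*(\hat{\tau}_h))$ rather than of $\sqrt{n}\,(\hat{b}^2-{b^*}^2)$. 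The missing step is the linearization the paper performs before any empirical-process argument: using $x^2-y^2=(x-y)(x+y)$ and a uniform bound $U_n\geq \max_h|b^*(\hat{\tau}_h)|$ (which exists by Assumptions \ref{cond:positivity} and \ref{cond:bounded_var}), one gets
\begin{align*}
\hat{M}_n^b(\hat{\tau}_h) - M_n^b(\hat{\tau}_h) \leq 2U_n\left(\hat{b}(\hat{\tau}_h)-b^*(\hat{\tau}_h)\right) + \left(\hat{b}(\hat{\tau}_h)-b^*(\hat{\tau}_h)\right)^2,
\end{align*}
after which equicontinuity of the bias process transfers to the squared-bias process. This is a one-line repair, and everything downstream in your argument survives it, but as written the chaining bound does not conclude the proposition.

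Modulo that repair, your route and the paper's differ only in machinery, and the comparison is instructive. The paper bounds complexity via \emph{bracketing} entropy derived from the bounded-variation Assumption \ref{cond:bounded_var}, $\log N_{[\,]}(\epsilon,\Theta_n^b,L_2(\mathbb{P}))\leq K/\epsilon$, giving $\tilde{J}(\delta)\leq \delta^{1/2}$ and the maximal inequality $\tilde{J}(\delta)\bigl(1+\tilde{J}(\delta)/(\delta\sqrt{n})\bigr)\to 0$; you instead use uniform (VC) covering numbers for the half-line indicator class plus a chaining bound. Your treatment of dependence is actually more explicit than the paper's: the paper asserts Donsker conditions for sample-mean terms with bounded coefficients, while you spell out the bounded-degree dependency graph and the blocking construction of Yu (1994), which is the honest justification for why i.i.d.-style maximal inequalities apply here; this is a point in your favor. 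Conversely, your worry about $L_2$-continuity in $h$ under atomic exposure distributions is a complication your uniform-entropy route creates and the paper's bracketing route avoids; note also that since $H$ is finite in the discrete settings, the supremum over $\{\delta/2<|h_n^*-\hat{h}_n|<\delta\}$ is eventually over an empty set as $\delta\to 0$, so the appeal to the equivalence-class smoothing remark is not needed there.
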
 

% Indeed, the terms with differences of the linear regression coefficients from the best linear coefficient under a random design on $[0,1]$ composed with uniformly bounded inverse propensity weights have bounded entropy integral. That is, Donsker conditions are satisfied using \citep[Lemma 3.4.3]{van1996weak} under Assumptions \ref{cond:positivity}, \ref{cond:bounded_var}, and \ref{cond:subgaussian}, \ref{cond:bounded_first}, and stochastic equicontinuity above holds. We also have marginal Central-Limit theorem with respect to the best linear fit holding for each of the sample-mean-like and $U$-statistics-like terms in the MSE. Additionally, the latter in the variance parts (see also \citep{chen2010normal}) of uniformly bounded inverse propensity weights composed with terms of bounded variation paired with sub-exponential noise (since we have products of sub-gaussians), have bounded entropy integral \citep{van1996weak}, which then implies a uniform Central Limit Theorem for all of the terms. 

%See Appendix \ref{appendix:uniform_clt} for a more complete proof.

Suppose that Conditions \ref{cond:positivity}, \ref{cond:bounded_var}, and \ref{cond:bounded_first} are satisfied. We aim to show that the corresponding biases in the MSE terms under `` double prediction" satisfy
\begin{align*}
     &\mathbb{E}\left[\sup_{\delta/2 < |h^*-h_n| < \delta} \sqrt{n} \left(\hat{M}_n^b(h_n) - M_n^b(h_n) - \hat{M}_n^b(h^*) - M_n^b(h^*)\right)\right] \to 0,
\end{align*} as $\delta \to 0$, where $x_i$ ranges over the set $X_i$ of possible fractions of degree $i$.

\begin{proof}[Proof to Proposition \ref{prop:stochastic_equi}]
\label{appendix:uniform_clt}
We begin by considering the empirical process in question. We write out
\begin{align*}
    \hat{M}_n(h) &= \left(\frac{1}{n} \sum_{i=1}^n \frac{\hat{\gamma}_h(1-e_i)\mathbf{1}\{ Z_i = 1, e_i \geq h \}}{\mathbb{P}\{ Z_i =1, e_i \geq h \}} +  \frac{1}{n} \sum_{i=1}^n \frac{\hat{\gamma}_h(1-e_i)\mathbf{1}\{ Z_i = 1, e_i \geq h \}}{\mathbb{P}\{ Z_i =1, e_i \geq h \}}\right)^2 \\
    &+ \sum_{i=1}^n \frac{\mathbf{1}\{z_i = 1, e_i \geq h\}Y_i^2}{n^2\pi_i^1} (\frac{1}{\pi_i^1} - 1) \\
    &+ \left. \sum_{i=1}^n \sum_{\mathclap{\substack{j=1\\j\neq i}}}^n \frac{\mathbf{1}\{z_i = 1, e_i \geq h\}\mathbf{1}\{z_j = 1, e_j \geq h\}Y_iY_j}{n^2\pi_{ij}^{11}} (\frac{\pi_{ij}^{11}}{\pi_i^1 \pi_j^1} - 1) \right.\\
    &+ \left. \sum_{i=1}^n \frac{\mathbf{1}\{z_i = 0, e_i \leq 1-h\}Y_i^2}{n^2\pi_i^0} (\frac{1}{\pi_i^0} - 1) \right.\\
    &+ \left. \sum_{i=1}^n \sum_{\mathclap{\substack{j=1\\j\neq i}}}^n \frac{\mathbf{1}\{z_i = 0, e_i \leq 1-h\}\mathbf{1}\{z_j = 0, e_j \leq 1-h\}Y_iY_j}{n^2\pi_{ij}^{00}} (\frac{\pi_{ij}^{00}}{\pi_i^0 \pi_j^0} - 1) \right.\\
    &-  \frac{2}{n^2} \sum_{i=1}^n \sum_{j\in [n]; \pi_{ij}^{10} > 0} ( \mathbf{1}\{z_i = 1, e_i \geq h\}\mathbf{1}\{z_j = 0, e_j \leq 1-h\}Y_i Y_j)(\frac{1}{\pi_i^1 \pi_j^0} - \frac{1}{\pi_{ij}^{10}} )  \\
    &+ \frac{2}{n^2} \sum_{i=1}^n \sum_{j\in [n]; \pi_{ij}^{10} = 0} ( \frac{\mathbf{1}\{z_i = 1, e_i \geq h\}Y_i^2}{2 \pi_i^1} + \frac{\mathbf{1}\{ z_j = 0, e_j \leq 1-h \}Y_j^2}{2 \pi_j^0}) .
\end{align*}

The absolute Horvitz-Thompson bias estimate at threshold $h$ (and at $1-h$) is 
\begin{align}
   \hat{b}(\hat{\tau}_h) = \frac{1}{n} \sum_{i=1}^n \frac{\hat{\gamma}_n(1-e_i)\mathbf{1}\{Z_i = , e_i \geq h \}}{\mathbb{P}(Z_i = 1, e_i \geq h)} + \frac{1}{n} \sum_{i=1}^n \frac{\hat{\gamma}_n(e_i)\mathbf{1}\{Z_i = 0, e_i \leq 1 -h \}}{\mathbb{P}(Z_i = 0, e_i \leq 1- h)} 
\end{align}

% The true-bias is 
% \begin{align*}
% b^{**}(\hat{\tau}_h) &= \frac{1}{n} \sum_{i=1}^n \sum_{x_i \in X_i} \frac{\mathbf{1}\{x_i \geq h \}}{|x_i: x_i\in X_i \cap x_i \geq h |} y_i(1, x_i) - \frac{1}{n} \sum_{i=1}^n \sum_{x_i \in X_i} \frac{\mathbf{1}\{x_i \leq 1-h \}}{{|x_i: x_i\in X_i \cap x_i \leq 1- h |}} y_i(0,x_i) \\
% &-\frac{1}{n} \sum_{i=1}^n y_i(1,1) + \frac{1}{n} \sum_{i=1}^n y_i(0,0),
% \end{align*} where $x_i$ ranges over the set $X_i$ of possible fractions of degree $i$,

Then, for every $h$, using the identity $x^2 - y^2 = (x +y)(x-y)$, and defining $U_n \in \mathbb{R}$ such that $U_n \geq M_n^b(h)$ for all $h$ (note that there exists such a $U_n < \infty$ by Conditions \ref{cond:positivity}, \ref{cond:bounded_var}), we have
% \begin{align*}
%     \hat{M}_n^b(\hat{\tau}_h) - M_n^b(\hat{\tau}_h) \leq 3b^*(\hat{\tau}_h) \left( \hat{b}(\hat{\tau}_h) - b^*(\hat{\tau}_h)\right),
% \end{align*}
\begin{align*}
    \hat{M}_n^b(\hat{\tau}_h) - M_n^b(\hat{\tau}_h) \leq 2U_n\left( \hat{b}(\hat{\tau}_h) - b^*(\hat{\tau}_h)\right) + \left( \hat{b}(\hat{\tau}_h) - b^*(\hat{\tau}_h)\right)^2,
\end{align*}

% We can decompose the second term on the right-hand-side above into the  and the residual true bias:
% \begin{align*}
%     \hat{b}(\hat{\tau}_h) - b^*(\hat{\tau}_h) &= (\hat{b}(\hat{\tau}_h) - b^*(\hat{\tau}_h)) + (b^*(\hat{\tau}_h) - b^{**}(\hat{\tau}_h)) 
% \end{align*} 
where, from Section \ref{appendix:bias_var_error}, the difference between the bias estimate and the true bias induced by the best average linear fit is
% \begin{align*}
%     \hat{b}(\hat{\tau}_h) - b^*(\hat{\tau}_h) &= \left(\frac{1}{n} \sum_{i=1}^n \frac{\hat{\gamma}\mathbf{1}\{Z_i = 1, e_i \geq h \}}{\mathbf{P}(Z_i = 1, e_i \geq h)} - \left(\frac{1}{n} \sum_{i=1}^n y_i^{BL}(1,1) - \frac{1}{n} \sum_{i=1}^n y_i^{BL}(0,0)\right)\right) \\
%     &- \left( \frac{1}{n} \sum_{i=1}^n \frac{\hat{\gamma}e_i\mathbf{1}\{Z_i =1 , e_i \geq h \}}{\mathbf{P}(Z_i = 1, e_i \geq h)} - \frac{1}{n} \sum_{i=1}^n \sum_{x_i \in X_i} \mathbf{1}\{x_i \geq h\}y_i^{BL}(x_i) \right) \\
%     &+ \left( \frac{1}{n} \sum_{i=1}^n \frac{\hat{\gamma}e_i\mathbf{1}\{Z_i = 0, e_i \leq 1- h \}}{\mathbf{P}(Z_i = 0, e_i \leq 1-h)} - \frac{1}{n} \sum_{i=1}^n \sum_{x_i \in X_i}\mathbf{1}\{x_i \leq 1-h\}y_i^{BL}(x_i) \right),
% \end{align*} where $y_i^{BL}$ is the projection of $y_i$ onto the best possible linear fit. 
\begin{align*}
    \hat{b}(\hat{\tau}_h) - b^*(\hat{\tau}_h) &= \left(\frac{1}{n} \sum_{i=1}^n \frac{\hat{\gamma} _n(1-e_i)\mathbf{1}\{Z_i = 1, e_i \geq h \}}{\mathbb{P}(Z_i = 1, e_i \geq h)} - \frac{1}{n} \sum_{i=1}^n \sum_{x_i \in X_i} \frac{\mathbf{1}\{x_i \geq h\}}{|x_i: x_i \in X_i \cap x_i \geq h |}\gamma_n (1-x_i) \right)\\
    &+ \left( \frac{1}{n} \sum_{i=1}^n \frac{\hat{\gamma}_n e_i\mathbf{1}\{Z_i = 0, e_i \leq 1- h \}}{\mathbb{P}(Z_i = 0, e_i \leq 1-h)} - \frac{1}{n} \sum_{i=1}^n \sum_{x_i \in X_i} \frac{\mathbf{1}\{x_i \leq 1-h\}}{|x_i: x_i \in X_i \cap x_i \leq 1-h |}\gamma_n(x_i) \right),
\end{align*} where $\gamma_n$ is the slope of the best average linear fit, and where $x_i$ ranges over the set $X_i$ of possible fractions of degree $i$.

%Next,
% \begin{align*}
%    b^*(\hat{\tau}_h) - b^{**}(\hat{\tau}_h) &= \left(\frac{1}{n} \sum_{i=1}^n y_i^{BL}(1,1) -\frac{1}{n} \sum_{i=1}^n y_i(1,1) \right) \\
%     &- \left( \frac{1}{n} \sum_{i=1}^n \sum_{x_i \in X_i} \mathbf{1}\{x_i \geq h\}y_i^{BL}(x_i) - \frac{1}{n} \sum_{i=1}^n \frac{\mathbf{1}\{Z_i = 1, e_i \geq h \}}{\mathbf{P}\{Z_i = 1, e_i \geq h \}} Y_i \right) \\
%     &+ \left(\frac{1}{n} \sum_{i=1}^n \sum_{x_i \in X_i}\mathbf{1}\{x_i \leq 1-h\}y_i^{BL}(x_i) - \frac{1}{n} \sum_{i=1}^n \frac{\mathbf{1}\{Z_i = 0, e_i \leq 1- h \}}{\mathbf{P}\{Z_i = 0, e_i \leq 1-h \}} Y_i \right)\\
%     &-\left( \frac{1}{n} \sum_{i=1}^n y_i^{BL}(0,0) - \frac{1}{n} \sum_{i=1}^n y_i(0,0) \right).
% \end{align*}
% We first consider the terms in $b^*(\hat{\tau}_h) - b^{**}(\hat{\tau}_h)$. By Conditions \ref{cond:bounded_first}, \ref{cond:bounded_var}, \ref{cond:subgaussian}, each of these terms are sample-means of sub-gaussians corresponding to the approximation error induced by a linear fit, with bounded dependency.

We now proceed to consider each of the terms in $\hat{b}(\hat{\tau}_h) - b^*(\hat{\tau}_h)$ above. Each of the parentheses satisfies Donsker conditions. Indeed, under Conditions \ref{cond:positivity}, \ref{cond:bounded_var}, and \ref{cond:bounded_first}, they are sample-mean terms with uniformly bounded coefficients, and $\hat{\gamma}_n$ converges to $\gamma_n^*$ at a rate of $\mathcal{O}(n^{-1/2})$ (\citep{andrews1994empirical},see also Section 3.4.3.2. in \citep{van1996weak}). Thus, the terms in $\hat{M}_n^b(h) - M_n^b(h)$ have bounded entropy integral. Under our bounded-variation potential-outcome model (Condition \ref{cond:bounded_var}), we have that $\log N_{[]}(\epsilon, \Theta_n^b, L_2(\mathbb{P}) \leq K \left( \frac{1}{\epsilon} \right)$, which gives us $\Tilde{J}(\delta, \Theta_n^b, L_2(\mathbb{P}))  \leq \delta^{1/2}$ \citep{andrews1994empirical,van1996weak} (see also Example 19.11 in \citep{van2000asymptotic}).

Therefore, putting these together, we have the following maximal inequality for the bias terms
    \begin{align}  
    \label{maximal_ineq_proofb}
    &\mathbb{E}\left[\sup_{\delta/2 < |h^* - \hat{h}_n|<\delta} \sqrt{n} \left(\hat{M}_n^b(h_n) - M_n^b(h_n) - (\hat{M}_n^b(h^*) - M_n^b(h^*))\right)\right] \\
    &\leq  \Tilde{J}(\delta, \Theta_n^b, L_2(\mathbb{P})) \left(1+\frac{\Tilde{J}(\delta, \Theta_n^b, L_2(\mathbb{P})) }{ \delta \sqrt{n}}\right),
    \end{align} with $\Tilde{J}(\delta, \Theta_n^b, L_2(\mathbb{P}))  \leq \delta^{1/2}$.

\end{proof}

\end{document}